\newcommand{\lag}{\Delta}
\newcommand{\lagc}{l}
\newcommand{\lSC}{{\text{SC}}}
\newcommand{\mse}{\operatorname{MSE}}
\def\nl#1#2{\begingroup
    #2%
    \def\@currentlabel{#2}%
    \phantomsection\label{#1}\endgroup
}
\newtheorem{theoremi}{Theorem}
\newtheorem{theorem}            {Theorem}
\newtheorem{corollary}          [theorem]{Corollary}
\newtheorem{proposition}        [theorem]{Proposition}
\newtheorem{lemma}              [theorem]{Lemma}
\newcommand{\cm}{m}
\newcommand{\cmi}{\cm_1}
\newcommand{\cmii}{\cm_2}
\noindent\textbf{#1:}\begin{rmfamily}\noindent}%
\newcommand{\matd}{A}
\newcommand{\mato}{L}
\newcommand{\matii}{H}
\newcommand{\matsii}{{H}}
\newcommand{\prevstate}{\bar{\state}}
\newcommand{\by}{\bar{y}}
\newcommand{\bz}{\bar{z}}
\newcommand{\sgn}{\operatorname{sgn}}
\newcommand{\bcdf}{\bar{\cdf}}
\newcommand{\pdf}{p}
\newcommand{\cdf}{F}
\newcommand{\prob}{\mathbb{P}}
\newcommand{\E}                 {\Bbb{E}}
\newcommand{\bstate}{\bar{\state}}
\newcommand{\z}{z}
\newcommand{\borelset}{S}
\newcommand{\obs}{y}
\newcommand{\obsa}{y^{(\action)}}
\newcommand{\Obs}{Y}
\newcommand{\Obsi}{\Obs^{(1)}}
\newcommand{\Obsii}{\Obs^{(2)}}
\newcommand{\Obsa}{\Obs^{(\action)}}  
\newcommand{\Snoisea}{\Snoise^{(\action)}}
\newcommand{\snoise}{w}
\newcommand{\Snoise}{W}
\newcommand{\level}{g}
\newcommand{\state}{x}
\newcommand{\State}{X}
\newcommand{\statespace}{\Bbb{X}}
\newcommand{\obspace}{\Bbb{Y}}
\newcommand{\statedim}{\mathcal{X}}
\newcommand{\obsdim}{\mathcal{Y}}
\newcommand{\fun}{\phi}
\newcommand{\oprob}{B}
\newcommand{\tp}{P}
\newcommand{\belief}{\pi}
\newcommand{\Belief}{\Pi(\statedim)}
\newcommand{\Belieft}{\Pi(2)}
\newcommand{\ole}{\stackrel{\text{defn}}{=}}
\newcommand{\gr}{\geq_r}
\newcommand{\lr}{\leq_r}
\newcommand{\gs}{\geq_s}
\newcommand{\filterd}{\sigma}
\newcommand{\filter}{T}
\newcommand{\reals}{{\rm I\hspace{-.07cm}R}}
\newcommand{\beq}{\begin{equation}}
\newcommand{\eeq}{\end{equation}}
\newcommand{\p}{\prime}
\newcommand{\one}{\mathbf{1}}
\newcommand{\ones}{\mathbf{1}}
\newcommand{\diag}{\textnormal{diag}}
\newcommand{\reward}{r}
\newcommand{\action}{u}
\newcommand{\actionspace}{\,\mathcal{U}}
\newcommand{\actiondim}{U}
\newcommand{\discount}{\rho}
 \newcommand{\Ep}{\E_{\policy}}
\newcommand{\policy}{\mu}
\newcommand{\optpolicy}{\policy^*}
\newcommand{\valuef}{V}
\renewcommand{\time}{k}
\newcommand {\policyl} {\underline{\mu}}
\newcommand{\bj}{l}
\newcommand{\argmin}{\operatornamewithlimits{argmin}}
\newcommand{\argmax}{\operatornamewithlimits{argmax}}
\newcommand{\barray}{\begin{array}{ll}}
\newcommand{\earray}{\end{array}}
\newcommand{\lami}{\lambda_a}
\newcommand{\lamii}{\lambda_b}
\newcommand{\lamiii}{\lambda_c}
\newcommand{\lamiv}{\lambda_d}
\newcommand{\obsl}{a}
\newcommand{\obsu}{b}
\newcommand{\horizon}{N}
\newcommand{\rstop}{\reward_{s}}
\newenvironment{subdef}[1]{%
  \def\subdefcounter{#1}%
  \refstepcounter{#1}%
  \protected@edef\theparentnumber{\csname the#1\endcsname}%
  \setcounter{parentnumber}{\value{#1}}%
  \setcounter{#1}{0}%
  \expandafter\def\csname the#1\endcsname{\theparentnumber.\Alph{#1}}%
  \ignorespaces
}{%
  \setcounter{\subdefcounter}{\value{parentnumber}}%
  \ignorespacesafterend
}
\newcounter{parentnumber}
\newtheorem{defn}{Definition}
\begin{document}

\title{Convex Stochastic Dominance in Bayesian Localization, Filtering   and Controlled Sensing   POMDPs}

\author{Vikram Krishnamurthy\thanks{This research was funded in part by the U.S. Army Research Office under grant W911NF-19-1-0365, U.S. Air Force Office of Scientific Research under grant FA9550-18-1-0007 and National Science Foundation under grant 1714180.}, {\em Fellow IEEE}\\
 Electrical \& Computer
Engineering,  \\ Cornell University, USA. \\
vikramk@cornell.edu}

\maketitle

\begin{abstract}
  This paper provides conditions on the observation probability distribution in  Bayesian localization and optimal filtering   so that the conditional mean estimate satisfies convex stochastic dominance. Convex dominance  allows us to compare
 the unconditional mean square error between two optimal Bayesian state  estimators over arbitrary time horizons instead of using brute force Monte-Carlo computations. The  proof uses two key ideas from microeconomics, namely, integral precision dominance and aggregation of single crossing. 
  The convex dominance result is then used to give sufficient conditions  so that the optimal policy of a controlled sensing two-state
  partially observed Markov decision process (POMDP) is  lower bounded by a myopic policy. 
  Numerical examples  are presented where the Shannon capacity of the observation distribution using one sensor dominates that of another, and convex dominance holds but  Blackwell dominance does not hold. These illustrate the usefulness of the main result in localization, filtering and controlled sensing applications.
  \end{abstract}

  {\em Keywords}: Convex dominance, mean squared error, integral precision, aggregation of single crossing, Bayesian localization, optimal filtering, Hidden Markov Model filtering,
  POMDP, controlled sensing, Blackwell dominance

  \section{Introduction} \label{sec:intro}
  Consider the following Bayesian localization problem:
  an underlying random variable    $\State\in \reals$ with prior $\belief_0$ is observed
  via the discrete time noisy observation process $\{\Obs_k\}$ where each observation
  $\Obs_k$  has conditional cumulative distribution function (cdf)
  $ \cdf(\obs| \state)$. (We use upper case for random variables and lower case for realizations.)
  Bayesian localization  is concerned with recursively computing the posterior
  distribution  $\belief_k = \pdf(\state| \obs_{1:k})$, $k=1,2,\ldots$  of the 
  state $\state$ given observation sample path sequence $\obs_{1:k}= (y_1,\ldots,y_k)$ and prior $\belief_0$.
  The posterior distribution $\belief_k$  is then used to compute the conditional mean 
  estimate of the state $\State$ given $k$ observations as  $$\cm(\obs_{1:k},\belief_0) = \int_\reals \state \belief_k(x) dx$$
  where we have indicated the explicit dependence on the prior $\belief_0$.

Let $\Obs_{1:k}$ denote the sequence of random variables $(\Obs_1,\ldots,\Obs_k)$.
A natural question is: how accurate is the conditional mean state estimate
$ \cm(\Obs_{1:k},\belief_0)$?
Clearly  $\cm(\Obs_{1:k},\belief_0)$  is the minimum mean square error estimate (more generally it minimizes a Bregman loss), i.e.,
for all priors $\belief_0$,
$$\mse\{\cm(\Obs_{1:k},\belief_0)\}= \argmin_{g} \E\{\big(\State - g(\Obs_{1:k},\belief_0)\big)^2\} $$
over the class of all Borel  functions $g$.
But unfortunately,
apart from the well known linear Gaussian  case,\footnote{In the linear Gaussian case, the MSE is computed by  the
  Kalman filter  covariance update (Riccati equation) which is completely determined by  the model parameters.}
$\mse\{\cm(\Obs_{1:k},\belief_0)\}$ 
can only be estimated\footnote{Computing the conditional
  $\mse\{\cm(\obs_{1:k}),\belief_0 \}$ based on a specific observation sample path $\obs_{1:k}$ given the posterior $\belief_k$ is straightforward but not useful since it only holds for the specific sample path $\obs_{1:k}$. We are interested in characterizing its expectation, i.e., the unconditional MSE.}  via Monte-Carlo simulations.  Thus for general state space models, there is strong motivation to derive  analytical results that
compare $\mse\{\cm(\Obs_{1:k},\belief_0)\}$ for different observation models. Specifically consider  two sensors, 
where sensor 1 records observation random variables $\Obsi_k $ of state $\State$ with conditional distribution $ \cdf_1(\obs| \state)$, $k=1,2,\ldots$  and  sensor 2 records observation random variables $ \Obsii_k$ of $\State$ with 
conditional distribution $\cdf_2(\obs| \state)$, $k=1,2,\ldots$. Then
which sensor yields a smaller MSE  for the conditional mean estimate at time $k$?

In this paper,  we give sufficient conditions on the observation probabilities so that the conditional mean estimate $\cm_1(\Obsi_{1:k},\belief_0)$ of sensor 1 is   convex stochastic dominated by the estimate $\cm_2(\Obsii_{1:k},\belief_0)$ of sensor 2.
Informally,  our main result is:  
\begin{theoremi} (Informal)  Consider two sensor observation models with the observation
  process $\{\Obsi_k\}$ and $\{\Obsii_k\}$  generated by cdfs $ \cdf_1(\obs| \state)$ and $\cdf_2(\obs| \state)$,
  respectively.
  Suppose
$ \cdf_1(\obs| \state) ,  \cdf_2(\obs| \state)$ satisfy a single crossing and 
  signed-ratio monotonicity condition (defined  in Sec.\ref{sec:assumptions}).
  Then convex stochastic dominance holds for the conditional mean: $\cmi (\Obsi_{1:k},\belief_0)   <_{cx}\cmii(\Obsii_{1:k},\belief_0)$, i.e., for any convex function $\fun: \reals \rightarrow \reals$ and prior $\belief_0$,
\begin{equation} \begin{split}
  \E_1\{ \fun\big( \cmi
  (\Obsi_{1:k},\belief_0) \big) \} \leq \E_2\{ \fun\big( \cmii
  (\Obsii_{1:k},\belief_0) \big) \}, \\  \text{ for all time $k$ }\label{eq:informal}
\end{split}
\end{equation}
Here $\E_\action$ denotes expectation wrt  the joint distribution of $\Obsa_{1:k}$.
  Therefore,\footnote{Note $\mse\{\cm(\Obs_{1:k},\belief_0)\} = \E\{\state^2\} -
    \E\{\cm^2(\Obs_{1:k},\belief_0)\}$. So clearly (\ref{eq:informal}) with
    $\fun(\cm) = \cm^2$ implies $\mse\{\cm_1(\Obs_{1:k},\belief_0)\} \geq \mse\{\cm_2(\Obs_{1:k},\belief_0)\}$.}
  $\mse\{\cm_1(\Obsi_{1:k},\belief_0)\} \geq \mse\{\cm_2(\Obsii_{1:k},\belief_0)\}$ for all
time $k$. 
  \label{thm:informal}
\end{theoremi}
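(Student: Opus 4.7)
The plan is to prove the convex dominance by induction on the number $k$ of observations, leveraging the two tools highlighted in the abstract: integral precision dominance (Ganuza--Penalva style) for the base case $k=1$, and aggregation of single crossing for the induction step.

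As a preliminary reduction, the tower property gives $\E_\action\{\cm_\action(\Obsa_{1:k},\belief_0)\} = \E_{\belief_0}\{\State\}$ for both $\action \in \{1,2\}$, so the two conditional mean estimates share a common unconditional mean. For integrable random variables with matched means, convex dominance is equivalent to second-order stochastic dominance, and it therefore suffices to verify the inequality against the test functions $\fun_t(m) = (m-t)_+$, $t \in \reals$. The special case $\fun(m)=m^2$ then recovers the MSE comparison noted in the footnote.

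For the base case $k=1$, writing $\cm_\action(\obs,\belief_0)$ explicitly via Bayes' rule reduces $\E_1\{\fun_t(\cmi)\} - \E_2\{\fun_t(\cmii)\} \le 0$ to a statement about the joint behaviour of the two likelihoods as $\state$ and $\obs$ vary. The single crossing hypothesis controls the sign of the relevant ratio across $\obs$ at each fixed $\state$, while the signed-ratio monotonicity condition is precisely what lets integration against $\belief_0$ preserve that single-crossing property (in the Quah--Strulovici style). Together these hypotheses yield integral precision dominance: the one-step posterior mean under sensor $2$ is a mean-preserving spread of that under sensor $1$, which is exactly the $k=1$ statement.

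For the induction step $k-1 \to k$, I would use the Bayes recursion $\belief_k^{(\action)} = \bayes(\belief_{k-1}^{(\action)},\Obs_k^{(\action)})$ together with the tower property to write
\begin{equation}
\E_\action\{\fun(\cm_\action(\Obsa_{1:k},\belief_0))\} = \E_\action\bigl\{ g_\action(\belief_{k-1}^{(\action)})\bigr\},
\end{equation}
where $g_\action(\belief) = \E_\action\{\fun(\cm_\action(\Obs,\belief))\}$ is the one-step functional with $\belief$ as prior. Applying the base case with an arbitrary prior gives the pointwise inequality $g_1(\belief) \le g_2(\belief)$, hence $\E_1\{g_1(\belief_{k-1}^{(1)})\} \le \E_1\{g_2(\belief_{k-1}^{(1)})\}$. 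The decisive remaining inequality is $\E_1\{g_2(\belief_{k-1}^{(1)})\} \le \E_2\{g_2(\belief_{k-1}^{(2)})\}$, and this is exactly where aggregation of single crossing is invoked: it shows that a pointwise single-crossing structure in the observation index is preserved when the prior is itself randomized by earlier stages of the filter.

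The main obstacle I anticipate is precisely this last step. The random posteriors $\belief_{k-1}^{(1)}$ and $\belief_{k-1}^{(2)}$ are generated under different joint observation laws, so one cannot naively chain the one-step convex dominance with the induction hypothesis. The aggregation-of-single-crossing lemma bridges this gap, but its application demands that the one-step dominance hold uniformly over a class of priors that is preserved under the Bayes operator. Verifying that the hypotheses on $\cdf_1,\cdf_2$ deliver this uniformity, and that the signed-ratio monotonicity condition propagates through Bayesian updating so the hypotheses remain in force for the randomised posteriors appearing in the outer expectation, is where the delicate technical work will lie.
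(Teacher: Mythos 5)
There is a genuine gap, and it sits exactly where you flagged ``the main obstacle.'' Your induction step requires the inequality $\E_1\{g_2(\belief_{k-1}^{(1)})\} \leq \E_2\{g_2(\belief_{k-1}^{(2)})\}$, where $g_2(\belief) = \E_2\{\fun(\cm_2(\Obs,\belief))\}$ is a functional of the \emph{entire} posterior $\belief$, not a convex function of the scalar conditional mean $\level^\p\belief$ alone (except in the two-state case, where the belief is determined by its mean). The induction hypothesis only delivers convex dominance of the scalar random variables $\cm_\action(\Obsa_{1:k-1},\belief_0)$, i.e.\ comparison of $\E\{\fun(\cm_\action)\}$ for convex $\fun:\reals\to\reals$; it says nothing about $\E\{G(\belief_{k-1}^{(\action)})\}$ for a general convex functional $G$ on the simplex. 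That stronger comparison is precisely Blackwell dominance, which the paper is at pains to avoid assuming (and which in general fails to propagate through $k$ steps). The aggregation-of-single-crossing result of Quah--Strulovici cannot bridge this: it is a statement about when nonnegative combinations of single-crossing functions of the \emph{state} remain single crossing, not a device for chaining one-step dominance through randomized priors. So the recursion you propose cannot close. You have also dropped the TP2 assumption \ref{TP2_obs} (and the boundary condition \ref{obsinit} for finite observation alphabets); TP2 is not cosmetic --- it is what makes the filter MLR-monotone in each observation and hence makes the whole argument work.

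The paper's actual proof is not an induction. After the same mean-matching reduction to test functions $(m-\lambda)^+$ that you correctly identify, it treats the full $k$-observation posterior at once: under \ref{TP2_obs} the map $\obs_{1:k}\mapsto \level^\p\filter(\belief,\obs_{1:k},\action)$ is increasing in each coordinate, so the super-level set $\{\obs_{1:k}: \level^\p\filter(\belief,\obs_{1:k},\action)>\lambda\}$ is a product of half-lines $\{\obs_1>z_1,\ldots,\obs_k>z_k\}$, and the quantity $\psi(\lambda)$ in (\ref{eq:lambdafn}) collapses to $(\level-\lambda\ones)^\p\bigl[\prod_t\bcdf_2(z_t)-\prod_t\bcdf_1(\bz_t)\bigr]\belief$, a difference of $k$-fold products of complementary cdfs. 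Aggregation of single crossing is then used to show that this difference is single crossing in the state index $i$ (this is where \ref{obssc} and \ref{sm} enter), and the FKG inequality applied at the stationary points of $\psi$ yields $\psi(\lambda)\geq 0$. If you want to salvage your outline, you would need to replace the induction with this direct level-set argument, or restrict to the two-state case where the belief is a function of its mean and your chaining step becomes legitimate.
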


 Theorem \ref{thm:informal}  says that localization using sensor 2 is always more accurate than using sensor 1 for any prior $\belief_0$ and this holds globally for all time $k$. To the best of our knowledge this result is  new. 
Convex stochastic dominance (\ref{eq:informal}) in a Bayesian framework has  been studied extensively in economics under the area of {\em integral precision dominance}, see \cite{GP10,AL18}. Theorem \ref{thm:informal} asserts convex dominance of the conditional mean $\cm_\action(\Obsa_{1:k},\belief_0)$ for all $k$, i.e., a global property. The proof involves combining two powerful results introduced recently in economics: integral precision dominance (which ensures that Theorem \ref{thm:informal} holds for $k=1$) and signed ratio monotonicity \cite{QS12}
 (which makes Theorem \ref{thm:informal} hold globally for all $k$). The usefulness of Theorem~\ref{thm:informal} stems from the fact that checking (\ref{eq:informal}) numerically is impossible since it  involves checking over a continuum of  priors and evaluating  intractable multidimensional integrals for  the  expected value.

The intuition behind (\ref{eq:informal}) is that of integral precision: if the observation is noisy, then the posterior is concentrated around the prior  while if the observation is more informative, then the posterior is more dispersed from the prior (large variance). This in turn implies that the noisy observation incurs a larger MSE.
In this paper, we show that Theorem  \ref{thm:informal} holds if
$\State\in \reals$ (scalar valued) or finite state.
Intuitively, if sensor 1 has a higher noise variance than sensor~2, then
(\ref{eq:informal}) holds - we will interpret this  in terms of
stochastic dispersion dominance in  Sec.\ref{sec:additive}.  But there are many other interesting cases where (\ref{eq:informal}) holds; 
the case with finite observation alphabets is particularly interesting,  since there is no noise variance interpretation in that case (the interpretation is in terms of Shannon capacity).
The single crossing assumption and signed monotonicity condition
in Theorem~\ref{thm:informal} are 
straightforward to check compared to the well known Blackwell dominance
\cite{WH80,RZ94}  (see Definition \ref{defn:blackwell}) which requires
factorization of probability measures; and they hold in several new examples
where Blackwell dominance does not.
For example, Blackwell dominance does not, in general, hold globally for all  $k$;
due to lack of commutativity of matrix multiplication.

{\em Applications in optimal filtering and controlled sensing}.
 Since  Theorem \ref{thm:informal} applies to any convex function, it has more applications than just characterizing the  mean square error of Bayesian localization.

As a first application, we will show that convex dominance applies to the one-step optimal (Bayesian) filtering update in a two-time scale model. That is,  consider a Markov process $\{\State_k\}$  which evolves over the slow time scale  $k$ with transition kernel $\State_{k+1} \sim \pdf(\state_{k+1}|\state_k)$,
and is  observed in noise via the observation process $\{\Obs_k\}$ at a fast time scale. So at each time  $k$, we obtain multiple fast time scale observations
 denoted as  the vector $\Obs_k = (\Obs_{k,1},\ldots,\Obs_{k,\lag})$ for some integer $\lag\geq 1$
where  each component
  $\Obs_{k,\lagc} \sim \cdf_\action(\cdot| \state_k)$ is conditionally independent of $\Obs_{k,m}$.
Then  one step of the optimal filter updates the posterior
  distribution  $\belief_k = \pdf(\state_k| \obs_{1:k})$ given $\belief_{k-1}$. 
Then the conditional mean is determined by $\obs_k$ and $\belief_{k-1}$ and denoted as $\cm(\obs_{k,1},\ldots,\obs_{k,\lag},\belief_{k-1})$.
\begin{theoremi}(Optimal Filtering).  \label{thm:informal2} Under the conditions of Theorem \ref{thm:informal},   convex dominance holds for the conditional mean for one  step of the optimal filter. That is, for any convex function
  $\belief:\reals\rightarrow \reals$ and  any prior $\belief_{k-1}$,
     \begin{multline}  \E_1\{ \fun\big( \cmi
       (\Obs_{k,1},\ldots,\Obs_{k,\lag},\belief_{k-1}) \big)  \} \\
       \leq \E\{ \fun\big( \cmii
  (\Obs_{k,1},\ldots,\Obs_{k,\lag},\belief_{k-1}) \big) \} \quad \text{ for all $\lag$.} \label{eq:informalfilter} 
\end{multline}
Therefore $\mse\{\cmi(\Obsi_{1:k},\belief_{k-1})\}\geq \mse\{\cmii(\Obsii_{1:k},\belief_{k-1})\}$.
    \end{theoremi}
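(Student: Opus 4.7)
My plan is to reduce Theorem~\ref{thm:informal2} to Theorem~\ref{thm:informal} by recognising that one Bayesian update on the slow time scale is algebraically identical to an instance of the localisation problem whose prior has been pushed forward through the Markov kernel. First I would introduce the predicted density on the slow-time state $\State_k$,
\begin{equation*}
\belief_{k|k-1}(\state_k) \;=\; \int \pdf(\state_k\mid \state_{k-1})\,\belief_{k-1}(\state_{k-1})\,d\state_{k-1},
\end{equation*}
which, for any fixed $\belief_{k-1}$, is itself a valid probability density on $\reals$. Because the fast-time observations $\Obs_{k,1},\ldots,\Obs_{k,\lag}$ are conditionally i.i.d.\ given $\State_k$, Bayes' rule yields the posterior
\begin{equation*}
\belief_k(\state_k) \;\propto\; \belief_{k|k-1}(\state_k)\,\prod_{\lagc=1}^{\lag}\pdf_\action(\obs_{k,\lagc}\mid \state_k),
\end{equation*}
so that $\cm_\action(\obs_{k,1},\ldots,\obs_{k,\lag},\belief_{k-1})=\cm_\action(\obs_{1:\lag},\belief_{k|k-1})$, the right-hand side being exactly the localisation conditional mean of Theorem~\ref{thm:informal} with horizon $\lag$ and prior $\belief_{k|k-1}$.

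Next, I would observe that the joint observation law $\pdf_\action(\obs_{k,1:\lag}\mid \belief_{k-1})$ coincides with the localisation joint law started from the prior $\belief_0 := \belief_{k|k-1}$, so that the outer expectations $\E_\action$ appearing on each side of (\ref{eq:informalfilter}) are identical to those of (\ref{eq:informal}) after this substitution. Applying Theorem~\ref{thm:informal} with $k$ replaced by $\lag$ and $\belief_0$ replaced by $\belief_{k|k-1}$ therefore yields (\ref{eq:informalfilter}) directly. Since the hypothesis (the single-crossing and signed-ratio monotonicity conditions) concerns only $\cdf_1,\cdf_2$, which are unaffected by the prediction step, no additional verification of the hypothesis is needed.

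Finally, the MSE claim follows by specialising $\fun(\cm)=\cm^2$ together with the orthogonality identity $\mse\{\cm_\action(\Obsa_{k,1:\lag},\belief_{k-1})\} = \Ebk\{\State_k^2\} - \E_\action\{\cm_\action^2(\Obsa_{k,1:\lag},\belief_{k-1})\}$; the first term is sensor-independent because the Markov kernel $\pdf(\state_{k+1}\mid\state_k)$ is shared by both models. The principal obstacle I anticipate is purely one of bookkeeping: ensuring that the ``prior'' appearing in Theorem~\ref{thm:informal} is allowed to be an arbitrary, possibly data-dependent, density such as $\belief_{k|k-1}$. Because Theorem~\ref{thm:informal} is stated uniformly over all priors $\belief_0$, one may condition on $\belief_{k-1}$, invoke the pointwise inequality, and then take outer expectations; the direction of the inequality is preserved. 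No new integral precision or aggregation-of-single-crossing argument is required.
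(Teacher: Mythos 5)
Your proposal is correct and follows essentially the same route as the paper: the paper's proof of the filtering corollaries likewise observes that one step of the filter update is identical to localization with the prior replaced by the predicted distribution $\tp^\p\belief$ (your $\belief_{k|k-1}$), and then invokes the localization theorem, which holds uniformly over all priors. Your additional bookkeeping about the outer expectations and the sensor-independence of $\E\{\State_k^2\}$ is consistent with what the paper leaves implicit.
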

   Thus the optimal filter with sensor 2 is always more accurate than sensor 1. Similar to the discussion for Theorem \ref{thm:informal},  Theorem \ref{thm:informal2} is useful
    since in general, exact computation of the expectations is impossible
    (apart from the linear Gaussian case involving the Kalman filter).
The classical way of establishing (\ref{eq:informalfilter}) is Blackwell dominance. The main point is that  Theorem \ref{thm:informal2}
 covers several cases where Blackwell dominance does not hold (even for $\Delta = 1$).
Moreover, for $\Delta \geq 2$, in general Blackwell dominance will not hold. 
    
    As a second application, we will show that Theorem \ref{thm:informal2} is a crucial step in constructing myopic bounds for the  optimal policy
     of
controlled sensing partially observed Markov decision processes (POMDPs).
In controlled sensing POMDPs, the observation probabilities (which model an adaptive sensor) are controlled whereas  the transition probabilities  (which model the Markov  signal being observed by the sensor) are not controlled.
Controlled sensing arises in reconfigurable sensing resource allocation problems (how can a sensor reconfigure its behavior in real time), cognitive radio, adaptive radars and optimal search problems.
For such problems, the value function arising from stochastic dynamic programming  is convex but not known in closed form; nevertheless Theorem  \ref{thm:informal2} applies.
 By using Theorem \ref{thm:informal2}, the following useful structural result will
be established
\begin{theoremi} (Controlled Sensing POMDP)  For a 2-state Markov chain $\{\State_k\}$, under suitable conditions on the observation distributions, the optimal controlled sensing policy is lower bounded by a myopic policy. \label{thm:informal3}
\end{theoremi}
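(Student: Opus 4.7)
The plan is to combine Theorem~B with the standard concavity structure of the value function for two-state POMDPs. Parameterize the belief simplex by the scalar $\belief \in [0,1]$ (the probability of state~$2$) so that the conditional mean $\cm(\belief) = 1 + \belief$ is an affine bijection. Bellman's equation for the controlled-sensing POMDP reads $\valuef(\belief) = \min_{\action \in \{1,2\}} \{\cost(\belief,\action) + \discount\, \E_\action\{\valuef(\filter(\belief,\Obs,\action))\}\}$, with myopic policy $\policy^{\mathrm{m}}(\belief) = \argmin_\action \cost(\belief,\action)$ and optimal policy $\optpolicy$ the argmin of the entire bracket. Convention: index $2$ is the ``more informative'' sensor. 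The target inequality is $\optpolicy(\belief) \geq \policy^{\mathrm{m}}(\belief)$ pointwise in $\belief$.

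First I would establish that $\valuef$ is concave on $[0,1]$ by a standard value-iteration induction: the terminal cost is affine, each Bellman backup is a pointwise minimum over $\action$ of (affine cost in $\belief$) plus (expectation of concave $\valuef$ under the one-step filter), and pointwise minima preserve concavity; the infinite-horizon concavity passes through by uniform contraction. Since $\cm$ is an affine bijection, $\fun := -\,\valuef \circ \cm^{-1}$ is convex on $[1,2]$.

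Applying Theorem~B to this convex $\fun$ with $\lag=1$ and prior $\belief$ gives the key information inequality $\E_2\{\valuef(\filter(\belief,\Obsii,2))\} \leq \E_1\{\valuef(\filter(\belief,\Obsi,1))\}$. Writing the $Q$-factor $\valueaction(\belief,\action) := \cost(\belief,\action) + \discount\, \E_\action\{\valuef(\filter(\belief,\Obs,\action))\}$, this yields $\valueaction(\belief,2) - \valueaction(\belief,1) = [\cost(\belief,2) - \cost(\belief,1)] + \discount[\E_2\{\valuef\} - \E_1\{\valuef\}]$, in which the second bracket is nonpositive. Therefore, whenever the myopic rule picks action~$2$ (i.e., $\cost(\belief,2) \leq \cost(\belief,1)$), both brackets are nonpositive, forcing the optimal rule to also pick action~$2$; equivalently, $\{\belief : \policy^{\mathrm{m}}(\belief) = 2\} \subseteq \{\belief : \optpolicy(\belief) = 2\}$, which is exactly the pointwise lower bound $\optpolicy \geq \policy^{\mathrm{m}}$.

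The main obstacle is pinning down precisely which hypotheses on the pair $(\cdf_1,\cdf_2)$---the single-crossing and signed-ratio-monotonicity conditions of Sec.~\ref{sec:assumptions}---together with any mild structural hypothesis on $\cost(\belief,\action)$ (affine or concave in $\belief$, so the concavity induction for $\valuef$ goes through) constitute the ``suitable conditions'' referenced in the statement. A secondary technical point is that the HMM filter $\filter(\belief,\Obs,\action)$ is the composition of the action-independent transition step with the action-dependent observation update; one must verify that convex dominance at the observation-kernel level, as invoked in Theorem~B, is preserved through this composition (this is implicit in Theorem~B for $\lag=1$ but deserves explicit verification in the two-state setting, together with routine handling of ties in the argmin and behavior at the simplex vertices).
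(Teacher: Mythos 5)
Your proposal is correct and follows essentially the same route as the paper's proof of Theorem \ref{thm:pomdp}: exploit convexity of the POMDP value function in the scalar belief coordinate (you work with the cost-minimization/concave convention and negate, the paper works directly with the reward-maximization/convex convention), invoke the one-step convex dominance of Theorem B (Corollary \ref{cor:main2}, which already handles your composition worry since the filtering update is just localization with prior $\tp^\p\belief$), and conclude via the sign of the $Q$-difference. The only substantive differences are cosmetic: the paper states the result for $\actiondim$ actions using the monotonicity of $Q(\belief,\action)-\reward_\action^\p\belief$ in $\action$ together with Lovejoy's Lemma 2, whereas you argue the two-action case directly, and the ``suitable conditions'' you ask about are exactly \ref{TP2_obs}, \ref{obssc}, \ref{obsinit} with linear per-stage rewards.
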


The motivation for Theorem \ref{thm:informal3} is two-fold.
First, since 
in general solving a POMDP for the optimal policy is computationally intractable, there is substantial motivation to derive structural results that bound the optimal policy; see 
\cite{Lov87,Rie91,Kri12,KP15,Kri16} for an extensive discussion of POMDP structural results and construction of myopic bounds. Second, the myopic bounds we propose are straightforward to compute and implement and can be used as an initialization for more sophisticated sub-optimal algorithms.
Existing works \cite{Rie91,Kri16}  in constructing myopic lower bounds to the optimal policy use Blackwell dominance of probability measures. Theorem \ref{thm:informal3}  includes several classes of POMDPs where  Blackwell dominance does not hold.


{\em Limitations}. Our results have two limitations.  First, for continuous-state problems, we require a scalar state ($\state\in\reals$). This is essential for convex dominance;  multivariate convex dominance is  an open area. Actually, for finite states (Hidden Markov Model  localization and filter) this is not a limitation  since a multivariate finite  state is straightforwardly mapped to a scalar finite state.
Second, while we show
 global convex dominance for localization (Theorem~\ref{thm:informal}), for optimal  filtering we can only show one-step convex dominance (Theorem \ref{thm:informal2}). Note however, Theorem \ref{thm:informal2} does hold for the two-time scale problem where the state remains fixed for multiple observations.   We emphasize that for the POMDP controlled sensing application, neither of these are limitations, since stochastic dynamic programming relies only on the one-step filtering update. Also, despite these limitations, the sufficient conditions given cover numerous  new examples where the only competing methodology (Blackwell dominance) does not hold.  Finally, using the ingenious proof of \cite{Den10}, it is  possible to give global convex dominance results for the optimal filter; but the corresponding sufficient conditions involve strong conditions and are complicated to check (albeit still finite dimensional); see Sec.\ref{sec:globalhmm} for a discussion.

{\em Related Works}.
As mentioned above,
  {\em Integral Precision dominance} which refers to convex dominance of conditional expectations,   has been studied in \cite{GP10}. The single crossing 
  condition proposed in \cite{Miz06} is a sufficient condition for integral precision dominance (for continuous-valued random variables observed in noise). Our main result, namely Theorem \ref{thm:main},  generalizes this to hold for an arbitrary sequence of observations - this requires generalizing
  the  single crossing condition of the observation probabilities in \cite{Miz06} to aggregating the single crossing condition \cite{QS12} and dealing with
  boundary conditions when the observation distribution has finite support.
  For a textbook treatment of convex dominance and stochastic orders in general, see \cite{MS02,SS07}.

  Regarding controlled sensing POMDPs,
\cite{WH80,Rie91,RZ94,Kri16} 
used  convexity of the value function together  with  Blackwell dominance
to construct a myopic lower bound. \cite{NAV13} considers controlled sensing with hypothesis testing.
  

As mentioned earlier, Blackwell dominance
\cite{Bla53,WH80,RZ94}   requires
factorization of probability measures; and does not, in general, hold globally for all  $k$;
due to lack of commutativity of matrix multiplication. We refer the reader to \cite{Rag11,Rag16} for an excellent recent discussion on Blackwell dominance in an information theoretic setting. 
Finally, there are other approaches for quantifying the  MSE in estimation; \cite{SCC18} uses an interesting approach involving finite time anticipative rate distortion.

{\em Organization}.  Sec.\ref{sec:model} formulates the localization and filtering models, key assumptions, and main theorem (Theorem \ref{thm:main})  on convex dominance of the conditional mean. Sec.\ref{sec:examples} discusses important  examples where Theorem \ref{thm:main} applies including discrete memoryless channels, additive noise with log-concave density and  power law density. Sec.\ref{sec:pomdp} shows how local convex dominance of the optimal filter can be used to construct a myopic lower bound for the optimal policy of a controlled sensing POMDP.

\section{Convex Dominance for Bayesian Localization and Filtering}
\label{sec:model}
In this section we formulate the Bayesian estimation (localization and filtering problems),
and then present our main result on convex dominance of the conditional mean estimate, namely,
 Theorem \ref{thm:main}. The various assumptions required for Theorem \ref{thm:main} to hold are then discussed. Regarding notation, we use uppercase for random variables and lower case for realizations.
The superscript $^\p$ denotes transpose.

 \subsection{Bayesian Localization and Filtering Models}
 \label{sec:models}

For notational simplicity, we first formulate the filtering problem with
finite underlying state space $\statespace$. Then we formulate the continuous state filtering with state space on $\reals$. 
In either case, choosing the transition probability (density) as identity (Dirac mass) for the underlying Markov process  results in the Bayesian localization problem.

{\em Model 1. Finite State  Estimation}.
Consider a    discrete time Markov chain  $\{\State_k\}$ with finite state space $\statespace = \{1,2,\ldots, \statedim\}$, initial probability vector $\belief_0 = [\prob(\State_0  = 1),\ldots, \prob(\State_0  = \statedim)]^\p$ and  transition matrix $ \tp= \left[\tp_{ij}\right]_{\statedim\times\statedim}$,
$ \tp_{ij} = \prob(\State_{\time+1} = j | \State_\time = i )$. The Markov chain is observed in noise by sensor $\action$.
We 
consider
two sensors  $\action \in \{1,2\}$   which
generate the corresponding observation process $\{\Obsa_k\}$, 
 $k=1,2,\ldots$. Here 
 $\Obsa_k$ lies in observation space $\obspace_\action$ and has
 conditional distribution $ \cdf_\action(\cdot| \state_k)$, i.e., $\Obsa_k$ is conditionally independent
of $\Obsa_n$, $n< k$.
We consider three types of observation spaces  $ \obspace_\action$: either
$\obspace_\action$ is a finite set of
action dependent alphabets,
  $\obspace_\action =  \{1,2,\ldots,\obsdim_\action\}$,   $\action \in \actionspace$; or $\obspace_\action =  \reals$; or $\obspace_\action= [\obsl_\action,\obsu_\action]$, i.e.,
finite support
for  $\action \in \{1,2\}$.
Let
$\Belief = \left\{\belief: \belief(i) \in [0,1], \sum_{i=1}^\statedim \belief(i) = 1 \right\}$ denote the unit simplex of $\statedim$-dimensional probability vectors.

\begin{subdef}{defn}
\begin{defn}[Finite State Filtering and Localization] 
Assume $\tp,\cdf_\action(\cdot|\state),\belief_0$ are known.
Given an observation sequence $\obs_{1:k} = (\obs_1,\ldots,\obs_k)$ from sensor $\action$, 
the aim of filtering is to estimate the  Markov state $\State_k$, $k=1,2,\ldots$,  by computing
the posterior probability mass function  $\belief_{k} =  [\prob(\State_k=1| \obs_{1:k},\action),
 \ldots, \prob(\State_k = \statedim|\obs_{1:k},\action)]^\p \in \Belief$
recursively over time $k$.
Localization refers to the special case with transition matrix $\tp = I$ (identity matrix), and the aim is to estimate the random variable $\State_0$ by computing the posterior    
 $\belief_k  = [\prob(\State_0 = 1 | \obs_{1:k},\action),\ldots, \prob(\State_0 = \statedim|\obs_{1:k},\action) ]^\p\in \Belief$
recursively over time $k$. \label{def:filter}
\end{defn}

The solution to the filtering problem is as follows: Starting with initial distribution
 $\belief_0 = [\prob(\State_0=1),\ldots,\prob(\State_0 = \statedim)]^\p \in \Belief$,
the posterior using sensor $\action $ is computed recursively using  the classical hidden Markov model (HMM) state filter  as   
\begin{equation}
\begin{split} & \belief_{k} = \filter(\belief_{k-1},\obs_k,\action),  \text{ where } \filter\left(\belief,\obs,\action\right) = \cfrac{\oprob_{\obs} (\action)\, \tp^\p\,\belief}{\filterd\left(\belief,\obs,\action\right)} , \\
& \filterd\left(\belief,\obs,\action\right) = \one_{\statedim}'\oprob_{\obs}(\action) \tp^\p\,\belief, \\
&\oprob_{\obs}(\action) = \diag\{\oprob_{1,\obs}(\action),\cdots,\oprob_{\statedim,\obs}(\action)\}.
\end{split}
\label{eq:information_state}
\end{equation}
 Here $\one_{\statedim}$ represents a $\statedim$-dimensional vector of ones.
When the observation space $\obspace_\action$ of sensor $\action$ is a finite set,
$\oprob_{\state\obs}(\action) = \prob(\Obs_{\time+1} = \obs| \State_{\time+1} = \state, \action_{\time} = \action)$, $\obs \in \obspace_\action$ denotes the  observation   probabilities for sensor $\action$. When $\obspace_\action$ is continuum, we assume 
that the conditional distribution
$\cdf_\action(y| \state)$ is absolutely continuous wrt the Lebesgue measure and so the controlled conditional probability density function
$\oprob_{\state\obs}(\action) = \pdf(\Obs_{\time+1} = \obs| \State_{\time+1} = \state, \action_{\time} = \action)$ exists.
We assume  for each $y$,
$\oprob_{iy}(\action) \neq 0$ for at least one  state $i$; otherwise $\filterd(\belief,\obs,\action) = 0$ and $\filter(\belief,\obs,\action)$ are not well defined.

The notation in (\ref{eq:information_state}) specifies the filtering/localization update for a single observation $\obs_k$.
Given  a sequence of  observations $\obs_{1:k} = (\obs_1,\ldots,\obs_k)$ and prior
$\belief_0$, we denote the resulting computation of the  posterior $\belief_k$ 
as $\filter(\belief_0,\obs_{1:k},u)$ with normalization term
$\filterd(\belief_0,\obs_{1:k},u)$.
Let $\level = [\level(1),\ldots,\level(\statedim)]^\p$
denote the physical state levels associated with the states $1,\ldots, \statedim$,
respectively.
Then, for sensor $\action$, the  conditional mean estimate of the state is defined as the
$\Obsa_{1:k}$ measurable random variable
\beq \label{eq:cm}
 \cm_\action(\Obsa_{1:k},\belief_0) \ole \E_\action\{\level(\State_k)| \Obsa_{1:k},\belief_0\} =  \level^\p \,\filter(\belief_0,\Obsa_{1:k},\action) .\eeq
 Finally, for sensors $\action\in \{1,2\}$,  the mean square error (MSE) of the conditional mean given prior $\belief_0 $ is
\begin{multline} 
 \mse\{\cm_\action(\Obsa_{1:k},\belief_0)\} = \E\{ \big(\level(\State_k)- \cm_\action(\Obsa_{1:k},\belief_0)\big)^2 \}  \\ =
\E\{g^2(\State_k)\}-
\int_{\obspace_\action^k} \big( \cm_\action(\obs_{1:k},\belief_0)\big)^2\, \filterd(\belief_0,\obs_{1:k},\action) \,d\obs_{1:k}  \label{eq:mse}
\end{multline}
where $\int_{\obspace^k_u} $ denotes the $k$-dimensional integral over $ \obspace_u\times \cdots \times \obspace_u$.
 
 Given the complicated nature of (\ref{eq:cm}) and  (\ref{eq:mse}), evaluating  the MSE analytically for all
 priors $\belief_0$ is impossible, even when the observation space
 $\obspace_\action$ is finite. The MSE is  computed by Monte-Carlo simulation by averaging over a large number of sample paths $\obs_{1:k}$. Our main result below gives an analytical characterization for any convex function: given two sensors $\action \in \{1 ,2\}$,
with observation processes
$\{\Obsi_k\}$,  $\{\Obsii_k\}$,   where observation $\Obsi \sim  \cdf_1(\cdot| \state)$ and  $ \Obsii \sim \cdf_2(\cdot| \state)$ respectively,
 we give sufficient conditions so  that  $\mse\{\cmi(\Obsi_{1:k},\belief_0)\} \geq \mse\{\cmii(\Obsii_{1:k},\belief_0)\}$ for all priors $\belief_0$.

 {\em Model 2. Continuous State Estimation}: Here we assume a continuous state Markov process $\{\State_k\}$ with space   $\statespace = \reals$, initial distribution $\prob(\State_0 \in \borelset)$, and transition distribution $P(\State_{k+1} \in \borelset| \state_k)$ for any Borel set  $\borelset \subset\reals$. We assume absolute continuity wrt Lebesgue measure so that the initial density $\belief_0(x) = \pdf(\State_0 = \state)$ and transition density
 $\pdf(\state_{k+1}|\state_k)$ exists. The Markov process is observed by noise sensor $\action$. For each sensor $\action \in \{1,2\}$, we assume the
 observation space is  $\obspace_\action = \reals$. The observations
 are generated with conditional cdf $\cdf_\action(y|x)$ with support on $\reals$. We assume $\cdf_\action(y|x)$  is absolutely continuous wrt the Lebesgue measure and so the controlled conditional pdf 
$\oprob_{\state\obs}(\action) = \pdf(\Obs_{\time+1} = \obs|\State_{\time+1} = \state, \action_{\time} = \action)$ exists.

\begin{defn}[Continuous State  Filtering and Localization]
Assume $\pdf(\state_{k+1}|\state_k), \cdf_\action(\obs|\state), \belief_0$ are known. Identical to Definition~\ref{def:filter} 
except that posterior $\belief_k = \pdf(\State_k= \state| \obs_{1:k},\action)$ is now a probability density function. In the localization problem, the transition density $\pdf(\state|\prevstate) = \delta(\state - \prevstate)$ is a Dirac mass. 
\end{defn} \label{def:filtering}
\end{subdef}

The solution of the filtering problem is as follows:
Starting with initial density $\belief_0(x)$,
 the posterior state density for sensor $\action$ is computed recursively using the optimal filter (Bayesian recursion)
 \begin{equation} \begin{split}
\belief_k(\state) &= 
\filter(\belief_{k-1},\obs_k,\action)(\state) ,\\
\text{ where } &  \filter(\belief,\obs,\action)(\state)= \frac{\oprob_{\state \obs}(\action)\, \int_\reals \pdf(\state| \prevstate) \belief(\prevstate) d\prevstate}{\filterd(\belief,\obs,\action)} , \\
 \filterd(\belief,\obs,\action) &= \int_\reals  \int_\reals \oprob_{\zeta \obs}(\action) \,\pdf(\zeta| \prevstate)  \belief(\prevstate) d\prevstate d\zeta.
\end{split}
\label{eq:contstate}
\end{equation}
The conditional mean estimate $\cm_\action(\Obsa_{1:k},\belief_0)$ of the state $\State_k$ and associated MSE for sensor $\action \in \{1,2\}$ are given by
\begin{equation} \begin{split}
\cm_\action(\Obsa_{1:k},\belief_0) = \E_\action\{\State_k | \Obsa_{1:k},\belief_0\} = \int_\reals \state \belief_k(\state) d\state, \\
  \mse\{\cm_\action(\Obsa_{1:k},\belief_0)\} = \E\{ \big(\State- \cm_\action(\Obsa_{1:k},\belief_0)\big)^2 \} 
\end{split}
\label{eq:contcm}
\end{equation}
Apart from the case where the densities $\pdf(\state_{k+1}|\state_k)$, $\cdf_\action(\obs|\state)$ and $ \belief_0$   are Gaussian\footnote{In the Gaussian case, posterior $\belief_k$ is Gaussian and its mean and variance are computed via the Kalman filter.},
 $\belief_k$ in  (\ref{eq:contstate}) does not have a finite dimensional statistic and can only
 be computed approximately (using, for example, sequential Markov-chain Monte-Carlo methods).
It is impossible to evaluate the MSE analytically over the continuum of  priors $\belief_0$; thus there is strong motivation to give sufficient conditions that yield convex dominance and therefore an ordering of the MSE between two sensor models $\action=1$ and $\action=2$.

{\em Remark. Two time scale filtering}: In Sec.\ref{sec:intro} we discussed a two time scale system where the state process $\{\State_k\}$ evolved on a slow time scale~$k$ and observations $\{\Obs_k\}$ are  recorded on a fast time scale. That is, at each time $k$ corresponding to state $\State_k$, we obtain $\lag$ fast time scale observations  represented by the vector
$\Obs_k = (\Obs_{k,1},\ldots,\Obs_{k,\lag})$ for some integer $\lag$ where
each component
  $\Obs_{k,\lagc} \sim \cdf_\action(\cdot| \state_k)$ is conditionally independent of $\Obs_k^j$. Then the filtering recursions (\ref{eq:information_state}) and (\ref{eq:contstate}) apply with $\oprob_{i\obs} (\action) =  \prod_{l=1}^\lag \oprob_{i \obs^l}(\action)$.

  \subsection{Assumptions and Main Result} \label{sec:assumptions}
We are now ready to state our main results.
  The key condition we will use is that of single crossing.
  
  \begin{defn}[Single Crossing \cite{Ami05}]  \label{def:sc}
  A function  $\fun: \statespace \rightarrow \reals$ is single crossing, denoted as $\fun(x) \in \lSC$ in $x\in \statespace$, if
\begin{equation}  \label{eq:single_crossing} \begin{split}
  \fun(x) \geq 0 \implies \fun(x') \geq 0 \text{ when } x' > x,
  \text{ and } \fun(x') \leq 0  \\
  \implies \fun(x) \leq 0 \text{ when } x' > x
\end{split}
\end{equation}
\end{defn}
In words, $\fun(x)$ crosses zero at most once from negative to positive as $x$ increases. (Note that in our case $\statespace$ is a totally ordered set;
actually the single crossing definition applies more generally to partially ordered sets.)
  
  \subsubsection{Assumptions} The following are our main assumptions; recall
$\oprob_{\state\obs}(\action)$ is the conditional observation pdf, $\cdf_\action(\obs|\state)$ is the conditional observation cdf and $\bcdf_\action(\obs|\state)$ is the complementary  conditional cdf for sensor~$\action \in \{1,2\}$:
 \begin{enumerate}[label=(A{\arabic*})]
\item\label{TP2_obs} [TP2 observation probabilities] The observation probability kernel (matrix) $\oprob(\action)$ is totally positive of order 2 (TP2).
  \footnote{That is,
   $\oprob_{\state}(\action) \lr \oprob_{\bstate}(\action)$ where the monotone likelihood ratio (MLR) order $\lr$ is defined in Appendix~\ref{sec:proof}. Equivalently, for $\statespace$ finite,  the $i$-th row of $\oprob$  is MLR dominated by the $(i+1)$-th row, i.e., the rows of the matrix are totally monotone wrt the MLR order. When $\obspace_\action$ is finite, TP2  is equivalent to all second-order minors of matrix $\oprob(\action)$  being  nonnegative.}

  \item \label{obssc} [Single Crossing Condition] For any $\bar\obs \in \obspace_1$, $\obs \in \obspace_2$,
    $ \cdf_1(\bar{\obs}| \state) - \cdf_2(\obs| \state) \in \lSC$ in $\state \in \statespace$.
    Equivalently, in terms of complementary cdfs,
 $ \bcdf_2(\obs| \state) - \bcdf_1(\bar{\obs}| \state) \in \lSC$.
%

\item \label{obsinit} [Boundary conditions]  
  If  $\obspace_\action= \{1,\ldots,\obsdim_\action\}$, $\action\in \{1,2\}$,
  then  for the boundary values $1$ and $\obsdim_\action$: 
\begin{equation*} \begin{split}
  \oprob_{\state1}(1) \, \oprob_{\bstate 1}(2)  &\leq
  \oprob_{\state 1}(2) \, \oprob_{\bstate 1}(1),
\\
  \oprob_{\state\obsdim_1}(1) \, \oprob_{\bstate \obsdim_{2}}(2)  &\geq
  \oprob_{\state\obsdim_{2}}(2) \, \oprob_{\bstate \obsdim_1}(1), \quad \bstate \geq \state.
\end{split}
\end{equation*}
If $\obspace_\action = [\obsl_\action,\obsu_\action]$  then the above equation 
holds with $1$ and $\obsdim_\action$ replaced by $\obsl_\action$ and $\obsu_\action$.
\ref{obsinit} is not required if $\obspace_u = \reals$.

\item \label{sm}[Signed Ratio Monotonicity]
  If   $\bcdf_1(y|\state) < \bcdf_2(z|\state) $ and $\bcdf_1(\by|\state) > \bcdf_2(\bz|\state) $ then
 for all $y,\by \in \obspace_1$ and $\z,\bz \in \obspace_2$,
  $$ \frac{\log  \bcdf_1(y|\state)- \log \bcdf_2(z|\state)} { \log \bcdf_1(\by|\state) - \log \bcdf_2(\bz|\state)} \leq
  \frac{\log \bcdf_1(y|\bstate) - \log \bcdf_2(z|\bstate)} { \log \bcdf_1(\by|\bstate) - \log \bcdf_2(\bz|\bstate)} $$
for $ \bstate > \state$.\\
  If   $\bcdf_1(y|\state) > \bcdf_2(z|\state) $ and $\bcdf_1(\by|\state) < \bcdf_2(\bz|\state) $ then
  for all $y,\by \in \obspace_1$ and $\z,\bz \in \obspace_2$,
  $$ \frac{ \log \bcdf_1(\by|\state) - \log \bcdf_2(\bz|\state)} {\log  \bcdf_1(y|\state)- \log \bcdf_2(z|\state)}  \leq
  \frac { \log \bcdf_1(\by|\bstate) - \log \bcdf_2(\bz|\bstate)}{\log \bcdf_1(y|\bstate) - \log \bcdf_2(z|\bstate)},$$
for $ \bstate > \state$.
\end{enumerate}
The assumptions are discussed below in Sec.\ref{sec:discuss}. However, we note at this stage that 
  \ref{sm} is  equivalent to the following single crossing condition (proof in Theorem \ref{thm:agg} in the appendix):  for any $\obs_{1:k} \in \obspace_2^k$, $\bar{\obs}_{1:k} \in \obspace_1^k$ 
  \beq  
  \prod_{l=1}^k \bcdf_2(\obs_l|\state) - \prod_{l=1}^k  \bcdf_1(\bar{\obs}_l|\state)
 \in \lSC, \; \state \in \statespace.
  \label{eq:sma}
  \eeq
  The main point is that (\ref{eq:sma}) globalizes \ref{obssc}, namely
$\bcdf_2(\obs| \state) - \bcdf_1(\bar{\obs}|\state) \in \lSC$,
  to a product from time 1 to arbitrary time $k$. \ref{sm} is a tractable condition for  (\ref{eq:sma}) in terms of the model parameters (observation probabilities); see discussion below.
  \\ \\
\subsubsection{Main result}

Our  main result involves  convex
dominance of the conditional mean. Let us define this formally.
\begin{defn}[Convex dominance of conditional mean]
  Consider two sensor  models $\action\in \{1,2\}$ with observation process $\{\Obsi_k\}$ and $\{\Obsii_k\} $  generated
  by cdfs $ \cdf_1(\obs| \state)$ and $\cdf_2(\obs| \state)$, respectively.
 Let $\int_{\obspace_\action^k}$ denote the $k$-dimensional integral over $\obspace_\action \times \cdots \times \obspace_\action$.
 Consider the  Bayesian localization/filtering problem of Definition \ref{def:filtering}.
 \begin{compactenum}
  \item   Global
    convex stochastic dominance of the conditional mean estimates (\ref{eq:cm}) or (\ref{eq:contcm})  denoted as   $\cmi (\Obsi_{1:k},\belief_0)   <_{cx}\cmii(\Obsii_{1:k},\belief_0)$ holds if for all time $k$,
 $\E_1\{ \fun\big( \cmi
  (\Obsi_{1:k},\belief_0) \big) \} 
 \leq \E_2\{ \fun\big( \cmii
 (\Obsii_{1:k},\belief_0)  \big) \} $  for any\footnote{Providing the integral exists.} convex function $\fun: \reals \rightarrow \reals$ and prior $\belief_0$.
Equivalently, for all time $k$,
\begin{multline}
\int_{\obspace_1^k} \fun\big( \cm_1(\obs_{1:k},\belief_0)\big)\, \filterd(\belief_0,\obs_{1:k},1) \,d\obs_{1:k}
\\ \leq    \int_{\obspace_2^k}\fun\big( \cm_2(\obs_{1:k},\belief_0)\big)\, \filterd(\belief_0,\obs_{1:k},2) \,d\obs_{1:k}
\label{eq:main} 
\end{multline}
\item      Local (one step) convex dominance of  the conditional mean estimates (\ref{eq:cm}) or (\ref{eq:contcm})  denoted as 
  $\cmi (\Obsi_{k},\belief_{k-1})   <_{cx}\cmii(\Obsii_{k},\belief_{k-1})$ holds at each  time $k$ 
if   $\E_1\{ \fun\big( \cmi
  (\Obsi_{k},\belief_{k-1}) \big) \} 
 \leq \E_2\{ \fun\big( \cmii
 (\Obsii_{k},\belief_{k-1})  \big) \} $ for
any convex function $\fun:\reals\rightarrow \reals$ and  prior $\belief_{k-1}$. Equivalently, at each time $k$,  
\begin{multline}
\int_{\obspace_1} \fun(\cmi(\obs_k,\belief_{k-1})) \, \filterd(\belief_k,\obs_k, 1)\, dy_k \\  \leq
 \int_{\obspace_2} \fun(\cmii(\obs_k,\belief_{k-1})) \, \filterd(\belief_k,\obs_k, 2) \, dy_k  \label{eq:local}
\end{multline}
\end{compactenum}
\end{defn}

We are now ready to state our main results for Bayesian localization and filtering.

\begin{theorem}[Global Convex Dominance for Bayesian Localization]  
 Consider the  Bayesian localization problem of Definition~\ref{def:filtering}:
 \begin{compactenum}
 \item  For 
    the finite state  model
    (\ref{eq:information_state}), under  \ref{TP2_obs},  \ref{obssc},  \ref{obsinit}, \ref{sm} (or (\ref{eq:sma})), global
    convex stochastic dominance of the conditional mean estimates (\ref{eq:cm}) holds for all time $k$, i.e.,   $\cmi (\Obsi_{1:k},\belief_0)   <_{cx}\cmii(\Obsii_{1:k},\belief_0)$.
    \item  For the continuous state  model   (\ref{eq:contstate}), under  \ref{TP2_obs},  \ref{obssc},   \ref{sm} (or (\ref{eq:sma})), global convex stochastic dominance  of the conditional mean estimates (\ref{eq:contcm}) holds
   for all time $k$.
 \end{compactenum}
  Therefore,  in  both cases, $$\mse\{\cm_1(\Obsi_{1:k},\belief_{0})\} \geq
   \mse\{\cm_2(\Obsii_{1:k},\belief_{0})\}$$  holds globally for all time $k$.
 \label{thm:main}
\end{theorem}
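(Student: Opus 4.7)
The plan is to reduce the $k$-step localization problem to a single-step convex dominance (integral precision) comparison applied to an aggregated likelihood kernel, and then invoke the equivalence between signed-ratio monotonicity \ref{sm} and the product single crossing condition (\ref{eq:sma}) that the excerpt already identifies. The key observation is that in localization the transition kernel is the identity (resp.\ a Dirac mass), so $\State_k=\State_0$ for all $k$ and the posterior after $k$ conditionally independent observations factorizes as
$$\belief_k(\state)\;\propto\;\belief_0(\state)\prod_{l=1}^k \oprob_{\state,\obs_l}(\action),$$
with associated complementary cdf equal to the product $\prod_l \bcdf_\action(\obs_l|\state)$. Hence the sequence $\obs_{1:k}$ acts as a single aggregated observation through a product kernel $\oprob^{(k)}(\action)$, and proving (\ref{eq:main}) for every $k$ is exactly proving single-step convex dominance for every such product kernel.

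For the single-step building block I would invoke the classical integral precision dominance result of Ganuza--Penalva, Athey--Levin and Mizuno: if a pair of likelihoods satisfies (i) TP2 in $(\state,\obs)$ and (ii) $\bcdf_2(\obs|\state)-\bcdf_1(\bar\obs|\state)\in \lSC$ in $\state$, then for every convex $\fun$ and every prior,
$$\E_1\{\fun(\cmi(\Obsi_1,\belief_0))\}\;\leq\;\E_2\{\fun(\cmii(\Obsii_1,\belief_0))\}.$$
The role of TP2 is to force the Bayesian update $\filter(\belief_0,\cdot,\action)$ to be monotone in $\obs$ under MLR, hence the conditional mean $\cm_\action(\obs,\belief_0)$ is monotone in $\obs$; single crossing of the complementary cdf differences then orders the push-forward distributions of the conditional means under the convex stochastic order.

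It then remains to verify (i) and (ii) for the aggregated kernel $\oprob^{(k)}(\action)$. TP2 is closed under coordinatewise products, so \ref{TP2_obs} immediately gives TP2 of $\oprob^{(k)}(\action)$ viewed as a kernel from $\statespace$ to $\obspace_\action^k$. The aggregated single-crossing requirement is precisely (\ref{eq:sma}),
$$\prod_{l=1}^k \bcdf_2(\obs_l|\state)-\prod_{l=1}^k \bcdf_1(\bar\obs_l|\state)\in \lSC \text{ in } \state,$$
which the excerpt asserts is equivalent to \ref{obssc}, \ref{obsinit}, \ref{sm} via Theorem \ref{thm:agg}. Putting the pieces together yields (\ref{eq:main}) for each $k$, which is global convex dominance. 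The MSE inequality follows by taking $\fun(m)=m^2$ and using $\mse\{\cm_\action(\Obsa_{1:k},\belief_0)\}=\E\{g^2(\State_0)\}-\E\{\cm_\action^2(\Obsa_{1:k},\belief_0)\}$, where $\E\{g^2(\State_0)\}$ depends only on $\belief_0$, not on the sensor. The continuous-state case is structurally identical: replace sums over $\statespace$ by integrals and read the TP2 and single-crossing assumptions on densities; the factorization of the posterior goes through verbatim since the transition density is a Dirac mass.

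The main obstacle is establishing the product single crossing (\ref{eq:sma}) from the one-step single crossing \ref{obssc} and the signed-ratio monotonicity \ref{sm}, since single crossing is not in general preserved under products and a direct induction fails. The right move is to pass to log coordinates on the complementary cdfs, where \ref{sm} assumes exactly the signed-ratio form needed to invoke a Quah--Strulovici aggregation theorem and close the induction, which is the content of the referenced Theorem \ref{thm:agg}. A secondary technical point, handled by \ref{obsinit}, is boundary behavior when $\obspace_\action$ is finite or compactly supported: the products $\prod_l \bcdf_\action(\obs_l|\state)$ may change discontinuously at the extremes of $\obspace_\action$, and \ref{obsinit} fixes the sign of the aggregated difference at those extremal points so that it is consistent with the interior single-crossing pattern. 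When $\obspace_\action=\reals$ the complementary cdf tails vanish automatically and \ref{obsinit} is vacuous, in agreement with the theorem statement.
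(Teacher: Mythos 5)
Your reduction of localization to a single aggregated observation $\obs_{1:k}$ drawn from the product kernel is the right way to frame the problem, and your identification of (\ref{eq:sma}) as the aggregated single-crossing condition (obtained from \ref{obssc} and \ref{sm} via the Quah--Strulovici aggregation result) matches Theorem \ref{thm:agg} in the appendix. The gap is in the ``single-step building block'' you invoke. The integral precision dominance theorems of Ganuza--Penalva and Mizuno are proved for a \emph{scalar} signal $\obs\in\reals$: the single-crossing hypothesis there is placed on scalar observation cdfs, and the proof hinges on the fact that the upper level sets $\{\obs:\cm_\action(\obs,\belief_0)>\lambda\}$ of the MLR-monotone conditional mean are half-lines, whose probabilities are exactly the complementary cdfs entering the single-crossing condition. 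For your aggregated kernel the observation lives in $\obspace_\action^k$, the conditional mean is only coordinatewise increasing, and there is no off-the-shelf multivariate version of the theorem to cite. Supplying that multivariate statement is precisely the technical content of the paper's proof: it first reduces convex dominance to increasing convex dominance (using that both conditional means have expectation $\level^\p\belief_0$), characterizes the latter through the test functions $[\,\cdot-\lambda]^+$, writes the resulting $\psi(\lambda)$ in (\ref{eq:lambdafn}) as a difference of products of complementary cdfs by identifying the upper level sets with products of half-lines $\{\obs_1>z_1,\ldots,\obs_k>z_k\}$ in (\ref{eq:olam}), computes $d\psi/d\lambda$ (Lemma \ref{lem:nabla}), and shows $\psi\geq 0$ at every stationary point via the FKG correlation inequality, where the vector $\beta$ is increasing exactly because of (\ref{eq:sma}). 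None of this machinery appears in your proposal; you have in effect assumed the conclusion of the hard step.

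A secondary inaccuracy: \ref{obsinit} is not part of the equivalence between \ref{sm} and (\ref{eq:sma}) --- Theorem \ref{thm:agg} uses only \ref{obssc} and \ref{sm}. Its actual role is to guarantee the range containment $\{\level^\p\filter(\belief,\obs,1),\,\obs\in\obspace_1\}\subseteq\{\level^\p\filter(\belief,\obs,2),\,\obs\in\obspace_2\}$ (Theorem \ref{thm:boundary}), which rules out thresholds $\lambda$ at which one of ${\obspace}_\action^\lambda$ or $\bar{\obspace}_\action^\lambda$ is empty while the other is not; the examples at the end of Appendix \ref{sec:finiteproof} show that $\psi(\lambda)<0$ on an interval when this fails. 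Your description of \ref{obsinit} as fixing boundary signs is in the right spirit, but it does not engage with the case analysis (Cases 1a--1e) that the finite-support proof actually requires.
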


The proof of Theorem \ref{thm:main} is in Appendix~\ref{sec:proof}.

\begin{corollary}[Local Convex Dominance for Optimal Filtering]
Consider  the optimal filtering problem of Definition \ref{def:filtering}:
 \begin{compactenum}
 \item  For 
    the finite state  model
 under  \ref{TP2_obs},  \ref{obssc},  \ref{obsinit}, local convex dominance
 of the conditional mean estimates (\ref{eq:cm}) of the Hidden Markov Model (HMM) filter  (\ref{eq:information_state}) holds at each time $k$, i.e.,
 $\cmi (\Obsi_{k},\belief_{k-1})   <_{cx}\cmii(\Obsii_{k},\belief_{k-1})$.
\item For the continuous state model  under 
\ref{TP2_obs},  \ref{obssc}, local  convex dominance of  the conditional mean estimates (\ref{eq:contcm}) for  the optimal filter (\ref{eq:contstate}) holds at each time $k$.
\end{compactenum}
Therefore, for both cases,    $\mse\{\cm_1(\Obsi_{k},\belief_{k-1})\}  \geq
\mse\{\cm_2(\Obsii_{k},\belief_{k-1})$ holds  at each time $k$.
 \label{cor:main2}
\end{corollary}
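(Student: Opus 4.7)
The plan is to derive Corollary \ref{cor:main2} directly from Theorem \ref{thm:main} by reinterpreting one step of the optimal filter as a one-step Bayesian localization update from a suitably shifted prior. Since the POMDP controlled sensing application cited in the introduction only ever invokes a single-step bound inside the dynamic programming recursion, this ``one step from an arbitrary prior'' interpretation is exactly the form we need.

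First I would split the filter recursions (\ref{eq:information_state}) and (\ref{eq:contstate}) into the standard predict--update form. The prediction step forms the one-step-ahead density $\bbelief_k(\state) = \int_\reals \pdf(\state\mid \prevstate)\,\belief_{k-1}(\prevstate)\,d\prevstate$ (or $\bbelief_k = \tp^\p \belief_{k-1}$ in the finite-state case); this depends only on the transition kernel and is identical under both sensors $\action \in \{1,2\}$. The subsequent measurement update is Bayes' rule applied to $\bbelief_k$ with observation $\obs_k$, which is formally identical to one step of Bayesian localization (with transition matrix $I$ / Dirac transition density) from the prior $\bbelief_k$. In particular, the one-step filter conditional means coincide with the horizon-one localization conditional means: $\cm_\action(\Obsa_k,\belief_{k-1}) = \cm_\action(\Obsa_1,\bbelief_k)$ in the notation of Theorem~\ref{thm:main} at $k=1$.

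Next I would invoke Theorem \ref{thm:main} at horizon $k=1$ with initial distribution $\bbelief_k$. The key observation is that in the proof of Theorem \ref{thm:main}, assumption \ref{sm} (equivalently (\ref{eq:sma})) enters only to aggregate the single-crossing property \ref{obssc} across a product of complementary cdfs $\prod_{l=1}^k \bcdf_\action(\obs_l\mid \state)$ evaluated at many observation times. For a single observation the product degenerates to a single factor and no aggregation is needed; \ref{obssc} itself is precisely the integral-precision single-crossing condition that drives the one-observation argument. Hence the one-observation case of Theorem \ref{thm:main} goes through assuming only \ref{TP2_obs}, \ref{obssc}, and (in the finite-alphabet case) the boundary condition \ref{obsinit}. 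This yields $\E_1\{\fun(\cmi(\Obsi_1,\bbelief_k))\} \leq \E_2\{\fun(\cmii(\Obsii_1,\bbelief_k))\}$ for any convex $\fun$ and any predicted prior $\bbelief_k$.

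Since this inequality holds for arbitrary priors, it holds in particular for the one-step prediction of $\belief_{k-1}$, giving $\cmi(\Obsi_k,\belief_{k-1}) <_{cx} \cmii(\Obsii_k,\belief_{k-1})$ at each $k$; the MSE bound then follows by taking $\fun(c)=c^2$ as in the footnote to Theorem \ref{thm:informal}. The only delicate point in the plan is auditing the proof of Theorem~\ref{thm:main} to confirm that \ref{sm} plays no role when $k=1$, so that the reduced hypothesis list in Corollary~\ref{cor:main2} is indeed sufficient; everything else is a routine one-step specialization of the global result.
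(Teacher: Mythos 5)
Your proposal is correct and is essentially the paper's own argument: the paper likewise observes that the one-step filter update is just a localization update with the prior $\belief_{k-1}$ replaced by the predicted prior $\tp^\p\belief_{k-1}$, and then invokes Theorem \ref{thm:main} (at horizon one) for an arbitrary prior. Your extra care in checking that \ref{sm} enters only through the aggregation of single-crossing terms across multiple observations, and hence is vacuous when $k=1$, correctly justifies the reduced hypothesis list, a point the paper leaves implicit.
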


\begin{corollary}[Two time-scale filtering]
  For the two-time scale filtering problem discussed in   Sec.\ref{sec:models},
  \begin{compactenum}
  \item For the HMM filter,
local convex dominance (\ref{eq:local}) holds
  under  \ref{TP2_obs},  \ref{obssc},  \ref{obsinit}, \ref{sm} 
  \item  For the continuous state filter,
local convex dominance (\ref{eq:local})  holds under \ref{TP2_obs},  \ref{obssc},  \ref{obsinit}, \ref{sm}.
  \end{compactenum}
  In either case, $\int_{\obspace_\action}$ in (\ref{eq:local}) denotes the $\lag$-dimensional integral. \label{cor:local}
\end{corollary}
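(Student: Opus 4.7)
The plan is to reduce Corollary \ref{cor:local} to a direct invocation of Theorem \ref{thm:main}, by viewing the $\lag$ fast-time-scale observations within a single slow-time-scale step as a length-$\lag$ Bayesian localization experiment on the (fixed) state $\State_k$. The key structural remark is that inside slow step $k$ the hidden state is held constant while $\Obs_{k,1},\ldots,\Obs_{k,\lag}$ are drawn conditionally i.i.d.\ from $\cdf_\action(\cdot\mid\state_k)$. Hence, using the factorization $\oprob_{\state,\obs_{1:\lag}}(\action)=\prod_{l=1}^\lag \oprob_{\state,\obs_l}(\action)$ noted at the end of Sec.\ref{sec:models}, the one-step filter update of (\ref{eq:information_state}) (resp.\ (\ref{eq:contstate})) factors as a single prediction step producing the effective prior
\[
\bbelief \;\ole\; \tp^\p\belief_{k-1}\qquad\Bigl(\text{resp. } \bbelief(\state)=\int_\reals \pdf(\state\mid\prevstate)\belief_{k-1}(\prevstate)\,d\prevstate\Bigr),
\]
followed by $\lag$ successive measurement updates with the state held constant; the latter is exactly the Bayesian localization recursion of Definition \ref{def:filtering} of length $\lag$ with prior $\bbelief$. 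This identification yields
\[
\cm_\action(\obs_{k,1:\lag},\belief_{k-1}) = \cm^{\text{loc}}_\action(\obs_{k,1:\lag},\bbelief),\quad \filterd(\belief_{k-1},\obs_{k,1:\lag},\action) = \filterd^{\text{loc}}(\bbelief,\obs_{k,1:\lag},\action),
\]
where superscript ``loc'' denotes the corresponding localization quantities.

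I would then apply Theorem \ref{thm:main} to this localization sub-problem with prior $\bbelief$ at horizon $\lag$. The hypotheses \ref{TP2_obs}, \ref{obssc}, \ref{obsinit}, \ref{sm} concern only the per-observation kernel $\oprob$ and the complementary cdfs $\bcdf_\action$, so they are inherited without modification from the statement of the corollary; the slow-time-scale transition $\tp$ (or $\pdf(\state\mid\prevstate)$) is absorbed harmlessly into the localization prior $\bbelief$ and does not enter the hypotheses. The appropriate part of Theorem \ref{thm:main} (finite state for the HMM filter, continuous state for the density filter) therefore yields global convex stochastic dominance at time $\lag$, which upon substituting the identities above is precisely (\ref{eq:local}) with the $\lag$-dimensional integrals over $\obspace_\action$.

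The genuinely hard aggregation step---propagating single-observation dominance across an arbitrary number of observations via \ref{sm} and its equivalent single-crossing formulation (\ref{eq:sma})---has already been discharged in the proof of Theorem \ref{thm:main}, so no fresh aggregation argument is required here. No separate induction on $\lag$ is needed either, since Theorem \ref{thm:main} already provides convex dominance at arbitrary localization horizons. The only remaining subtlety is to make the predict-then-localize decomposition precise simultaneously in the discrete and continuous cases, which is a routine rewriting of the respective filter recursions. Consequently I anticipate no substantive obstacle beyond careful bookkeeping.
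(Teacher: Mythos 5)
Your proposal is correct and is essentially the paper's own argument: the paper proves this corollary (jointly with Corollary \ref{cor:main2}) by observing that the one-step filter update is identical to localization with the prior replaced by the predicted belief $\tp^\p\belief$, and then invoking Theorem \ref{thm:main} (which holds for every prior) at horizon $\lag$. Your write-up merely makes the predict-then-localize decomposition and the inheritance of \ref{TP2_obs}--\ref{sm} explicit, which the paper leaves implicit.
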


\begin{proof} The one step filtering update (\ref{eq:information_state}) is identical to localization
  with  $\tp^\p \belief$ replaced by $\belief$. Since Theorem \ref{thm:main} holds for all $\belief \in \Belief$, Corollary \ref{cor:main2} and \ref{cor:local} follow.
\end{proof}

Let us reiterate the main point:
It is clear from  (\ref{eq:cm}), (\ref{eq:mse}) that evaluating  the MSE analytically for all
 priors $\belief_0$ is impossible, even when the observation space
 $\obspace_\action$ is finite. 
 Theorem \ref{thm:main} and its corollaries are useful since they give sufficient
 conditions that ensure one sensor observation model yields a MSE that dominates another sensor observation model; indeed they guarantee dominance
 for any convex function.
 Also for continuous state optimal filtering, in general there is no finite dimensional statistic for $\belief_k$ thereby making it impossible to compute exactly; yet Corollary \ref{cor:main2} and \ref{cor:local} give useful insight into how the observation probabilities affect the mean square error of the conditional mean.

\subsubsection{Why can't we establish global convex dominance of the optimal filter?}
The above results establish global convex dominance for Bayesian localization and local convex dominance for optimal filtering. 
 The key step in the proof of global convex dominance is (\ref{eq:lastline}) in the appendix: in simpler notation the task is to prove that 
$ (g-\lambda \ones)^\p \matd \,\belief\geq 0 $ for $\lambda \in \reals$ where $g$ is the vector of state levels of the Markov chain, $\matd$ is a square matrix, and $\belief$ is the prior. In the localization problem, $\matd$ is a diagonal matrix involving the observation distributions. Because of this diagonal structure, useful sufficient conditions can be given in terms of the model parameters $\oprob(1)$, $\oprob(2)$. In the filtering case $\matd$ is no longer a diagonal matrix - it is the  non-commutative product of transition matrices and observation matrices. Then there is no obvious way of giving useful sufficient conditions for $ (g-\lambda \ones) \matd \,\belief\geq 0 $ in terms
of the model parameters.

 In Sec.\ref{sec:globalhmm} we will give an alternative set of sufficient conditions
 for global convex dominance that apply to the optimal filter when the observation spaces $\obspace_\action$, $\action \in \{1,2\}$ are finite. However, checking these sufficient conditions for length $k$ observation sequences  requires a computational cost that is exponential in $k$ and so 
 intractable for large $k$. Nevertheless, the sufficient  conditions of  Sec.\ref{sec:globalhmm} guarantee global convex dominance for all (continuum  of) priors  $\belief$ and so are useful for small $k$.


    \subsection{Discussion of Assumptions \ref{TP2_obs}-\ref{sm}} \label{sec:discuss}
This subsection discusses the main assumptions of Theorem \ref{thm:main}.
Section \ref{sec:examples} below discusses several examples.

{\bf \ref{TP2_obs}}.
The TP2 condition
\ref{TP2_obs} is widely used to characterize the structural properties  of Bayesian estimation. \ref{TP2_obs} is  necessary and sufficient for the Bayesian filter update $\filter(\belief,\obs,\action)$ to be monotone likelihood ratio increasing
wrt $\obs$; see \cite{Kri16} for proof. This implies
$\cm_\action(\obs,\belief)$ is increasing in $\obs$.
This monotonicity wrt $\obs$ is a crucial step in proving Theorem \ref{thm:main}.
\cite{Kri16} gives several examples of continuous and discrete distributions that satisfy
\ref{TP2_obs} in the context of controlled sensing. We refer to the classic work \cite{KR80} for details and examples of  TP2 dominance, see also \cite{Kri16}.

{\bf \ref{obssc}}.
\ref{obssc} is the key condition required for {\em integral precision dominance}. First a few words about  integral precision dominance.
For random variable $x\in \reals$ with  prior $\belief$ and posterior $\filter(\belief,\obs,u)(x)$,   Definition 2(ii), pp.1011 in \cite{GP10} says that
integral precision dominance holds if the conditional expectations exhibit convex dominance:
\begin{multline*} m_1(\Obs) =  \int_\reals  x \filter(\belief,\Obs,1)(x) dx \\  \leq_{cx} m_2(\Obs) = \int_\reals x \filter(\belief,\Obs,2)(x) dx
\end{multline*}
Equivalently
$$\int_{\obspace} \fun\big( m_1(\obs)\big) \, \filterd(\belief,y,1)  dy \leq
\int_{\obspace}   \fun\big( m_2(\obs)\big) \, \filterd(\belief,y,2)  dy $$ for any convex function $\fun$,
providing the integrals exist. For $\state \in \reals$, \cite{Miz06} gives a single crossing condition similar to
\ref{obssc} for integral precision dominance; see also footnote 9, pp.1016 in \cite{GP10}. Our setting is different since  we consider a Markov process $\{\State_k\}$ observed in  noise
and we are considering convex dominance wrt the process  $\{\Obs_{k}\}$.
However, our main proof is similar in spirit to \cite{Miz06}, but in addition to \ref{obssc}, we also  need the boundary condition
\ref{obsinit} for finite support and finite set observations; also we need \ref{sm} for global convex dominance.
Finally, note  that \cite{Chi14} examines integral precision dominance 
as a special case of Lehmann precision (see Corollary 4.6 of \cite{Chi14}) after the seminal paper by \cite{Leh88}.
 
Returning to the single crossing condition \ref{obssc}, it
can also be viewed as signed-submodularity of the observation probability distributions. A function $\fun(\state,u)$ is  submodular if $\Delta(\state,\action) \ole \fun(\state,u) - \fun(\state,u+1)$ is  increasing in~$\state$.   In comparison, \ref{obssc} says\footnote{For $z\in\reals$, define the signum function  $\sgn(z) \in \{-1,0,1\}$ for $z<0, z=0,z>0$, respectively. Note that   $\sgn(\fun(x))$  increasing in $x$ (ignoring  excursions to zero) is equivalent to
  $\fun(x) \in \lSC$ in Definition \ref{def:sc}.}
$\sgn\big(\Delta(\state,\action)\big)$ is increasing in $\state$ where 
$\Delta(\state,\action) = \sum_{y\leq j} \oprob_{\state y}(\action) - \sum_{y\leq \bj}  \oprob_{\state y}(\action+1)$.  Requiring $\Delta(\state ,\action)$ to be increasing in $\state$ is impossibly restrictive, whereas 
requiring $\sgn\big(\Delta(\state ,\action)\big)$ to be  increasing in $\state$ leads to numerous  examples as discussed below.
We will use this signed-submodularity assumption in the FKG inequality 
(Theorem \ref{thm:convexdom}) to prove integral precision dominance.

{\bf \ref{obsinit}}. The boundary condition  \ref{obsinit} is not required if the observation space $\obspace_\action = \reals$ for $\action\in \{1,2\}$. \ref{obsinit} is only required when  $\obspace_\action$
 has  finite support  or $\obspace_\action$ is finite. \ref{obsinit} is not restrictive since it only  imposes   conditions on the observation probabilities at  the boundary values of $\obspace_\action$. 
  \ref{obsinit} is a sufficient condition for the range of the posterior
for sensor  $1$ to be a subset of that
 for sensor  $2$, i.e.,
 $\{\level^\p \filter(\belief,\obs,1), \obs \in \obspace_1\}  \subseteq  \{ \level^\p \filter(\belief,\obs,2)$,  $\obs \in \obspace_{2}\}$. Several examples that satisfy \ref{obsinit} are given below.
Also to give further insight, the end of Appendix \ref{sec:finiteproof}  gives numerical examples where integral precision dominance does not hold
 when \ref{obsinit} is not satisfied.

{\bf \ref{sm}}. {\em Signed ratio monotonicity}  \ref{sm} is a key condition
from the  paper \cite[Proposition 1]{QS12};  it
 is a necessary and sufficient for  any non-negative linear combination of single crossing functions to be single crossing.  Translated to our problem, 
\ref{sm} is required  for establishing Theorem \ref{thm:main} for $k>1$ (global convex dominance), i.e., when multiple observations $y_{1:k}$ are used to compute the posterior.
\ref{sm} is not required for the case $k=1$ (local convex dominance).
In simple terms \ref{sm} extends the single crossing condition  \ref{obssc} to the sum of single crossing functions. Note that  \ref{obssc} involves  each individual sensor $\action$, whereas \ref{sm} involves both sensors' observation probabilities.

To motivate \ref{sm}, start with (\ref{eq:sma}). The ordinal property of single crossing \cite{Ami05} implies that (\ref{eq:sma}) is equivalent to
the difference in logs being single crossing, i.e.,
$\sum_{t=1}^k [\log \bcdf_2(\obs_t|\state)  - \log \bcdf_1(\obs_t|\state)] \in \lSC$.  Note \ref{obssc} implies that each
term $ [\log \bcdf_2(\obs_t|\state)  - \log \bcdf_1(\obs_t|\state)]  \in \lSC$; but this does not imply that the sum over $t$ is single crossing.
(In general the sum of single crossing functions is not single crossing.) The main point is that  signed ratio monotonicity condition \ref{sm} is necessary and sufficient for any non-negative linear combination of single crossing functions to be single crossing \cite[Proposition 1]{QS12}. This allows us to express (\ref{eq:sma}) as the tractable condition \ref{sm} which directly involves the observation density. Finally,  in the special case of additive log-concave noise densities, \ref{sm} automatically holds if \ref{obssc} holds; this is discussed below  in Sec.\ref{sec:additive}.

 Another intuitive way of viewing 
 (\ref{eq:sma}) is: a sufficient condition for local convex dominance
 is that   $ \bcdf_2(\obs| \state)- \bcdf_1(\bar{\obs}| \state)$ is increasing in $\state$ (this is stronger than \ref{obssc} which only needs
  $\sgn( \bcdf_2(\obs| \state)- \bcdf_1(\bar{\obs}| \state))$ to  increase in $x$);
 a sufficient condition for  global convex dominance requires that
 $ \bcdf_2(\obs| \state) /  \bcdf_1(\bar{\obs}| \state) $ is increasing in $\state$ (this is stronger than (\ref{eq:sma})).

 \section{Examples of Convex Dominance in Localization and Filtering} \label{sec:examples}
 To illustrate Theorem \ref{thm:main} and its corollaries, we  discus  3 important  examples of convex dominance in Bayesian estimation. Then we briefly discuss conditions for global convex dominance
 of the optimal filter.

 \subsection{Example 1. Blackwell Dominance,  Integral Precision Dominance and Channel Capacity}
 \label{sec:bdvsip}
 Here we discuss our first main example; namely how Theorem \ref{thm:main} and its corollaries apply to finite set observation models
 and HMMs.
As mentioned in Section \ref{sec:intro},  Blackwell dominance is a widely used   condition for convex dominance. 
Since Theorem~\ref{thm:main} uses integral  precision dominance to give a  new set of conditions for convex dominance compared to  Blackwell dominance, we compare them using several numerical examples below.

\begin{defn}[Blackwell dominance  $\oprob(2) >_{B} \oprob(1)$] \label{defn:blackwell}
Suppose $\oprob_{iy}(1) = \sum_{\bar{y}\in \obspace_2} \oprob_{i\bar{y}}(2) \, L_{\bar{y},y}$ for $y \in \obspace_1$ where $L$ is a stochastic kernel, i.e.,
$\sum_{y \in \obspace_1} L_{\bar{y},y} = 1$ and $L_{\bar{y},y} \geq 0$.
Then $\oprob(2) $ {\em Blackwell dominates} $\oprob(1)$; denoted as $\oprob(2) >_{B} \oprob(1)$. So when $\obspace_1,\obspace_2$ are finite,
$\oprob(2) >_{B} \oprob(1)$ if  $\oprob(1) = \oprob(2) \times L$ where $L$ is a stochastic 
(not necessarily square) matrix.
\end{defn}

Intuitively $\oprob(1)$ is noisier than $\oprob(2)$.
Using a straightforward  Jensen's inequality argument,  the following result holds:
\begin{theorem}[Blackwell dominance \cite{Rie91}]
  \label{thm:blackwell}
      $\oprob(2) >_{B} \oprob(1)$ is  a  sufficient condition for the one step (local) stochastic dominance conclusion of Theorem \ref{thm:main} to hold.
  \end{theorem}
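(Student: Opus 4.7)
The plan is to verify the local convex dominance inequality (\ref{eq:local}) directly from the Blackwell factorization $\oprob_{iy}(1)=\sum_{\bar y\in\obspace_2}\oprob_{i\bar y}(2)\,L_{\bar y,y}$ via a single application of Jensen's inequality. The geometric picture is that, under Blackwell dominance, sensor~1's observation $y$ is a noisy garbling of sensor~2's observation $\bar y$, so the posterior under sensor~1 is a mixture of sensor~2's posteriors; since $\phi$ is convex, averaging inside $\phi$ loses as much as possible.

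Concretely, I would first rewrite the sensor-1 normalization using the factorization:
\begin{equation*}
\sigma(\belief,y,1)=\sum_i \oprob_{iy}(1)(\tp^\p\belief)(i)=\sum_{\bar y\in\obspace_2} L_{\bar y,y}\,\sigma(\belief,\bar y,2).
\end{equation*}
An analogous computation for the unnormalized conditional mean gives
\begin{equation*}
\sigma(\belief,y,1)\,\cm_1(y,\belief)=\sum_{\bar y\in\obspace_2} L_{\bar y,y}\,\sigma(\belief,\bar y,2)\,\cm_2(\bar y,\belief),
\end{equation*}
so defining $w_{\bar y\mid y}=L_{\bar y,y}\,\sigma(\belief,\bar y,2)/\sigma(\belief,y,1)$, the weights are nonnegative and sum to one in $\bar y$, and we obtain the key mixture representation
\begin{equation*}
\cm_1(y,\belief)=\sum_{\bar y\in\obspace_2} w_{\bar y\mid y}\,\cm_2(\bar y,\belief).
\end{equation*}

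Next I would apply Jensen's inequality to the convex function $\phi$, giving $\phi(\cm_1(y,\belief))\leq \sum_{\bar y} w_{\bar y\mid y}\,\phi(\cm_2(\bar y,\belief))$. Integrating (summing) against $\sigma(\belief,y,1)$ over $y\in\obspace_1$ cancels the normalization in $w_{\bar y\mid y}$, leaving $\sum_{\bar y}\sigma(\belief,\bar y,2)\,\phi(\cm_2(\bar y,\belief))\,\sum_y L_{\bar y,y}$, and since $L$ is a stochastic kernel the inner sum equals $1$. This yields exactly (\ref{eq:local}) with $\belief_{k-1}=\belief$, establishing local convex dominance of the conditional mean, and in particular $\mse\{\cm_1(\Obsi_k,\belief)\}\geq \mse\{\cm_2(\Obsii_k,\belief)\}$ by taking $\phi(m)=m^2$.

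There is no real obstacle here; the argument is essentially a repackaging of Blackwell's classical sufficiency theorem, and the only care needed is in handling the normalization constants of the filter update so that the mixture weights $w_{\bar y\mid y}$ are genuine probabilities. For the continuous-observation case I would replace sums over $\obspace_\action$ by integrals against the appropriate reference measure, and replace the stochastic matrix $L$ by a Markov kernel with $\int L(\bar y,dy)=1$; no new idea is required. I would also remark that this single-step argument does \emph{not} iterate to a global statement analogous to Theorem~\ref{thm:main}(1): iterating requires commuting the Blackwell kernel with the transition $\tp$, which generally fails for non-trivial Markov chains, and this is precisely the deficiency of the Blackwell approach that Theorem~\ref{thm:main} is designed to overcome.
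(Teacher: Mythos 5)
Your proof is correct and is exactly the argument the paper has in mind: it describes Theorem \ref{thm:blackwell} as following from ``a straightforward Jensen's inequality argument,'' and your mixture representation $\cm_1(y,\belief)=\sum_{\bar y} w_{\bar y\mid y}\,\cm_2(\bar y,\belief)$ with weights built from the garbling kernel $L$ and the normalizations $\filterd(\belief,\bar y,2)$ is the standard way to make that precise. Your closing remark about why the argument does not iterate globally also matches the paper's discussion of the non-commutativity obstruction.
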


  {\em Insight}.  Both integral precision dominance (Theorem \ref{thm:main}) and   Blackwell dominance (Theorem \ref{thm:blackwell})  exploit  convexity.  But there is an important  difference: Blackwell dominance implies that
  for any convex function $\fun: \reals^\statedim \rightarrow\reals$,
  $\sum_{\obspace_\action} \fun\big(\filter(\belief,\obs,\action)\big) \filterd(\belief,\obs,\action)$ is increasing in $u$ for all $\belief \in \Belief$. In comparison, integral precision dominance (Theorem \ref{thm:main}) implies convex dominance in one dimension, namely, for any scalar convex function
  $\fun: \reals \rightarrow \reals$,
   $\sum_{\obspace_\action} \fun\big(\level^\p\filter(\belief,\obs,\action)\big) \filterd(\belief,\obs,\action)$ is increasing in $u$
   for all $\belief \in   \Belief$.
   As will be shown below there any many important examples where integral
   precision dominance holds but Blackwell dominance does not hold.

   Note that Blackwell dominance (Theorem \ref{thm:blackwell}) does not hold globally for all $k$ unlike integral precision (Theorem \ref{thm:main}). This is because $\oprob(2) >_B \oprob(1)$ does not imply
   that the $k$-th powers satisfy $\oprob^k(2) >_B \oprob^k(1)$, apart from the pathological case
$\oprob(2) L = L \oprob(2)$ where matrix multiplication commutes (i.e., the pathological case when $L$ and $\oprob(2)$ are simultaneously diagonalizable).
Thus global convex dominance in Theorem \ref{thm:main} is a useful and substantial generalization.\footnote{Le Cam deficiency is a useful way of finding the closest Blackwell dominant  matrix to $\oprob(2)$ given $\oprob(1)$; it also yields the loss (deficiency) in choosing this closest  matrix, see \cite{Rag11} for a nice discussion.
  However, this loss is impossible to compute for an arbitrary convex function such as the value function of a controlled sensing POMDP which is apriori unknown and intractable to compute.}

\subsubsection*{Examples}

{\bf Example (i)}:  Here are  examples of  observation matrices that satisfy assumptions \ref{TP2_obs},   \ref{obssc},  \ref{obsinit}, \ref{sm} implying that  integral  precision dominance and   global convex dominance in Theorem \ref{thm:main} holds. But
 Blackwell dominance does not hold. 
%
  \begin{align*}
    \text{Ex1.} \; &
                  \oprob(1) = \begin{bmatrix}
                    0.8 & 0.2 & 0 \\
                    0.1 & 0.8 & 0.1 \\
                    0 & 0.2 & 0.8 
                  \end{bmatrix}, \;
                              \oprob(2) = \begin{bmatrix}
                                0.9 & 0.1 & 0 \\ 0.1 & 0.8 & 0.1 \\ 0 & 0.15 & 0.85
                              \end{bmatrix}
\\                             
                              \text{Ex2.} \;  &
  \oprob(1) = \begin{bmatrix}
     0.44847   & 0.30706 &  0.24447 \\
   0.33443  &  0.28762 &   0.37795 \\
   0.32463 &   0.28971 &   0.38565 
  \end{bmatrix}, \\ 
  & \qquad  \oprob(2) = \begin{bmatrix}
     0.170021  &  0.410485   & 0.419494 \\
   0.106500  &  0.433559  &  0.459941 \\
   0.020739 &   0.263223 &   0.716038
 \end{bmatrix}\\
    \text{Ex 3.} \; &
                           \oprob(1) = \begin{bmatrix}
                             0.8 & 0.2 \\ 0.2 & 0.8
                           \end{bmatrix},
                                                \; \oprob(2) = \begin{bmatrix}
                                                  0.7 & 0.3 & 0 \\ 0.1 & 0.2 & 0.7
                           \end{bmatrix}, \\ & \qquad  \obspace_1 = \{1,2\}, \; \obspace_2 = \{1,2,3\}.
  \end{align*}
  Note the third example has different observation spaces for the two actions.
Interestingly, in all three examples above, $\oprob(2)$ does not Blackwell dominate $\oprob(1)$; i.e.,
$\oprob(1) \neq \oprob(2) \times L$ for stochastic matrix $L$.
\\
%
%
%
{\bf Example  (ii)}.  A consequence of 
 \cite{Jew07} is that for  symmetric $2 \times 2$ matrices $\oprob(1), \oprob(2)$, if $\oprob_{11}(1) \leq \oprob_{11}(2)$, then Blackwell dominance is equivalent to integral precision dominance \ref{obssc}. Then \ref{obsinit} automatically holds. This is easy to show, see  \cite{GP10}:  $\oprob(2) >_{B} \oprob(1)$ since $L = \oprob^{-1}(2) \oprob(1)$ is a valid stochastic matrix as can be verified by explicit symbolic computation.
\\
{\bf Example (iii). Channel Capacity}.
 Shannon  \cite{Sha58} establishes the following result in terms of channel capacity; see \cite{CT06} for a detailed exposition.
\begin{theorem}[\cite{Sha58}] If $\oprob(1) = M\,  \oprob(2)\,  L $ where $L$ and $ M$ are stochastic matrices,
then discrete memoryless channel $\oprob(1)$ 
  has a smaller Shannon capacity (conveys less information)
  than $\oprob(2)$. \label{thm:shannon}
\end{theorem}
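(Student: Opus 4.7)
The plan is to reduce this result to the data processing inequality (DPI). The factorization $\oprob(1) = M\, \oprob(2) \, L$ invites an interpretation as a concatenation of three channels: a ``pre-processor'' $M$ (mapping inputs to the intermediate alphabet of $\oprob(2)$), the channel $\oprob(2)$ itself, and a ``post-processor'' $L$ (mapping outputs of $\oprob(2)$ to the final output alphabet). Under this interpretation, any use of channel $\oprob(1)$ can be simulated by pushing the input through these three stages, and the resulting Markov chain structure is $X_1 \to X_2 \to Y_2 \to Y_1$, where $X_2$ is produced from $X_1$ via $M$, $Y_2$ from $X_2$ via $\oprob(2)$, and $Y_1$ from $Y_2$ via $L$.

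First I would formally recall the channel capacity as $C(P) = \max_{p(\cdot)} I(X;Y)$, where the conditional distribution of $Y$ given $X=i$ is row $i$ of $P$. Given any input distribution $p(x_1)$ for channel $\oprob(1)$, I would then construct the joint distribution of $(X_1,X_2,Y_2,Y_1)$ consistent with the stochastic factorization. By marginalizing out $X_2$ and $Y_2$, the conditional law of $Y_1$ given $X_1$ is precisely row $x_1$ of $M\,\oprob(2)\,L = \oprob(1)$, so $I(X_1;Y_1)$ computed in this joint model equals the mutual information for channel $\oprob(1)$ under input distribution $p(x_1)$.

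Next I would apply DPI twice along the Markov chain. Since $Y_1$ is conditionally independent of $(X_1,X_2)$ given $Y_2$, DPI gives $I(X_1;Y_1) \leq I(X_1;Y_2)$; since $X_1$ is conditionally independent of $Y_2$ given $X_2$, DPI gives $I(X_1;Y_2) \leq I(X_2;Y_2)$. The induced distribution on $X_2$ (namely $p(x_1)$ pushed through $M$) is a valid input distribution for $\oprob(2)$, hence $I(X_2;Y_2) \leq C(\oprob(2))$. Taking the supremum over $p(x_1)$ yields $C(\oprob(1)) \leq C(\oprob(2))$.

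The proof is essentially bookkeeping once the Markov chain is set up correctly; the only point requiring care is verifying the chain structure when $M$ and $L$ carry the input/output between alphabets of possibly different sizes, and confirming that the marginal on $X_2$ obtained by pushing $p(x_1)$ through $M$ is a legitimate input distribution for $\oprob(2)$. I do not anticipate a serious obstacle: this is the classical ``data processing cannot increase capacity'' argument specialized to the Blackwell-type factorization at hand, and is standard in the information-theoretic literature cited (\cite{CT06}).
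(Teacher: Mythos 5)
Your proof is correct. Note that the paper does not prove this statement at all — it is quoted as a known result of Shannon's 1958 note, with \cite{CT06} cited for the exposition — and your argument (factor the channel as the Markov chain $X_1 \to X_2 \to Y_2 \to Y_1$ induced by $M$, $\oprob(2)$, $L$, apply the data processing inequality on both the input and output sides, and take the supremum over input distributions) is precisely the standard proof found in those references, so there is nothing to reconcile.
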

Blackwell dominance $\oprob(1) = \oprob(2)\, L $  is a special case of Theorem \ref{thm:shannon} when $M = I$.
However,
if  the multiplication order is reversed, i.e., suppose $\oprob(1) = M \,\oprob(2)$ where $M$ is a stochastic matrix, then even though $\oprob(1)$ is still more ``noisy'' (conveys less information according to Theorem \ref{thm:shannon}) than $\oprob(2)$, Blackwell dominance does not  hold.

Motivated by Theorem \ref{thm:shannon}, a natural question is: Does integral precision dominance and hence
Theorem~ \ref{thm:main} hold  for examples where $\oprob(1) = M \, \oprob(2)$ where $M$ is a stochastic matrix?
As an  example  consider
\begin{multline*}
 \hspace{-0.4cm}  \statedim=3,  \obsdim=3,\actiondim=2,\;
  \oprob(1) = \begin{bmatrix}
0.3229  &  0.4703   &  0.2068\\
    	0.2237 &   0.4902  &  0.2861\\
    	0.1587  &  0.4620 &   0.3793 \end{bmatrix},\\
      \oprob(2) = \begin{bmatrix}
0.4387   &  0.5190   &  0.0423 \\
    	0.2455  &   0.6625 &   0.0920 \\
    	0.0615   &  0.2829 &   0.6556
              \end{bmatrix}
            \end{multline*}
            Then there exists a stochastic matrix $M$ such that $\oprob(1) = M \, \oprob(2)$ but  Blackwell dominance does not hold since  $\oprob(1) \neq \oprob(2)\, L$ for stochastic matrix $L$.
  But \ref{TP2_obs}, single crossing condition \ref{obssc},  boundary  condition \ref{obsinit}, and signed ratio monotonicity \ref{sm} hold  for this example; therefore  Theorem \ref{thm:main} holds.

Further examples involving hierarchical sensing and word-of-mouth social learning are discussed in Section \ref{sec:pomdp}.
  
 {\bf  Summary}: This subsection discussed  several  examples where integral precision dominance and global convex dominance of the conditional mean holds but Blackwell dominance does not hold. The two specific cases we discussed  are:
  \begin{compactenum}
  \item $\oprob(1) =  M\,  \oprob(2)\,  L $ where $L$ and $ M$ are stochastic matrices,
  \item Blackwell dominance $\oprob(2) >_B \oprob(1)$ does not  imply global Blackwell dominance
$\oprob^k(2) >_B \oprob^k(1)$. In comparison, Theorem \ref{thm:main} gives conditions for
    which global convex dominance holds.
  \end{compactenum}
  
 \subsection{Example 2. Sensing in Additive Noise with Log-concave density}
 \label{sec:additive}
We now discuss  how Theorem \ref{thm:main} and its corollaries apply
 to  sensing in 
 additive noise, where the additive noise has a log-concave density. The main point is that for additive noise with log-concave density,
 higher differential entropy or variance of the additive noise is a necessary condition for the MSE of the Bayesian localization and filtered estimate to be higher. (Sec.\ref{sec:power} below shows that if the noise does not have a log-concave density, then
 higher differential entropy or variance is not a necessary condition).
 
 In the additive noise setting,  the sensor observation models are $\Obsa_k = \State_k + \Snoisea_k$,
 $\action \in \{1,2\}$. The additive  noise $\Snoisea_k$ is independent and identically distributed with a log-concave pdf   $p_\Snoise(\cdot|\action)$.
Recall \cite{BB05} that a log-concave density has the form $\pdf_\Snoise(\snoise) = \exp( \fun(\snoise))$ where $\fun$ is a concave function of $\snoise$.
There are numerous examples of log-concave densities:
normal exponential, uniform, Gamma (with shape parameter $\alpha>1$),
Laplace, logistic, Chi, Chi-squared, etc.

 We assume for $\action\in \{1,2\}$ that the  density $\pdf_\Snoise(\cdot|\action)$ has either support on $\reals$ (then \ref{obsinit} is not required) in which case $\oprob_{\state \obsa}(\action) = p_\Snoise(\obsa- \state|\action) $; 
 or $\pdf_\Snoise(\cdot|\action)$ has support on $\reals_+$ in which case
 $\oprob_{\state \obsa}(\action) = p_\Snoise(\obsa- \state|\action) \,I(\obsa\geq \state)$
 (then \ref{obsinit} holds straightforwardly; e.g. if $\state \in \reals_+$, then $\obsl_\action = 0$ in \ref{obsinit} and both sides of the first inequality in \ref{obsinit} are zero.)

 
 The following result
 characterizes the assumptions of Theorem \ref{thm:main}
 for additive noise models with a log-concave density.

 \begin{theorem}
   Consider  the additive noise sensing model  $\Obsa_k = \State_k + \Snoisea_k$, $\action\in \{1,2\}$ where the additive  noise $\Snoisea_k$ is independent and identically distributed
    with  pdf   $\pdf_\Snoise(\cdot|\action)$ and cdf $\cdf_\Snoise(\cdot|\action)$.
   Then:
   \begin{compactenum}
   \item \ref{TP2_obs} holds iff  $\pdf_\Snoise(\cdot|1)$
     and $\pdf_\Snoise(\cdot|2)$ are  log-concave densities.
   \item
     \ref{obssc} holds iff $\cdf_\Snoise(\cdot|1) >_D \cdf_\Snoise(\cdot|2) $ holds where $>_D$ denotes the dispersive stochastic order.\footnote{Cdf  $G$ dominates cdf $F$ wrt dispersive order, denoted $G>_D F$,  if $F^{-1}(\beta) - F^{-1}(\alpha) \leq G^{-1}(\beta) - G^{-1}(\alpha)$ for $0 < \alpha < \beta < 1$.}
   \item  (\ref{eq:sma}) or equivalently \ref{sm} holds  if  $\pdf_\Snoise(\cdot|1)$
     and $\pdf_\Snoise(\cdot|2)$ are  log-concave densities and
     $\cdf_\Snoise(\cdot|1) >_D \cdf_\Snoise(\cdot|2) $, i.e., \ref{obssc} holds.
      \item $\pdf_\Snoise(\cdot|2)$  having smaller  differential entropy than $\pdf_\Snoise(\cdot|1)$  is
     a necessary condition for \ref{obssc} to hold. Also  $\pdf_\Snoise(\cdot|2)$  having smaller  variance  than $\pdf_\Snoise(\cdot|1)$  is
     a necessary condition for \ref{obssc} to hold.
   \end{compactenum}
   Therefore for log-concave additive noise  $\pdf_\Snoise(\cdot|1)$ and $\pdf_\Snoise(\cdot|2)$, if  $\cdf_\Snoise(\cdot|1) >_D \cdf_\Snoise(\cdot|2) $, then Theorem \ref{thm:main}  and Corollaries \ref{cor:main2}, \ref{cor:local} hold.
     \label{thm:disp}
   \end{theorem}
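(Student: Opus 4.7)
The plan is to dispose of the four parts of the theorem in sequence, reducing each to classical facts about log-concave densities and the dispersive stochastic order. For Part (1), I rewrite the TP2 inequality for the additive kernel $\oprob_{xy}(\action) = p_\Snoise(y - x|\action)$: setting $\alpha = y_1 - x_1, \delta = y_2 - x_2, \beta = y_1 - x_2, \gamma = y_2 - x_1$ for $x_1 < x_2$, $y_1 < y_2$, one checks $\alpha + \delta = \beta + \gamma$ and $\beta < \alpha, \delta < \gamma$, so TP2 becomes $p_\Snoise(\alpha)p_\Snoise(\delta) \geq p_\Snoise(\beta)p_\Snoise(\gamma)$ under this midpoint constraint, which is exactly midpoint (hence, by measurability, full) log-concavity of $p_\Snoise(\cdot|\action)$. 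For Part (2), the additive structure gives $\cdf_\action(y|x) = \cdf_\Snoise(y - x|\action)$, so \ref{obssc} asserts single crossing in $x$ of $\cdf_\Snoise(\bar y - x|1) - \cdf_\Snoise(y - x|2)$ for every $\bar y, y$; the substitution $u = \bar y - x$, $c = \bar y - y$ converts this into $\cdf_\Snoise(u|1) - \cdf_\Snoise(u - c|2)$ being single crossing (from $+$ to $-$ as $u$ increases, since $u$ decreases in $x$) for every $c \in \reals$, which is the classical characterization of $\cdf_\Snoise(\cdot|1) >_D \cdf_\Snoise(\cdot|2)$ (Shaked--Shanthikumar, Ch.~3.B).

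For Part (3), I verify the equivalent condition \ref{sm}. By Pr\'ekopa's theorem, log-concavity of the density $p_\Snoise(\cdot|\action)$ implies log-concavity of the survival function $\bcdf_\Snoise(\cdot|\action)$, so $\gamma_\action(u) := \log \bcdf_\Snoise(u|\action)$ is concave and the hazard rate $h_\action = -\gamma_\action'$ is non-decreasing. In the additive case $\log \bcdf_\action(y|x) = \gamma_\action(y - x)$, and each signed-ratio quantity in \ref{sm} takes the form $[\gamma_1(y - x) - \gamma_2(z - x)]/[\gamma_1(\bar y - x) - \gamma_2(\bar z - x)]$. Differentiating in $x$, the desired monotonicity of this ratio reduces to an inequality among hazard-rate evaluations at shifted arguments, which follows by combining non-decreasingness of $h_1, h_2$ with the dispersive inequality of Part (2). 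This is the principal obstacle of the proof: the tempting alternative of invoking the probabilistic identity $\prod_l \bcdf_\Snoise(y_l - x|\action) = \P(\max_l(y_l - \Snoise^{(\action)}_l) \leq x)$ together with a max-preservation theorem for $>_D$ is hampered by the heterogeneous shifts $\{y_l\}, \{\bar y_l\}$, which preclude direct use of the standard iid order-statistic preservation results for dispersive order; so direct verification of \ref{sm} via hazard-rate manipulation is the cleaner route.

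For Part (4), I use the quantile representations $\var(\Snoise^{(\action)}) = \tfrac{1}{2}\int_0^1\!\int_0^1 (\cdf_\Snoise^{-1}(u|\action) - \cdf_\Snoise^{-1}(v|\action))^2\,du\,dv$ and $h(\Snoise^{(\action)}) = \int_0^1 \log(\cdf_\Snoise^{-1})'(u|\action)\,du$. The dispersive order $\cdf_\Snoise(\cdot|1) >_D \cdf_\Snoise(\cdot|2)$ is equivalent to the pointwise inequalities $(\cdf_\Snoise^{-1})'(u|1) \geq (\cdf_\Snoise^{-1})'(u|2)$ a.e.\ and $|\cdf_\Snoise^{-1}(u|1) - \cdf_\Snoise^{-1}(v|1)| \geq |\cdf_\Snoise^{-1}(u|2) - \cdf_\Snoise^{-1}(v|2)|$ for all $u, v$; integrating these yields $\var(\Snoise^{(1)}) \geq \var(\Snoise^{(2)})$ and $h(\Snoise^{(1)}) \geq h(\Snoise^{(2)})$, establishing the asserted necessity. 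Combining Parts (1)--(3) with Theorem~\ref{thm:main} and Corollaries~\ref{cor:main2}, \ref{cor:local} then yields the final MSE-dominance conclusion.
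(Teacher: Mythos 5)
Parts (1), (2) and (4) of your proposal are correct, and in fact more self-contained than the paper, which disposes of all three by citation (to \cite{SS07}, \cite{Miz06} and \cite{MS02} respectively): your four-point rearrangement identifying TP2 of the location kernel with log-concavity of $\pdf_\Snoise$, your substitution $u=\bar y-x$ reducing \ref{obssc} to the shift characterization of $>_D$, and your quantile-function representations of variance and differential entropy combined with the definition of $>_D$ are all sound and constitute legitimate direct proofs.

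The gap is in Part (3). The sentence ``differentiating in $x$, the desired monotonicity of this ratio reduces to an inequality among hazard-rate evaluations at shifted arguments, which follows by combining non-decreasingness of $h_1,h_2$ with the dispersive inequality of Part (2)'' is the entire content of the step, and it is not a proof. Writing $\gamma_\action(w)=\log\bcdf_\Snoise(w|\action)$, $N(x)=\gamma_1(y-x)-\gamma_2(z-x)$ and $D(x)=\gamma_1(\bar y-x)-\gamma_2(\bar z-x)$ with $N<0<D$, monotonicity of $N/D$ requires $N'D\ge ND'$ where $N'(x)=h_1(y-x)-h_2(z-x)$ and $D'(x)=h_1(\bar y-x)-h_2(\bar z-x)$; because the arguments $y-x$, $z-x$, $\bar y-x$, $\bar z-x$ are mutually unrelated, the signs of $N'$ and $D'$ are \emph{not} controlled by monotonicity of the individual hazard rates plus the dispersive order. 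The ingredient you never establish --- and the one the paper's proof rests on --- is pointwise hazard-rate dominance $h_2\ge h_1$, i.e.\ $\bcdf_\Snoise(\cdot|2)/\bcdf_\Snoise(\cdot|1)$ decreasing; this follows from $\cdf_\Snoise(\cdot|1)>_D\cdf_\Snoise(\cdot|2)$ \emph{together with} log-concavity of the survival functions (a specific theorem, \cite[p.~156]{SS07}, with log-concavity of the survival functions supplied by \cite{BB05}). With that lemma the paper bypasses \ref{sm} altogether: it argues that each summand $\log\bcdf_\Snoise(z_t-x|2)-\log\bcdf_\Snoise(\bar z_t-x|1)$ is increasing in $x$, so the sum is increasing, hence in $\lSC$, giving (\ref{eq:sma}) directly. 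Be aware that even this monotonicity claim is delicate at mismatched arguments: the derivative is $h_2(z_t-x)-h_1(\bar z_t-x)$, which the two lemmas control only when $z_t-x\ge\bar z_t-x$ (e.g.\ for Gaussian noise one can drive $h_2(z_t-x)\to 0$ while $h_1(\bar z_t-x)$ stays bounded away from zero), so the shifted-argument difficulty you wave away is precisely the hard point of the theorem. You should either prove the hazard-rate dominance lemma and carry out the ratio computation in full, or switch to the paper's route and address the mismatched-argument case explicitly; as written, Part (3) is an assertion, not an argument.
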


   \begin{proof}
Statement 1 is proved in \cite[Theorem 1.C.52 (iii)]{SS07}.
 Statement 2   is proved
in \cite[Remark 3]{Miz06}. Statement 4 follows from \cite[Theorems 1.5.42 and 1.7.8]{MS02}.

Statement 3: Since the pdfs are log-concave, their
complementary cdfs $\bcdf_\Snoise(\snoise|1)$ and
$\bcdf_\Snoise(\snoise|2)$ are log-concave;
see \cite[Theorem 2(i)]{BB05}. Next from \cite[Theorem B 20. pp156]{SS07}, $\cdf_\Snoise(\cdot|1) >_D \cdf_\Snoise(\cdot|2) $ and
the complementary cdfs being log-concave implies that hazard rate dominance
$\cdf_\Snoise(\cdot|1) >_H \cdf_\Snoise(\cdot|2) $ holds, i.e.,
$\bcdf_\Snoise(\snoise|2) / \bcdf_\Snoise(\snoise|1)$ is decreasing in $\snoise$.
This implies $\bcdf_\Snoise(\bar{\obs}-\state|2) / \bcdf_\Snoise(\obs-\state|1)$ is increasing in $\state$ for all $\bar{\obs} \in \obspace_2$ and $\obs \in\obspace_1$. Therefore, $\log \bcdf_\Snoise(\bar{\obs}-\state|2) - \log
\bcdf_\Snoise({\obs}-\state|1) $ is increasing in $\state$ which in turn
implies that $\sum_{t=1}^k \log \bcdf_\Snoise(\bar{\obs}_t-\state|2) - \log
\bcdf_\Snoise({\obs}_t-\state|1) $ is increasing $\state$.
Therefore $\log \prod_{t=1}^k \bcdf_2(\bar{\obs}_t|\state) -
\log \prod_{t=1}^k \bcdf_1(\bar{\obs}_t|\state)$ is increasing in $\state$
which implies
 $\log \prod_{t=1}^k \bcdf_2(\bar{\obs}_t|\state) -
\log \prod_{t=1}^k \bcdf_1(\bar{\obs}_t|\state) \in \lSC$.
Finally,
$\fun_1(x) - \fun_2(x) \in \lSC$ implies that $\fun_1(f(x)) -  \fun_2(f(x)) \in \lSC$ for any monotone function\footnote{This is the well known ordinal property of single crossing \cite{Ami05}.} $f$.
Thus (\ref{eq:sma}) holds.
\end{proof}

Theorem \ref{thm:disp}
gives a complete characterization of global convex dominance in additive noise models.
It confirms the intuition that additive noise with higher differential entropy (or variance)  results in larger MSE for Bayesian localization and optimal filtering.  More precisely, higher differential entropy (or variance) is a necessary condition for \ref{obssc}; indeed \ref{obssc} (dispersion dominance)  is a stronger condition than dominance of differential entropy. 

 Examples  of log-concave densities that satisfy \ref{TP2_obs}, dispersive dominance \ref{obssc} and therefore \ref{sm} include:
 \begin{compactenum}
 \item
   Normal cdf:  $\cdf_\Snoise(\snoise|\action) = N(0,\sigma_\action^2)$ with $\sigma_1^2 \geq \sigma_2^2$, $\snoise \in \reals$.
 \item Exponential cdf: $\cdf_\Snoise(\snoise|\action) = 1 - \exp(-\lambda_\action \snoise) $, with rate parameter $\lambda_2 \geq \lambda_1$,
   $\snoise \in \reals_+$.
 \item Gamma distribution \cite{Sha82}: $\cdf_\Snoise(\snoise|\action) =\frac{1}{ \Gamma(\alpha_\action)} \snoise^{\alpha_\action-1} e^{-\snoise} $,  $\snoise \in \reals_+$ with shape parameter $\alpha_1 > \alpha_2\geq 1$.
   
\end{compactenum}
For these  examples Theorem \ref{thm:main} and Corollaries \ref{cor:main2}, \ref{cor:local} hold.
Also for these examples, \ref{obssc} is equivalent to $\pdf_\Snoise(\cdot|2)$  having smaller  differential entropy (or variance) than $\pdf_\Snoise(\cdot|1)$; that is Statement 4 of Theorem \ref{thm:disp}  is necessary and sufficient.

\subsection{Example 3. Additive Sensing. Power Law  vs Exponential Noise in Social Networks}
\label{sec:power}
Motivated by sampling social networks, we now discuss an example
where instead of the TP2 condition \ref{TP2_obs}, a reverse TP2 condition holds (due to  log convex density additive noise).
The main point below is that regardless of whether  the  power law noise has a smaller variance than exponential noise, the MSE is always larger due to convex dominance.

Suppose we wish to compare the MSE of the conditional mean estimates when
the additive noise $\pdf_\Snoise(\snoise|1) $ is a log convex density that decays according to a power law
while  $\pdf_\Snoise(\snoise|2) $ is
an exponential  density (log-concave). That is:
  \begin{align*}
\pdf_\Snoise(\snoise|1)&= (\alpha-1)\, (1+\snoise)^{-\alpha}, \;  \\
\cdf_\Snoise(\snoise|1)&= 1 - (\snoise + 1)^{1 - \alpha},\;
\alpha > 1,  \quad \snoise \in \reals_+\\
\pdf_\Snoise(\snoise|2)&= \lambda \exp(- \lambda \snoise) ,\; \\
\cdf_\Snoise(\snoise|2)&= 1- \exp(-\lambda\snoise), \; \lambda >0,\;
\quad \snoise \in \reals_+
\end{align*}
For example, the empirical degree distribution (number of neighbors of per node normalized by the total number of nodes) of several social media networks such as Twitter \cite{KLP10}  have a power law with exponent $\alpha \in [2,3]$; while
social health networks in epidemiology have an exponential degree distribution. Based on observations obtained by  sampling individuals in the network and asking each such individual how many friends it has (degree), a natural question is: how accurate is the Bayesian conditional mean estimate for the average degree of the network?

\begin{theorem} \label{thm:power}
  Consider the additive noise model $\Obsa_k = \State_k + \Snoisea_k$,
  $\action  \in \{1,2\}$ where the additive  noise $\Snoisea_k$ is independent and identically distributed
  with  pdf   $\pdf_\Snoise(\cdot|\action)$.  Then  the conclusions of Theorem \ref{thm:main} hold for the following cases:
  \begin{compactenum}
\item Power law density
  $\pdf_\Snoise(\snoise|1) $ and exponential density $\pdf_\Snoise(\snoise|2) $
\item Power law densities  $\pdf_\Snoise(\snoise|1) $ and $\pdf_\Snoise(\snoise|2) $ with power law coefficients $\alpha_2 > \alpha_1$.
\end{compactenum}
\end{theorem}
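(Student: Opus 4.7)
My plan is to verify the four hypotheses \ref{TP2_obs}--\ref{sm} of Theorem~\ref{thm:main} for each of the two cases, in close parallel with the proof of Theorem~\ref{thm:disp}, while accommodating the log-convex (rather than log-concave) nature of the power-law density.

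I would begin with the three comparatively easy conditions. For \ref{TP2_obs}: log-concavity of $\pdf_\Snoise(\cdot|\action)$ is equivalent to TP2 of $\oprob(\action)$ by \cite[Thm.~1.C.52]{SS07}; the power-law density is log-convex, so the corresponding observation kernel is reverse-TP2. As highlighted in \S\ref{sec:discuss}, \ref{TP2_obs} enters the proof of Theorem~\ref{thm:main} only to force $\cm_\action(y,\belief)$ to be monotone in $y$, and reverse-TP2 gives the opposite monotonicity; I would absorb this by reparametrising $\tilde y := -y$ on the observation alphabet (equivalently, relabelling sensor indices so the single-crossing inequalities point the correct way), after which the standard framework applies. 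For \ref{obssc}: I would work directly with the explicit complementary cdfs $\bcdf_1(y|\state)=(y-\state+1)^{1-\alpha}$ and $\bcdf_2(y|\state)=e^{-\lambda(y-\state)}$ in case (i) or $\bcdf_2(y|\state)=(y-\state+1)^{1-\alpha_2}$ in case (ii), checking that $\bcdf_2(y|\state)-\bcdf_1(\bar y|\state)$ has a unique sign change in $\state$ on the admissible range $\{\state\le\min(y,\bar y)\}$ by inspection of the boundary limits $\state\to-\infty$ (where both survival terms vanish but at different polynomial vs.\ exponential rates) and $\state\uparrow\min(y,\bar y)$ (where exactly one of the two survival functions approaches $1$). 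Condition \ref{obsinit} is immediate: the support $\reals_+$ has no finite upper boundary, and the indicator $\mathbb{I}(y\ge \state)$ encoded in the densities makes both sides of the lower-boundary inequality vanish simultaneously whenever $\bar\state\ge\state$.

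The principal obstacle is \ref{sm}, equivalently (\ref{eq:sma}). Following the strategy of Statement~3 in Theorem~\ref{thm:disp}, I would reduce the question to a hazard-rate comparison by differentiating: $\partial_\state[\log\bcdf_2(y|\state)-\log\bcdf_1(\bar y|\state)] = h_2(y-\state)-h_1(\bar y-\state)$, where the relevant hazard rates are $h_{\mathrm{exp}}(w)=\lambda$ (constant) and $h_{\mathrm{pow},\alpha}(w)=(\alpha-1)/(w+1)$ (strictly decreasing). In each of cases (i) and (ii) this makes $\state\mapsto\log\bcdf_2(y|\state)-\log\bcdf_1(\bar y|\state)$ either convex or concave in $\state$ on the admissible range. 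Since curvature alone does not imply single crossing (a concave function can cross zero twice), the delicate step I anticipate is combining the curvature information with the boundary limits already established for \ref{obssc} to pin down exactly one sign change, essentially by arguing that $h_2-h_1$ has constant sign on the relevant range so that the log-difference is in fact monotone, not merely convex. Once single crossing is established for a single pair $(y,\bar y)$, the extension to the product form in (\ref{eq:sma}) is routine: summation preserves monotonicity of the derivative, and the ordinal property of single crossing \cite{Ami05} propagates through the logarithm, delivering \ref{sm} and, via Theorem~\ref{thm:main}, the conclusion of Theorem~\ref{thm:power}.
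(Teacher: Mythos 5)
You correctly diagnose the obstacle---the power-law density is log-convex, so $\oprob(1)$ is reverse TP2---and your proposed remedy (reverse the observation axis for the reverse-TP2 sensor so that the standard machinery applies) is in principle exactly the right move. The gap is that you do not carry that reversal through to the conditions you then verify. Under reverse TP2 the conditional mean $\cm_1(\obs,\belief)$ is \emph{decreasing} in $\obs$, so the sets $\obspace^{k,\lambda}_1$ in (\ref{eq:olam}) are lower sets and the integrals in (\ref{eq:lastline}) produce the plain cdf $\cdf_1$, not the complementary cdf $\bcdf_1$. The condition that must be shown single crossing in $\state$ is therefore $\prod_{t}\bcdf_2(\obs_t|\state)-\prod_{t}\cdf_1(\bar{\obs}_t|\state)$ in case (1) (and $\prod_t\cdf_2-\prod_t\cdf_1$ in case (2), where both kernels are reverse TP2), which is what the paper's proof establishes. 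You instead propose to verify the unmodified \ref{obssc}/\ref{sm} with $\bcdf_1(\bar{\obs}|\state)=(\bar{\obs}-\state+1)^{1-\alpha}$, and that condition is false in general.

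Concretely, for the exponential/power-law pair your own derivative computation gives $\partial_\state[\log\bcdf_2(\obs|\state)-\log\bcdf_1(\bar{\obs}|\state)]=\lambda-(\alpha-1)/(\bar{\obs}-\state+1)$, which is positive for $\state$ very negative and negative near $\state=\bar{\obs}$ whenever $\alpha-1>\lambda$; so the log-difference is \emph{not} monotone and the ``constant sign of $h_2-h_1$'' step you hope for cannot be closed. Worse, for $\bar{\obs}<\obs$ the difference $\bcdf_2(\obs|\cdot)-\bcdf_1(\bar{\obs}|\cdot)$ is negative as $\state\to-\infty$ (the exponential tail decays faster than the polynomial tail), can be positive at the interior maximum of the log-ratio, and is again negative at $\state=\bar{\obs}$ where it equals $e^{-\lambda(\obs-\bar{\obs})}-1<0$: a double crossing. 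The paper's proof avoids all of this: with the corrected (reverse-TP2) condition one only needs $\bcdf_2(\bar{\obs}|\state)/\cdf_1(\obs|\state)=e^{-\lambda(\bar{\obs}-\state)}\big/\big(1-(\obs-\state+1)^{1-\alpha}\big)$ increasing in $\state$, which is immediate (numerator increasing, denominator decreasing), and monotonicity of this ratio already yields the product/global version for all $k$ at once, so \ref{sm} never has to be checked separately. Your treatment of \ref{TP2_obs} (identifying reverse TP2) and of \ref{obsinit} is fine; the fix is to re-express \ref{obssc} and (\ref{eq:sma}) in terms of $\cdf_1$ (and $\cdf_2$ in case (2)) before verifying them.
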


Theorem \ref{thm:power}(1)  is interesting because it asserts convex dominance between two different types of  noise densities. It says that  the conditional mean  estimate in  additive  exponential  noise is always more accurate than that in power law noise. Interestingly, the variance
for a power law density can be smaller than that of an exponential density; for power law exponent $\alpha=3.1$, the variance is 17.35 which is smaller than the variance of an exponential for $\lambda<0.24$; yet the MSE of the conditional mean is larger in power law noise.  (Note for $\alpha \leq 3$, the power law variance
is not finite). Theorem \ref{thm:power}(2) is intuitive; a larger power law implies the density decays faster to zero; and therefore the MSE  is smaller.

\begin{proof} Statement (1):
  \ref{TP2_obs} holds for the observation likelihood $\oprob(2)$,  but \ref{TP2_obs} does not hold for  $\oprob(1)$. Instead $\oprob(1)$ satisfies a reverse TP2 ordering: $\oprob_{\state}(1) \gr \oprob_{\bstate}(1)$, $\state < \bstate$. Indeed,
  $\oprob_{\state\obs}(1)/\oprob_{\bstate,\obs} = (1+\obs-\bstate)^\alpha/(1+\obs-\state)^\alpha$ is increasing in $\obs$ for $\state  < \bstate$.
Then using a similar proof to Theorem \ref{thm:main}, global convex dominance holds if (recall $\lSC$ is defined in (\ref{eq:single_crossing})):
$$ \bcdf_2(\obs_1|\state) \cdots \bcdf_2(\obs_k| \state) -
\cdf_1(\bar{\obs}_1|\state) \cdots \cdf_1(\bar{\obs}_k| \state) \in \lSC, \; \state \in \statespace.$$
A similar proof to Theorem \ref{thm:disp} shows that the above condition holds because
$$\frac{\bcdf_2(\bar{\obs}|\state)}{ \cdf_1(\obs|\state)} = \frac{\bcdf_\Snoise(\bar{\obs}-\state|2)}{ \cdf_\Snoise(\obs-\state|1)}
= \frac{\exp(\lambda(x-\bar{\obs}))}{1 -  (\obs-\state-1)^{1-\alpha}}, \quad
\alpha > 1
$$
is increasing in $\state$ for all $\bar{\obs} > \state$ and
$\obs  > \state$.\\
Statement (2): Since $\oprob(1)$ an $\oprob(2)$ are reverse TP2, the global convex dominance condition becomes
$\cdf_2(\obs_1|\state) \cdots \cdf_2(\obs_k| \state) -
\cdf_1(\bar{\obs}_1|\state) \cdots \cdf_1(\bar{\obs}_k| \state) \in \lSC$. This holds because $(1 - (\bar{\obs} - \state + 1)^{1-\alpha_2})/
(1 - (\obs - \state + 1)^{1-\alpha_1})$  is increasing in $x$ for $\obs,\bar{\obs}>x$.
\end{proof}


  \subsection{Single crossing in conditional mean and Global Convex Dominance of  HMM filter}
  \label{sec:globalhmm}
  So far we used the single crossing  of the conditional distributions
\ref{obssc},  \ref{sm},
  to establish convex dominance.
We conclude this section by discussing an alternative condition based on  an ingenious result from \cite{Den10}; it uses single crossing of the conditional mean to establish global convex dominance of the conditional mean; but the conditions are computationally expensive to verify.

\begin{proposition}[{\cite[Proposition 2.1]{Den10}}]
  Suppose  $\cm_\action(\obs,\belief)$, $\action \in \{1,2\}$ is increasing in $\obs$ and $\cm_2(\obs,\belief) - \cm_1(\obs,\belief) \in\lSC$ in $\obs$.  Then convex dominance holds
  for the conditional means. \label{prop:denuit}
  (Recall $\lSC$ is defined in  (\ref{eq:single_crossing})).
\end{proposition}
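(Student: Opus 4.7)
Write $m_\action := \cm_\action(\Obsa, \belief)$ for $\action \in \{1,2\}$. My plan is to establish the convex dominance $m_1 <_{cx} m_2$ by combining equality of means with the Karlin--Novikoff stop-loss cut criterion. The tower property gives $\E[m_\action] = \E[\E[\State \mid \Obsa]] = \E[\State]$, independent of the sensor, so the two means coincide; under equal means, convex dominance reduces to verifying the stop-loss inequalities $\E[(m_1 - t)_+] \leq \E[(m_2 - t)_+]$ for every $t \in \reals$.

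To bring in the hypothesis, I would use the monotonicity of $\cm_\action(\cdot,\belief)$ in $\obs$ to define the threshold $y_\action(t) := \cm_\action^{-1}(t,\belief)$, the unique observation at which the posterior mean equals $t$. The single-crossing of $\cm_2 - \cm_1$ produces a crossing point $y^\star$ with $\cm_1(y^\star,\belief) = \cm_2(y^\star,\belief) =: z^\star$; combined with monotonicity, this forces $y_2(t) \leq y_1(t)$ for $t \geq z^\star$ and the reverse for $t \leq z^\star$. Using Bayes and Fubini, the stop-loss functional admits the representation
\[
\E[(m_\action - t)_+] \;=\; \int (\state - t)\, \belief(\state)\, \prob\bigl(\Obsa > y_\action(t) \bigm| \State = \state\bigr)\, d\state,
\]
so the stop-loss difference becomes a single integral against the common weight $(\state-t)\belief(\state)$.

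The main obstacle is signing this integral, equivalently establishing the cut criterion for the cdfs $F_{m_1}, F_{m_2}$, namely that $F_{m_1} - F_{m_2}$ changes sign at most once near $z^\star$. The two conditional tail probabilities are monotone increasing in $\state$ via the TP2/MLR structure implicit in the monotonicity of $\cm_\action$, but they are evaluated at unrelated thresholds $y_1(t), y_2(t)$ belonging to potentially different observation spaces. The delicate step, and the ingenious part of Denuit's argument, is to leverage the Bayesian identity $\cm_\action(\obs,\belief)\, \filterd(\belief,\obs,\action) = \int \state\, \oprob_{\state\obs}(\action)\,\belief(\state)\,d\state$ together with the defining property $\cm_\action(y_\action(t),\belief) = t$ (so the posterior under observation $y_\action(t)$ has mean exactly $t$), thereby transferring the single-crossing property from the observation variable $\obs$, where the two sensors live on incomparable spaces, to the scalar posterior-mean variable, where a global comparison becomes tractable. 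Once the cut criterion holds, it combines with equal means to yield $m_1 <_{cx} m_2$ and thus the claimed convex dominance of the conditional means.
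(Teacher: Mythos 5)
The paper itself gives no proof of this proposition---it is imported verbatim from Denuit's paper---so your attempt can only be judged on its own terms. Your preliminary reductions are all correct and are the natural ones: equality of means via the tower property, the reduction of convex dominance to the stop-loss inequalities, the threshold inequalities $y_2(t)\leq y_1(t)$ for $t\geq z^\star$ (and the reverse below $z^\star$) from monotonicity plus single crossing, and the representation $\E[(\cm_\action-t)_+]=\int(\state-t)\belief(\state)\,\prob(\Obsa>y_\action(t)\mid\State=\state)\,d\state$. But the argument stops exactly where the proposition begins. The sentence ``the ingenious part of Denuit's argument is to leverage the Bayesian identity \dots thereby transferring the single-crossing property \dots Once the cut criterion holds'' is an appeal to the conclusion, not a derivation: nothing you have written controls the sign of $\int(\state-t)\belief(\state)\,[\bar F_2(y_2(t)\mid\state)-\bar F_1(y_1(t)\mid\state)]\,d\state$, because the two tail probabilities are taken under different conditional laws at unrelated thresholds. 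The cut criterion for $F_{\cm_1}$ versus $F_{\cm_2}$ is precisely the content of the proposition and is left unproved.

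Moreover the gap cannot be closed from the two stated hypotheses alone, since they say nothing about the marginal laws $\filterd(\belief,\cdot,1)$ versus $\filterd(\belief,\cdot,2)$ of the two observation processes. Concretely, take $\statespace=\{1,2\}$ with uniform prior, levels $\level=(0,1)^\p$, common observation space $\{1,2,3\}$, and
\begin{equation*}
\oprob(1)=\begin{pmatrix}0.8 & 0.1 & 0.1\\ 0.1 & 0.1 & 0.8\end{pmatrix},\qquad
\oprob(2)=\begin{pmatrix}0.19 & 0.80 & 0.01\\ 0.01 & 0.80 & 0.19\end{pmatrix}.
\end{equation*}
Then $\cm_1=(1/9,\,1/2,\,8/9)$ with observation marginal $(0.45,0.1,0.45)$ and $\cm_2=(0.05,\,0.5,\,0.95)$ with marginal $(0.1,0.8,0.1)$; both are increasing, $\cm_2-\cm_1=(-0.061,\,0,\,+0.061)\in\lSC$, and both have mean $1/2$, yet $\var(\cm_1)\approx 0.136 > \var(\cm_2)=0.0405$, so $\cm_1\leq_{cx}\cm_2$ fails for $\fun(x)=x^2$. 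The missing ingredient is a hypothesis tying the two observation marginals together---in Denuit's setting the two signals being compared carry the same marginal law, which is exactly what makes your threshold inequalities close the argument: with a common law $F_S$ one gets $F_{\cm_2}(t)-F_{\cm_1}(t)=F_S(y_2(t))-F_S(y_1(t))$, which is nonnegative for $t\leq z^\star$ and nonpositive for $t\geq z^\star$, and Karlin--Novikoff with equal means then gives the convex order. Without that (or a substitute such as the distributional conditions \ref{obssc}, \ref{sm} used in Theorem \ref{thm:main}), the proof cannot be completed, and you should also flag that the proposition as transcribed here is missing that hypothesis.
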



We now use   Proposition \ref{prop:denuit}  to establish global convex dominance for  the HMM filter (\ref{eq:information_state}); but the sufficient conditions given below are expensive to check  and only tractable for finite observation spaces $\obspace_1$ and $ \obspace_2$.

Note that $\obs_{1:k} \in \obspace^k$ with $\obsdim^k$ elements. Label the $\obsdim^k$ elements lexicographically and denote them as 
 $z\in \{1,2,\ldots, \obsdim^k\}$.
For $u \in \{1,2\}$ and each $i,j \in \statespace$, define the $\statedim\times \statedim$ matrices
 \begin{equation}
   \begin{split}
     \mato_\action(\obs_{1:k}) &= \prod_{t=1}^k  \tp\,\oprob_{\obs_t}(\action), \\
\matsii_\action(i,j,z,\bz) &= \mato_\action(\bz)\, (e_je_i^\p - e_i e_j^\p)\, \mato^\p_\action(z) \\ 
& \qquad + \mato_\action(z)\, (e_ie_j^\p - e_j e_i^\p)\, \mato^\p_\action(\bz)
,
\\
\matii(i,j,z) &= \mato_2(z)\,(e_je_i^\p - e_i e_j^\p) \, \mato^\p_1(z) \\
& \qquad + \mato_1(z)\, (e_ie_j^\p - e_j e_i^\p)\, \mato^\p_2(z)
\end{split}
\end{equation}

 We introduce the following assumptions for global convex dominance:
 \begin{enumerate}[label=(A{\arabic*}),resume]
\item\label{self_dom}  The matrices $\matsii_\action(i,j,z,\bz)$ are elementwise positive for all $\bz > z$, $j> i$, $i,j \in \statespace$.
\item \label{obssc2} The matrices $\matii(i,j,z)$ are elementwise negative
  for $z< z^*$ and positive for $z\geq  z^*$,   for all $j> i$, $i,j \in \statespace$, for some  $z^* \in \{1,\ldots,\obsdim^k\}$.

\end{enumerate}

\begin{theorem}
   Under \ref{self_dom} and \ref{obssc2}, global convex dominance
   $\cmi(\Obsi_{1:k},\belief_0) <_{cx}  \cmii(\Obsii_{1:k},\belief_0)$ holds
   for all priors $\belief_0$ for the HMM filter (\ref{eq:information_state}) with finite observation space $\obspace_\action$,
   $\action \in \{1,2\}$.
 \end{theorem}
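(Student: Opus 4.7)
The plan is to apply Proposition \ref{prop:denuit}: it suffices to show, for the HMM filter, that (i) $\cm_\action(z,\belief_0)$ is increasing in $z$ for each $\action\in\{1,2\}$, and (ii) $\cm_2(z,\belief_0)-\cm_1(z,\belief_0)$ is single crossing in $z$, where $z\in\{1,\ldots,Y^k\}$ indexes $\obs_{1:k}$ lexicographically. Iterating the filter recursion (\ref{eq:information_state}) and using that each $\oprob_{\obs_t}(\action)$ is diagonal, the posterior after $k$ observations satisfies $\belief_k \propto \mato_\action^\p(z)\belief_0$, so that the conditional mean takes the quotient form
\begin{equation*}
\cm_\action(z,\belief_0) \;=\; \frac{\level^\p \mato_\action^\p(z)\belief_0}{\one^\p \mato_\action^\p(z)\belief_0}.
\end{equation*}
Writing $v_i^{z,\action} \ole (\mato_\action^\p(z)\belief_0)_i$ (so that both the numerator and denominator are linear in these quantities) will let me reduce each of (i) and (ii) to a sign statement about a bilinear form in $\belief_0$ whose coefficient matrix is precisely the one appearing in \ref{self_dom} or \ref{obssc2}.

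For step (i), I clear the (positive) denominators in $\cm_\action(\bz,\belief_0)-\cm_\action(z,\belief_0)$ and symmetrize the resulting double sum over state indices $(i,j)$ by relabelling. This yields
\begin{equation*}
\sum_{i<j}\bigl(\level(j)-\level(i)\bigr)\bigl(v_j^{\bz,\action} v_i^{z,\action}-v_i^{\bz,\action} v_j^{z,\action}\bigr),
\end{equation*}
and a direct calculation shows $v_j^{\bz,\action} v_i^{z,\action}-v_i^{\bz,\action} v_j^{z,\action}=\tfrac12\,\belief_0^\p\,\matsii_\action(i,j,z,\bz)\,\belief_0$. Since $\belief_0$ is a probability vector (nonnegative entries), elementwise positivity of $\matsii_\action(i,j,z,\bz)$ from \ref{self_dom}, together with monotonicity of the level vector $\level$, makes each summand nonnegative, giving $\cm_\action(\bz,\belief_0)\geq\cm_\action(z,\belief_0)$ whenever $\bz>z$.

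For step (ii), I carry out the same manipulation on $\cm_2(z,\belief_0)-\cm_1(z,\belief_0)$. The common denominator $(\one^\p\mato_1^\p(z)\belief_0)(\one^\p\mato_2^\p(z)\belief_0)$ is strictly positive, and after symmetrization the numerator becomes
\begin{equation*}
\sum_{i<j}\bigl(\level(j)-\level(i)\bigr)\bigl(v_j^{z,2} v_i^{z,1}-v_i^{z,2} v_j^{z,1}\bigr)\;=\;\tfrac12\sum_{i<j}\bigl(\level(j)-\level(i)\bigr)\,\belief_0^\p\,\matii(i,j,z)\,\belief_0.
\end{equation*}
Under \ref{obssc2}, for every pair $i<j$ the matrix $\matii(i,j,z)$ is elementwise negative when $z<z^*$ and elementwise nonnegative when $z\geq z^*$. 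Hence each quadratic form $\belief_0^\p\matii(i,j,z)\belief_0$, and therefore the entire weighted sum, is negative for $z<z^*$ and nonnegative for $z\geq z^*$, irrespective of $\belief_0\in\Belief$. This is exactly the single crossing property $\cm_2(z,\belief_0)-\cm_1(z,\belief_0)\in\lSC$. Combining steps (i) and (ii) and invoking Proposition \ref{prop:denuit} delivers global convex dominance $\cmi(\Obsi_{1:k},\belief_0)<_{cx}\cmii(\Obsii_{1:k},\belief_0)$ for every prior $\belief_0$.

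The substantive step is the symmetrization in (i) and (ii): it is what converts the quotient-of-linear-forms expression for the conditional mean into a bilinear form in $\belief_0$ whose kernel is exactly $\matsii_\action(i,j,z,\bz)$ or $\matii(i,j,z)$. Once this bookkeeping is in place the argument is essentially automatic, which is the reason \ref{self_dom} and \ref{obssc2} can be stated directly on the model parameters (albeit at a cost that grows like $Y^{2k}$ pairs $(z,\bz)$ to check, explaining the computational limitation noted in Sec.\ref{sec:globalhmm}).
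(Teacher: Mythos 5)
Your proposal is correct and follows essentially the same route as the paper: both invoke Proposition \ref{prop:denuit} and reduce conditions (i) and (ii) to nonnegativity of the antisymmetric bilinear forms $v_j^{\bar z,\action}v_i^{z,\action}-v_i^{\bar z,\action}v_j^{z,\action}=\tfrac12\,\belief_0^\p\matsii_\action(i,j,z,\bar z)\,\belief_0$ and $\tfrac12\,\belief_0^\p\matii(i,j,z)\,\belief_0$, which is exactly the likelihood-ratio inequality the paper cites. The only difference is presentational: the paper phrases the intermediate step as MLR dominance of the posteriors (which implies the ordering of the means), whereas you symmetrize the cleared-denominator expression for the means directly — your version simply fills in the computation the paper leaves as ``can be verified.''
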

 
 \begin{proof}
 \ref{self_dom} is
 sufficient for $\filter(\belief,z,\action) \lr \filter(\belief,\bz,\action)$
 for $z < \bz$; this can be verified from the definition of likelihood ratio dominance, namely
 $$ \frac{e_i^\p \mato_\action(z) \belief}{e_i^\p \mato_\action(\bz) \belief} \geq
 \frac{e_j^\p \mato_\action(z) \belief}{e_j^\p \mato_\action(\bz) \belief}, \quad j \geq i, \bz \geq z$$
 
   This in turn is sufficient for 
the first condition of Proposition \ref{prop:denuit}, namely, $\cm_\action(z,\belief)$ is increasing in $z$.

Similarly it can be shown that \ref{obssc2} is sufficient for 
  $\filter(\belief,z,1)\lr \filter(\belief,z,2)$ for $z\geq  z^*$ and 
    $\filter(\belief,z,1)\gr \filter(\belief,z,2)$ for $z<   z^*$. This implies
the second condition of Proposition \ref{prop:denuit} is satisfied, namely
 $\cm_2(z,\belief)\leq \cm_1(z,\belief)$, $z < z^*$ and
$\cm_2(z,\belief)\geq \cm_1(z,\belief)$, $z \geq  z^*$.
\end{proof} 

{\em Example}. It can be verified numerically that $\tp = \begin{bmatrix} 0.9 & 0.1 \\ 0.1 & 0.9\end{bmatrix}$,
$\oprob(1) = \begin{bmatrix} 0.7 & 0.3 \\ 0.3 & 0.7 \end{bmatrix}, \oprob(2) = \begin{bmatrix} 0.8 & 0.2 \\
0.2 & 0.8 \end{bmatrix}$ satisfies \ref{self_dom} and \ref{obssc2} for $k=1,2$.

{\bf Summary}: In contrast to previous subsections, this subsection used the single crossing property of conditional means to propose sufficient conditions \ref{self_dom} and \ref{obssc2} for global convex dominance of the HMM filter. Verifying \ref{self_dom} and \ref{obssc2} involve checking negative/positive elements for $O(\obsdim^{2k} \statedim^2)$ matrices is computationally intractable for large $k$. However, the conditions  guarantee global convex dominance for all (continuum) of priors  $\belief_0$ and are useful for small $k$.

  \section{Example. Controlled Sensing Partially Observed Markov Decision Process (POMDP)} \label{sec:pomdp}
  Thus far we have discussed convex dominance of the conditional mean  (in filtering and localization) between two fixed sensors. This section
 considers a POMDP controlled sensing  problem where we optimize the dynamic switching between multiple sensors.
  The main result of this section is
an important application of Corollary  \ref{cor:main2} (local convex dominance for the HMM filter): we
 construct a myopic lower bound to the optimal policy of a 2-state (but arbitrary observation space $\obspace_\action$) controlled sensing POMDP. Thus far, the only known way of constructing such lower bounds involved Blackwell dominance \cite{WH80,Rie91,Kri16}. The plethora of examples in
 Sec.\ \ref{sec:examples} where integral precision dominance holds (but Blackwell dominance does not), demonstrates the  usefulness of Theorem \ref{thm:main} in controlled sensing.
 
  In  controlled sensing,
 the aim is to dynamically decide which sensor (or sensing mode) $\action_k$ to choose at each time $k$ to optimize the objective defined in (\ref{eq:discountedcost}) below. In general, POMDPs are computationally intractable to solve (PSPACE complete).  Therefore,
from a practical point of view, constructing a myopic lower bound  is useful since  myopic policies are trivial to compute/implement in large scale POMDPs  and provide  a useful initialization for more sophisticated sub-optimal solutions.

\subsection{Controlled Sensing POMDP}
 We consider  an infinite horizon discounted reward controlled sensing POMDP.  It is customary to call the posterior $\belief_k$ as the ``belief''. A   discrete time two-state Markov chain  evolves with transition matrix $\tp$ on the  state space $\statespace = \{1,2\}$.
 So the belief space $\Belieft$ is a two-dimensional simplex, namely $\belief(1) + \belief(2) = 1$, $\belief(1),\belief(2) \geq 0$.
 Denote the
action space  as $\actionspace = \{1,2,\ldots,\actiondim\}$. For each action $\action
\in \actionspace$ denote the  observation space as $\obspace_\action$. We assume either  $\obspace_\action =  \{1,2,\ldots,\obsdim_\action\}$, i.e., finite set of action dependent alphabets for all $\action \in \actionspace$, or $\obspace_\action =  \reals$,  or $\obspace_\action= [\obsl_\action,\obsu_\action]$, i.e.,
finite support
for all $\action \in \actionspace$.
 For stationary policy  $\policy: \Belieft \rightarrow \actionspace$,
 initial belief  $\belief_0\in \Belieft$,  discount factor $\discount \in [0,1)$, define the  discounted cumulative reward:
\begin{align}\label{eq:discountedcost}
J_{\policy}(\belief_0) = \Ep\biggl\{\sum_{\time=0}^{\infty} \discount ^{\time}\, \reward_{\policy(\belief_\time)}^\p\, \belief_\time\biggr\}.
\end{align}
Here $\reward_\action = [\reward(1,\action),\reward(2,\action)]^\p$  is the reward vector for each sensing action $\action \in \actionspace$, and the belief state evolves according to hidden Markov model filter defined in (\ref{eq:information_state}) where
$\oprob_{\state\obs}(\action) = \prob(\obs_{\time+1} = \obs| \state_{\time+1} = \state, \action_{\time} = \action)$, $\obs \in \obspace_\action$ denotes the controlled observation   probabilities.

The aim is to compute the optimal  stationary policy $\optpolicy:\Belieft \rightarrow \actionspace$ such that
$J_{\optpolicy}(\belief_0) \geq J_{\policy}(\belief_0)$ for all $\belief_0 \in \Belieft$.
Obtaining the optimal stationary policy  $\optpolicy$ is equivalent to solving
 Bellman's  stochastic dynamic programming equation:
$ \optpolicy(\belief) =  \underset{\action \in \actionspace}\argmax~ Q(\belief,\action)$, $J_{\optpolicy}(\belief_0) = \valuef(\belief_0)$, where
\begin{multline}
\valuef(\belief)  = \underset{\action \in \actionspace}\max ~Q(\belief,\action), \\ 
  Q(\belief,\action) =  ~\reward_\action^\prime\belief + \discount\int_{\obspace_\action} \valuef\big(\filter(\belief,\obs,\action)\big)\, \filterd(\belief,\obs,\action)\,dy. \label{eq:bellman}
\end{multline}
The value function  $\valuef(\belief)$ is the fixed point of the following value iteration algorithm: Initialize $V_0(\belief)=0$ for $\belief \in \Belieft$. Then
for $k=0,1,\ldots$ \begin{equation}
  \begin{split}
  \valuef_{k+1}(\belief)  &= \underset{\action \in \actionspace}\max ~Q_{k+1}(\belief,\action), \quad \mu_k^* = \argmax_{\action \in \actionspace} Q_k(\belief,\action), \\
  Q_{k+1}(\belief,\action) &=  ~\reward_\action^\prime\belief + \discount\int_{ \obspace_\action} \valuef_k\big(\filter(\belief,\obs,\action)\big)\,\filterd (\belief,\obs,\action)\, d\obs. \end{split} \label{eq:vi}
\end{equation}
The sequence $\{V_k(\belief), k=0,1,\ldots\}$ of value functions converges uniformly  to $V(\belief)$ on $\Belieft$ geometrically fast.
Since  $\Belieft$ is continuum, Bellman's equation \eqref{eq:bellman} and the value iteration algorithm (\ref{eq:vi}) do not directly translate into practical solution methodologies since they need to be evaluated at each $\belief \in \Belieft$.
Almost 50 years ago,  \cite{Son71} showed that when $\obspace_\action$ is finite, then for any $k$,
$\valuef_k(\belief) $ has a finite dimensional  piecewise  linear and convex characterization.  
Unfortunately, the number of piecewise linear segments can increase exponentially with the action space dimension
$\actiondim$ and double exponentially with time $k$.
Thus there is strong  motivation for structural results to  construct useful myopic lower  bounds   ${\policyl(\belief)}$ for the optimal policy $\optpolicy(\belief)$. 

{\em Remark 1.}
For controlled sensing POMDPs, the transition matrix $\tp$, which characterizes the dynamics of the signal being sensed, does not depend on action $\action$. Only $\reward_\action$, which models the information acquisition reward of the sensor, and observation probabilities $\oprob(\action)$, which model the sensor's accuracy when it operates in mode $\action$, are action dependent.

{\em Remark 2.} A POMDP with finite horizon $\horizon$ has objective 
$ J_{\policy}(\belief_0) = \Ep\left\{\sum_{\time=0}^{\horizon-1}  \reward_{\policy_{\horizon-k}(\belief_\time)}^\p\, \belief_\time + \rstop^\p \belief_\horizon\right\}$ where $\policy=(\policy_1,\ldots,\policy_{\horizon})$ and $\rstop$ is the terminal reward vector. Then (\ref{eq:vi}) initialized as $\valuef_0(\belief) = \rstop^\p \belief$ for iterations  $k=0,\ldots,\horizon-1$ yields the optimal policy sequence $\policy^*=(\policy_1^*,\ldots,\policy_{\horizon}^*)$.

\subsection{Main Result -- Myopic Lower Bound}

  \begin{theorem}[Controlled sensing POMDP] \label{thm:pomdp}
         Assume      \ref{TP2_obs},   \ref{obssc},  \ref{obsinit}  hold.
Then      $
     Q(\belief,\action) - \reward_\action^\p \belief $ is increasing\footnote{By increasing, we mean non-decreasing.} in $ \action $.
     Therefore, the myopic policy $\policyl(\belief)= \argmax_\action \reward_\action^\p \belief$  forms a lower bound to the optimal policy in the sense that
     $\optpolicy(\belief) \geq \policyl(\belief)$ for all $\belief \in \Belieft$.
      Hence, for beliefs $\belief$ where $\policyl(\belief)= \actiondim$, the optimal policy $\optpolicy(\belief)$ coincides with the myopic policy $\policyl(\belief)$.
     An identical result holds in the finite horizon case for the policy sequence $\policy_k(\belief)$, $k=1,\ldots,\horizon$.
   \end{theorem}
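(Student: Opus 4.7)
The plan is to reduce the structural claim to Corollary~\ref{cor:main2} via the two-state geometry of the belief simplex. Introduce
\[
F(\belief,\action)\;\ole\;Q(\belief,\action)-\reward_\action^\p\belief\;=\;\discount\int_{\obspace_\action}\valuef\!\left(\filter(\belief,\obs,\action)\right)\filterd(\belief,\obs,\action)\,dy,
\]
so the first claim is equivalent to showing $F(\belief,\cdot)$ is increasing in $\action\in\actionspace$ for every $\belief\in\Belieft$. Since $\statedim=2$, the simplex $\Belieft$ is one-dimensional: parameterize it by $\belief(2)\in[0,1]$ with $\belief(1)=1-\belief(2)$. Choosing state levels $\level=[0,1]^\p$, the conditional mean satisfies $\cm_\action(\obs,\belief)=\level^\p\filter(\belief,\obs,\action)=\filter(\belief,\obs,\action)(2)$, so the map from posterior to conditional mean is the identity in this coordinate. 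Define $\Vb:[0,1]\to\reals$ by $\Vb(m)=\valuef([1-m,m]^\p)$; standard POMDP theory \cite{Son71} guarantees $\valuef$ is convex on $\Belieft$, hence $\Vb$ is convex on $[0,1]$, and one may rewrite
\[
F(\belief,\action)\;=\;\discount\int_{\obspace_\action}\Vb\!\left(\cm_\action(\obs,\belief)\right)\filterd(\belief,\obs,\action)\,dy\;=\;\discount\,\E_\action\!\left\{\Vb\!\left(\cm_\action(\Obsa,\belief)\right)\right\}.
\]

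Next I would apply Corollary~\ref{cor:main2} pairwise, taking the ``prior'' in its statement to be $\belief$ itself. Under \ref{TP2_obs}, \ref{obssc}, \ref{obsinit}, applied to each consecutive pair of sensors $(\action,\action+1)$, the corollary delivers local convex dominance of the conditional mean, so for the convex function $\Vb$,
\[
F(\belief,\action)\;\leq\;F(\belief,\action+1).
\]
Chaining this inequality over $\action=1,\ldots,\actiondim-1$ yields the monotonicity of $F(\belief,\cdot)$ claimed by the theorem.

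For the lower bound, set $a^{*}\ole\policyl(\belief)=\argmax_\action\reward_\action^\p\belief$. For any $\action<a^{*}$, the definition of the myopic policy gives $\reward_\action^\p\belief\leq\reward_{a^{*}}^\p\belief$, while monotonicity of $F$ gives $F(\belief,\action)\leq F(\belief,a^{*})$; adding these yields $Q(\belief,\action)\leq Q(\belief,a^{*})$. Consequently every maximizer of $Q(\belief,\cdot)$ lies in $\{a^{*},a^{*}+1,\ldots,\actiondim\}$, so $\optpolicy(\belief)\geq\policyl(\belief)$, with equality whenever $\policyl(\belief)=\actiondim$. The finite-horizon claim follows by induction on the value-iteration index: $\valuef_k$ is convex for every $k$ (the Bellman operator in (\ref{eq:vi}) preserves convexity because $\filter(\belief,\obs,\action)\,\filterd(\belief,\obs,\action)=\oprob_\obs(\action)\tp^\p\belief$ is linear in $\belief$), so the same pointwise argument establishes $\policy_k^{*}(\belief)\geq\policyl(\belief)$ at every stage.

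The main obstacle is ensuring the identification $\valuef(\filter(\belief,\obs,\action))=\Vb(\cm_\action(\obs,\belief))$ is legitimate; this relies essentially on $\statedim=2$, which collapses the posterior simplex to a one-dimensional object and lets the value function be written as a scalar convex function of the conditional mean. For $\statedim>2$ no such identification exists, and since Corollary~\ref{cor:main2} is a one-dimensional convex-order result, it cannot directly accommodate a convex function on a higher-dimensional simplex. A secondary subtlety is verifying that the pairwise hypotheses \ref{TP2_obs}--\ref{obsinit} genuinely hold between every consecutive pair of modes in $\actionspace$, not merely between the two labeled sensors appearing in the statement of the corollary.
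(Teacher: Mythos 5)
Your proposal is correct and follows essentially the same route as the paper: exploit $\statedim=2$ to view the convex value function as a scalar convex function of the conditional mean $\level^\p\filter(\belief,\obs,\action)$, invoke the local convex dominance result (Corollary~\ref{cor:main2}, equivalently Theorem~\ref{thm:main} applied to the one-step filter update) to obtain $Q(\belief,\action+1)-Q(\belief,\action)\geq \reward_{\action+1}^\p\belief-\reward_\action^\p\belief$, and then conclude via the standard Lovejoy-type argument, with the finite-horizon case handled by the convexity-preserving value iteration. Your explicit remarks on the pairwise application of \ref{TP2_obs}--\ref{obsinit} across consecutive sensing modes and on why the reduction fails for $\statedim>2$ are consistent with (and slightly more careful than) the paper's own presentation.
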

   \begin{proof}
The value function $\valuef(\belief)$ is convex in $\belief$ \cite{Kri16}.
Since $\statedim=2$, $\belief$ is completely specified by $\belief(2) = \level^\p \belief$ where $\level=[0, 1]^\p$. So $\valuef(\belief(2)) = \valuef(\level^\p \belief)$ is convex.
     Assuming
\ref{TP2_obs}, \ref{obssc}, \ref{obsinit}, it follows from
 Theorem \ref{thm:main} that 
 for all
$\belief \in  \Belieft$,
\begin{multline} \sum_{\obspace_{\action+1}} \valuef(\filter(\belief,\obs,\action+1) )\, \filterd(\belief,\obs,\action+1)  \\
\geq \sum_{\obspace_\action}  \valuef(\filter(\belief,\obs,\action) )\, \filterd(\belief,\obs,\action)
\label{eq:almostfinal} \end{multline}
Equivalently, see  (\ref{eq:vi}),
$ Q(\belief,u+1) - Q(\belief,u) \geq \reward_{u+1}^\p \belief - \reward_u^\p \belief $. Then Lemma 2 in \cite{Lov87} implies\footnote{
  Proof: If $u^* = \argmax_u \reward_u^\p \belief$, then (\ref{eq:almostfinal}) implies 
  $Q(\belief,u^*) \geq Q(\belief,u)$ for $u< u^*$. This implies $\mu^*(\belief) \in \{u^*,u^*+1,\ldots,\actiondim\}$. So $\policyl(\belief)
  = u^* \implies \mu^*(\belief) \in \{u^*,u^*+1,\ldots,\actiondim\}$.
  If $u^*$ is not unique, the proof needs more care, see Lemma 2,  \cite{Lov87}.
}
  $\mu^*(\belief) \geq \policyl(\belief)$ for all
$\belief \in  \Belieft$.
The same argument applies to $V_k(\belief)$ and  $\mu_k^*(\belief)$ for the finite horizon  case with terminal reward.
\end{proof}

From a practical point of view,   Theorem \ref{thm:pomdp} is useful since  the myopic policy $\policyl$ is trivial to compute and implement and gives a guaranteed lower bound to the optimal policy of the POMDP     which is intractable to compute.

The main point is that Theorem \ref{thm:pomdp}  provides an alternative to Blackwell dominance for POMDPs which has been widely studied since the 1980s  and also has the same conclusion:

\begin{theorem}[Blackwell dominance for Controlled Sensing. \cite{WH80,Rie91}]
  \label{thm:blackwellpomdp}
      $\oprob(u+1) >_{B} \oprob(u)$, $u=1,\ldots,\actiondim-1$ is a  sufficient condition for the conclusion of Theorem \ref{thm:pomdp} to hold.
  \end{theorem}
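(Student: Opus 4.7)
The plan is to prove the same key inequality (\ref{eq:almostfinal}) that drove the proof of Theorem \ref{thm:pomdp}, but using Blackwell dominance $\oprob(u+1) >_B \oprob(u)$ in place of \ref{TP2_obs},\ref{obssc},\ref{obsinit}. Once that inequality is in hand, the rest of the argument (convexity of the POMDP value function $\valuef$, followed by Lemma 2 of \cite{Lov87} converting the action-wise monotonicity of $Q(\belief,u)-\reward_u^\p\belief$ into the policy bound $\optpolicy(\belief)\geq \policyl(\belief)$) is identical to the proof of Theorem \ref{thm:pomdp} and can be invoked verbatim. So the only real work is to show that Blackwell dominance implies (\ref{eq:almostfinal}) for the convex function $\valuef$.

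The core step is a classical Jensen's inequality argument. By Definition \ref{defn:blackwell}, there exists a stochastic kernel $L$ with $\oprob_{iy}(u)=\sum_{\bar y\in\obspace_{u+1}} \oprob_{i\bar y}(u+1)\,L_{\bar y,y}$. Multiplying by $(\tp^\p\belief)_i$ and summing over $i$ gives the normalizer identity
$$\filterd(\belief,y,u) \;=\; \sum_{\bar y\in\obspace_{u+1}} L_{\bar y,y}\,\filterd(\belief,\bar y,u+1),$$
and applying the same decomposition to the numerator of $\filter(\belief,y,u)$ in (\ref{eq:information_state}) yields
$$\filter(\belief,y,u) \;=\; \sum_{\bar y\in\obspace_{u+1}} \frac{L_{\bar y,y}\,\filterd(\belief,\bar y,u+1)}{\filterd(\belief,y,u)}\,\filter(\belief,\bar y,u+1).$$
Thus $\filter(\belief,y,u)$ is a convex combination of the posteriors $\{\filter(\belief,\bar y,u+1)\}_{\bar y}$ with weights that are non-negative and sum to $1$ in $\bar y$.

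Since $\valuef$ is convex on the simplex $\Belieft$ (this is standard for POMDPs and is cited in the proof of Theorem \ref{thm:pomdp}), Jensen's inequality applied pointwise in $y$ yields
$$\valuef(\filter(\belief,y,u))\,\filterd(\belief,y,u) \;\le\; \sum_{\bar y\in\obspace_{u+1}} L_{\bar y,y}\,\filterd(\belief,\bar y,u+1)\,\valuef(\filter(\belief,\bar y,u+1)).$$
Summing over $y\in\obspace_u$ and interchanging the order of summation, using $\sum_{y\in\obspace_u} L_{\bar y,y}=1$, gives
$$\sum_{y\in\obspace_u}\valuef(\filter(\belief,y,u))\,\filterd(\belief,y,u) \;\le\; \sum_{\bar y\in\obspace_{u+1}}\valuef(\filter(\belief,\bar y,u+1))\,\filterd(\belief,\bar y,u+1),$$
which is exactly (\ref{eq:almostfinal}). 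Substituting into the Bellman recursion (\ref{eq:vi}) gives $Q(\belief,u+1)-Q(\belief,u)\geq (\reward_{u+1}-\reward_u)^\p\belief$, and Lemma 2 of \cite{Lov87} then yields $\optpolicy(\belief)\geq\policyl(\belief)$ as in Theorem \ref{thm:pomdp}; the same argument applies iterate-by-iterate to $\valuef_k$ in the finite-horizon case.

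I expect no serious obstacle: the reasoning is a textbook use of Blackwell's original idea \cite{Bla53} (the ``garbled signal'' factorization combined with Jensen). The only bookkeeping issue is the continuum-observation case $\obspace_\action=\reals$ or $[\obsl_\action,\obsu_\action]$, where Blackwell dominance is phrased with a Markov kernel $L(\bar y,dy)$ rather than a matrix and the sums over $y$ become Lebesgue integrals; the Jensen step is unchanged and boundedness/measurability of $\valuef$ on the compact set $\Belieft$ suffices to justify the interchange of integration order by Fubini.
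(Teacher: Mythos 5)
Your proof is correct and is exactly the ``straightforward Jensen's inequality argument'' the paper alludes to (for the analogous Theorem \ref{thm:blackwell}) and attributes to \cite{WH80,Rie91} without spelling out: the Blackwell factorization makes $\filter(\belief,y,u)$ a convex combination of the posteriors $\filter(\belief,\bar y,u+1)$ with weights summing to one, Jensen applied to the convex value function gives (\ref{eq:almostfinal}), and Lemma 2 of \cite{Lov87} finishes as in Theorem \ref{thm:pomdp}. No gaps; your observation that this route uses convexity of $\valuef$ on the full simplex (hence works for any $\statedim$, unlike the integral-precision route) matches the paper's own remark.
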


Blackwell dominance holds for any number of states $\statedim$. In comparison Theorem \ref{thm:pomdp} applies only to POMDPs with 2 underlying states. However, there are numerous 2 state examples where Theorem \ref{thm:pomdp} applies and Blackwell dominance does not.
  
\subsection{Examples}

1. Theorem \ref{thm:pomdp} applies to all the 2-state examples in Sec.\ref{sec:examples} where \ref{TP2_obs},   \ref{obssc},  \ref{obsinit}  hold. As discussed in Sec.\ref{sec:bdvsip}, there are many examples  where Blackwell dominance does not hold, but integral precision dominance \ref{obssc} does hold.

2. In controlled radar sensing problems \cite{KD07},  observations are obtained at a faster time scale than the state evolution. That is, for state $\State_k$ (e.g., threat level at time $k$), an observation vector
$\Obs_k = (\Obs_{k,1},\ldots,\Obs_{k,\Delta})$ is obtained where $\Obs_{k,\lagc}$ and $\Obs_{k,m}$ are conditionally independent given $\State_k$. In such cases,
under \ref{sm}, convex dominance holds, and then Theorem \ref{thm:pomdp} holds. However, Blackwell dominance (Theorem \ref{thm:blackwellpomdp}) does not hold for this case.

 3.  Optimal filter vs predictor scheduling is an important application of controlled sensing.
 Filtering uses a sensor with observation  matrix $\oprob(2)$ to obtain measurements of the Markov chain and incurs a measurement cost but a performance reward. Prediction (no measurement) has non-informative  observation matrix $\oprob_{iy}(1) = 1/\obsdim$ and  incurs no measurement cost but yields a low performance reward. Clearly $\oprob(2) >_B \oprob(1)$.
If $\oprob(2)$ satisfies  \ref{TP2_obs}, then \ref{obssc} holds automatically because  $\sum_{y\leq j} \oprob_{iy}(1)$ is constant wrt $i$ ($\oprob(1)$ is non-informative), while \ref{TP2_obs} implies
$\sum_{y\geq l} \oprob_{iy}(2)$ is increasing wrt $i$.

 4.  Controlled Hierarchical Sensing:  In controlled sensing involving hierarchical sensors (such as hierarchical social networks),  level $l$ of the network
  receives signal $\state_k$ distorted  by the confusion matrix $M^l$  ($l$-th power of stochastic matrix $M$), where $l\in \{0,1,\ldots,\actiondim-1\}$.
That is,  each level of the network observes a noisy version of the previous level.
  Observing (polling) level $l$ of the network has  observation probabilities $\oprob$ conditional on the noisy message at level $l$. Therefore the conditional probabilities of the observation $\obs$ given the state $\state$ are
  $\oprob(U-l) = M^l \, \oprob(U)$ where $l$ is the degree of separation    from the underlying source (state).
  This is illustrated in Figure \ref{fig:hierarchical} for $\actiondim=3$.
  The controlled sensing POMDP is to choose which level to poll at each time in
  order to optimize an infinite horizon discounted reward.
  By Theorem \ref{thm:shannon},
$\oprob(u)$ is more noisy (has lower Shannon capacity) than $\oprob(u+1)$; yet 
Blackwell dominance does not hold due to the reverse multiplication order. But using integral  precision dominance,
Theorem \ref{thm:pomdp} holds
(under  assumptions).

\begin{figure} \centering
\tikzstyle{block} = [draw, rectangle, minimum height=2em, minimum width=4em]
\begin{tikzpicture}[thick,scale=0.75, every node/.style={transform shape},node distance = 1.5cm, auto]
  \node(block05){};
  \node[right of=block05](block15){};
   \node [block, right of=block15] (block2) {$M$};
    \node[ right of=block2] (block25) {};
    \node [block, right of=block25] (block3) {$M$};
    \node [right of=block3] (block35) {};
    \node[block,below of=block15,label=below:$\oprob(3)$](block4){$\oprob$};
    \node[block,below of=block25,label=below:$\oprob(2)$](block5){$\oprob$};
     \node[block,below of=block35,label=below:$\oprob(1)$](block6){$\oprob$};
  \draw[->] (block15) -- (block2);  \draw[->] (block2) -- (block3);
    \draw[->] (block15)--(block4);
    \draw[->] (block25)--(block5);
    \draw[->](block05)-- node[pos=0.5,above]{$\state_k\sim \tp$}(block15);
    \draw[->](block3)--(block35);
     \draw[<-](block6)--(block35);
   \end{tikzpicture}
   \caption{Controlled Hierarchical Sensing where Blackwell dominance does not necessarily hold.
     Level $l$ of the backbone network receives the Markovian signal $\state_k$ distorted by
   the confusion matrix $M^l$. Polling any specific level has observation probabilities $\oprob$; so the conditional probabilities of $\obs$ at level $l$ given $\state$ is specified by stochastic matrix $M^l \oprob$.}
   \label{fig:hierarchical}
\end{figure}
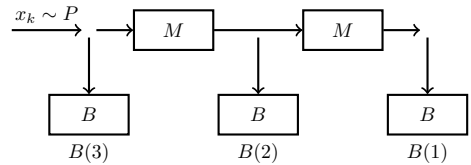

5. Word-of-Mouth Social Learning: Sensor 2 observes the Markov state in noise
with observation probabilities $\oprob(2)$. Sensor~1 receives the observations of sensor 1 in noise, but these probabilities also depend on the underlying state.
Denote these state dependent probabilities as  $M_i(l,m) \ole P(\Obsi_k = m| \Obsii_k= l,\State_k= i) $. Thus
 sensor 1 observation probabilities are
\beq \oprob_{im}(1) = P(\Obsi_k = m| \State_k = i) =
\sum_{l \in \obspace} \oprob_{il}(2)  \times M_i(l,m) \label{eq:wom} \eeq
Such models arise in multi-agent social learning where agents use observations/decisions of previous agents and also their own private observations of the state to estimate the underlying  state \cite{Cha04,Kri12}.
Sensor 1 is influenced by the word-of-mouth message from sensor 2 but interprets (critiques) this message   based on its own observation of the state.
The controlled sensing problem involves dynamically choosing between sensor 1 (direct measurement from source) versus sensor  2 (word of mouth measurement) to optimize the cumulative reward (\ref{eq:discountedcost}).

 Even though from (\ref{eq:wom}), $\oprob(1)$ appears more noisy than $\oprob(2)$, 
 Blackwell dominance does not necessarily hold. Also  the Blackwell dominance proof of convex dominance breaks down due to the state dependent probabilities $M_i(l,m)$. However, integral precision dominance does hold in many cases. Here is one such example:
\begin{multline*}  \oprob(2) = \begin{bmatrix} 0.7  & 0.3 & 0 \\ 0.1 & 0.2 & 0.7 
\end{bmatrix},\;
M_1 = \begin{bmatrix} 0.9 & 0.1 \\ 0.5667 & 0.4333 \\0.2 & 0.8 
\end{bmatrix},\\
M_2 = \begin{bmatrix} 0.1 & 0.9 \\ 0.2 & 0.8 \\0.2143 & 0.7857
\end{bmatrix} ,  \;
 \oprob(1) = \begin{bmatrix} 0.8 & 0.2 \\ 0.2 &0.8 
\end{bmatrix}
\end{multline*}
It can be verified that  \ref{TP2_obs}, \ref{obssc}, \ref{obsinit} and \ref{sm} hold for this model, and therefore Theorem \ref{thm:main}
and Theorem \ref{thm:pomdp} hold.
\section{Discussion}
This paper developed sufficient conditions for local and global convex dominance of the conditional mean in Bayesian estimation (localization and filtering). We used two  techniques that have recently been developed in economics, namely, integral precision dominance (this yields local convex dominance)  and aggregating the single crossing property (this yields global convex dominance).  The convex dominance results  apply to several examples where Blackwell dominance does not hold. As an application, we showed how convex dominance can be used to construct  myopic lower bound to the optimal policy of a controlled sensing POMDP. The recent preprint  \cite{MPS19}  has interesting results on Blackwell dominance in large samples for two state random variables. In comparison the integral precision dominance used in the current paper yields global convex dominance  for an arbitrary number of states.

Our main result was to give  concise  sufficient conditions for global convex dominance in Bayesian localization (and for local convex dominance in  Bayesian filtering). In future work it is of  interest  to develop concise sufficient conditions for global convex dominance of Bayesian filtering; the conditions in Sec.\ref{sec:globalhmm} are difficult to verify. It is also worthwhile relating integral precision dominance (single crossing condition) to channel capacity. We know that Blackwell dominance $\oprob(2)>_B \oprob(1)$ implies that $\oprob(2)$ has higher capacity that $\oprob(1)$ (Theorem \ref{thm:shannon}). Since both Blackwell dominance and integral precision dominance imply convex stochastic dominance, giving sufficient conditions on integral precision dominance to relate to channel capacity provides useful links between the MSE of optimal filters, myopic policies of POMDPs and information theory.

Finally, this paper considered the effect of sensing (observation kernels) on  convex dominance and MSE when the transition kernels are identical. If the transition kernels are different for the two  observation processes, then the MSE of the conditional means are  meaningless since  the state processes are different. However, one can still establish local convex dominance of the optimal filter by introducing suitable conditions on the transition kernel.

\appendices
\section{Proof of Theorem \ref{thm:main}}  \label{sec:proof}

 \begin{defn}
Let $\belief_1, \belief_2 $ denote  two univariate pdfs (or pmfs).
Then $\belief_1$ dominates $\belief_2$ with respect to the monotone likelihood ratio (MLR) order, denoted as
$\belief_1 \gr \belief_2$,
 if 
 $ \belief_1(x) \belief_2(x') \leq \belief_2(x) \belief_1(x')$ for $x < x'$.
 \\
 $\belief_1$ dominates  $\belief_2$ with respect to first order dominance, denoted as
 $\belief_1\gs \belief_2$ if $\int_{-\infty}^x  \belief_1(\xi) d\xi\geq \int_{-\infty}^x \belief_2(\xi) d\xi$ for all $x$.
 A function $\fun:\belief\rightarrow \reals$ is said to be MLR (resp.\ first order) increasing if $\belief_1 \gr \belief_2$ (resp.\ $\belief_1 \gs \belief_2$) implies $\fun(\belief_1) \geq \fun(\belief_2)$. 
\end{defn}

For finite state space $\statespace$,
when  $\statedim =2$, $\gr$ is a complete order and coincides with
$\gs$.
For  $\statedim >2$, $\gr \implies \gs$ and both $\gr$, $\gs$ are partial
orders since it is not always
possible to order any two arbitrary beliefs  $\belief \in \Belief$.

Proceeding to the proof of Theorem \ref{thm:main},
for notational convenience we present the proof for  finite state space.
The proof for the continuous-state space case is virtually  identical and
outlined in Sec.\ref{sec:cont}. We assume that the state levels
 $\level$, associated with the state space $\statespace$, are ordered so that $\level_1< \level_2 < \cdots \level_\statedim$.



First note that the expectations of
$\cm_\action(\Obs_{1:k},\belief_0) $ are identical for  $u\in \{1,2\}$,
 because $\E\{  \cm_\action(\Obs_{1:k},\belief_0)    \}= \E\{ \E_{\action}\{\level^\p  x| \Obs_{1:k},\belief_0\} \} = \level^\p \E\{\state|\belief_0\}  = \level^\p \belief_0 $.
 Therefore Theorem 1.5.3
 in \cite{MS02} implies convex dominance is equivalent to increasing convex dominance.
Next, by
 Theorem 1.5.7 in \cite{MS02}, increasing convex dominance  holds iff for $\lambda \in \reals$,
\begin{multline}   \psi(\lambda) \ole \int_{ \obspace^k_{2}} [ \level^\p \filter(\belief,\obs_{1:k},2) - \lambda]^+ \filterd(\belief,\obs_{1:k},2) \, dy_{1:k}  \\
  - \int_{ \obspace^k_1}[\level^\p \filter(\belief,\obs_{1:k},1) - \lambda]^+ \filterd(\belief,\obs_{1:k},1)\, dy_{1:k} \geq 0. \label{eq:lambdafn}
\end{multline}
Here  we use the notation  $[x]^+ = \max(x,0)$. The remainder of the proof focuses
    on establishing (\ref{eq:lambdafn}).

    Defining $   \obspace^{k,\lambda}_u = \{\obs_{1:k}: \level^\p \filter(\belief,\obs_{1:k},u) > \lambda\}$,
\begin{align}
\psi(\lambda)    & = \int_{\obspace^{k,\lambda}_{2}}  [ \level^\p \filter(\belief,\obs_{1:k},{2})- \lambda]\, \filterd(\belief,\obs_{1:k},{2})\, d\obs_{1:k} \nonumber\\ & \quad  -
                   \int_{\obspace^{k,\lambda}_1} [ \level^\p \filter(\belief,\obs_{1:k},1) - \lambda] \, \filterd(\belief,\obs_{1:k},1)
                   \, d\obs_{1:k}
                   \label{eq:firstline}\\
  &= (\level-\lambda \ones)^\p \bigl[ \int_{{\obspace}^{k,\lambda}_2}
           \oprob_{\obs_k}(2)  \cdots  \oprob_{\obs_1}(2) \, d\obs_{1:k} \nonumber\\
         &\qquad \qquad \qquad - \int_{ {\obspace}^{k,\lambda}_{1}} \oprob_{\obs_k}({1}) 
\cdots  \oprob_{\obs_1}({1})  \, d\obs_{1:k}
             \bigr]  \,\belief  \nonumber\\
                 &=
                   (\level-\lambda \ones)^\p \bigl[ \bcdf_2(z_k) \bcdf_2(z_{k-1}) \cdots \bcdf_2(z_1) \nonumber\\ & \qquad \qquad \qquad -   \bcdf_1(\bz_k)  \bcdf_1(\bz_{k-1})  \cdots \bcdf_1(\bz_1)
                   \bigr]\,\belief  \label{eq:lastline} 
\end{align}
for some $z_1,\ldots,z_k \in \reals$ and $\bz_1,\ldots,\bz_k \in \reals$ which depend on $\lambda$.
Here each diagonal matrix $$\bcdf_u(z_i) = \diag[\bcdf_u( z_i|\state=1),\ldots, \bcdf_u( z_i|\state=\statedim)]$$ where
$\bcdf_u(z_i|\state) = 1 - \cdf_u(z_i|\state) $ is the complementary cdf.
Equation (\ref{eq:lastline}) follows since under \ref{TP2_obs}, $\filter(\belief,\obs_{1;k},u)$ is MLR increasing in each element $\obs_n$, $n=1,\ldots,k$. Therefore, the set
\beq {\obspace}^{k,\lambda}_\action
=  \{\obs_{1:k}: \level^\p \filter(\belief,\obs_{1:k},u) > \lambda\} = 
\{\obs_1 >\z_1, \ldots, \obs_k > \z_k \} \label{eq:olam}\eeq
for some $\lambda$ dependent real numbers $z_1,\ldots,z_k$ and hence  (\ref{eq:lastline})  involves complementary cdfs.

\subsection{Proof of Theorem \ref{thm:main} when $\statespace$ is finite and $\obspace = \reals$}


\begin{theorem}[Convex dominance for finite state localization]
  \label{thm:convexdom}
  Assume \ref{TP2_obs}, \ref{obssc}, \ref{sm} and $\obspace = \reals$.  Then 
  the following global convex dominance holds for all $k$:
  $\cm_1(\Obsi_{1:k}, \belief_0)  <_{cx} \cm_2(\Obsii_{1:k}, \belief_0)  $
  That is,
  for any convex function $\fun: \reals \rightarrow \reals$,
\begin{multline} 
  \int_{\obspace_1^k} \fun\big(\level^\p   \filter(\belief,\obs_{1:k},1)\big)  \, \filterd(\belief,\obs_{1:k},1)\, dy_{1:k} \\ \leq
  \int_{ \obspace_{2}^k} \fun\big(\level^\p   \filter(\belief,\obs_{1:k},2)\big)  \, \filterd(\belief,\obs_{1:k},2) \,dy_{1:k}.
\label{eq:convexdom}
\end{multline}
\end{theorem}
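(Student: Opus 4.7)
The plan is to establish the non-negativity of $\psi(\lambda)$ for every $\lambda\in\reals$, which by (\ref{eq:lambdafn})--(\ref{eq:lastline}) has already been reduced to verifying
\[
(\level-\lambda\ones)^{\p}\,\bigl[\bcdf_2(z_k)\cdots\bcdf_2(z_1)-\bcdf_1(\bz_k)\cdots\bcdf_1(\bz_1)\bigr]\,\belief\;\ge\;0.
\]
Since the bracketed matrix is diagonal, this amounts to
\[
\sum_{x\in\statespace}(\level(x)-\lambda)\,B(x)\,\belief(x)\;\ge\;0,\qquad B(x)\;:=\;\prod_{t=1}^{k}\bcdf_2(z_t|x)-\prod_{t=1}^{k}\bcdf_1(\bz_t|x),
\]
where $z_t,\bz_t$ are the $\lambda$-dependent thresholds obtained from (\ref{eq:olam}).

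The first step is to certify the single crossing of $B(\cdot)$ in $x$. Assumption \ref{obssc} alone only gives single crossing of each individual factor $\bcdf_2(\obs|x)-\bcdf_1(\bar{\obs}|x)$, and single crossing is not in general preserved under products; signed-ratio monotonicity \ref{sm} is exactly the aggregation condition that upgrades \ref{obssc} to the product statement (\ref{eq:sma}). I would invoke Theorem \ref{thm:agg} of the appendix to record that $B(x)\in\lSC$ in $x\in\statespace$.

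The second step combines single crossing of $B$ with the weak monotonicity of $\level(x)-\lambda$ (automatic from $\level_1<\cdots<\level_\statedim$). My plan is to pivot about a crossing point $x^{\ast}$ of $B$ via
\[
\sum_{x}(\level(x)-\lambda)B(x)\belief(x)\;=\;\sum_{x}(\level(x)-\level(x^{\ast}))B(x)\belief(x)\;+\;(\level(x^{\ast})-\lambda)\sum_{x}B(x)\belief(x).
\]
The first sum is pointwise non-negative because $\level(x)-\level(x^{\ast})$ and $B(x)$ change sign in the same direction at $x^{\ast}$. The hard part will be the second term: it requires $\sgn(\level(x^{\ast})-\lambda)$ to agree with $\sgn\bigl(P_{2}(\Obsii_{1:k}\in\obspace^{k,\lambda}_{2})-P_{1}(\Obsi_{1:k}\in\obspace^{k,\lambda}_{1})\bigr)$. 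I anticipate handling this by exploiting the defining identity $\level^{\p}\filter(\belief,\obs,u)=\lambda$ on the boundary of the super-level set $\obspace^{k,\lambda}_u$, combined with MLR-monotonicity of the filter in each $\obs_t$ under \ref{TP2_obs}, to pin down the sign; an alternative I would try first is a direct FKG-type pairing using the log-supermodular structure induced by TP2, which may yield non-negativity in one stroke. The continuous-state case of Theorem \ref{thm:main}(2) follows the same route with sums replaced by integrals, and \ref{obsinit} is not required when $\obspace_u=\reals$ because no boundary alignment of the filter's range is needed there.
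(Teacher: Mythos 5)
Your setup is faithful to the paper's: the reduction of convex dominance to $\psi(\lambda)\ge 0$, the diagonal form $\sum_x(\level(x)-\lambda)B(x)\belief(x)$ with $B(x)=\prod_t\bcdf_2(z_t|x)-\prod_t\bcdf_1(\bz_t|x)$, and the use of Theorem \ref{thm:agg} (i.e., \ref{obssc} plus \ref{sm}) to get $B\in\lSC$ are all exactly the paper's steps. Your pivot decomposition about the crossing point $x^{\ast}$ is also essentially the paper's FKG step in disguise: the first sum is non-negative precisely because a single-crossing function times a function that changes sign at the same point is pointwise non-negative. So the structure is right up to the point you yourself flag as ``the hard part.''

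That hard part is a genuine gap, and the two resolutions you sketch do not close it. For a general $\lambda$ there is no reason why $\sgn(\level(x^{\ast})-\lambda)$ should agree with $\sgn\bigl(\sum_xB(x)\belief(x)\bigr)$: the crossing point $x^{\ast}$ of $B$ and the level $\lambda$ are not linked by any of \ref{TP2_obs}--\ref{sm}, and an unconditional FKG pairing only yields $\psi(\lambda)\ge\bigl(\sum_x\alpha_xp_x\bigr)\bigl(\sum_x\beta_xp_x\bigr)/\sum_xp_x$, whose right-hand side can be negative. The missing idea is the paper's Lemma \ref{lem:nabla}: $\psi$ is continuously differentiable with
\begin{equation*}
\frac{d\psi(\lambda)}{d\lambda}\;=\;-\sum_{x}B(x)\,\belief(x),
\end{equation*}
and $\psi(\lambda)\to 0$ as $\lambda\to\pm\infty$ (since $\obspace=\reals$). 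Hence one only needs $\psi(\lambda^{\ast})\ge 0$ at stationary points $\lambda^{\ast}$, and there $\sum_xB(x)\belief(x)=0$, so your troublesome second term $(\level(x^{\ast})-\lambda)\sum_xB(x)\belief(x)$ vanishes identically and the first (non-negative) sum is all that remains. In other words, your decomposition plus the observation that a negative minimum of $\psi$ would have to occur at a stationary point finishes the proof; but without the gradient identity and the boundary behavior of $\psi$, the argument as written cannot be completed, because you would be forced to control the sign of the second term for every $\lambda$, which is false in general.
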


{\em Proof}.
   
Since $\obspace = \reals$, clearly from (\ref{eq:firstline}), $\lim_{\lambda \rightarrow -\infty} \psi(\lambda) = \lim_{\lambda \rightarrow \infty} \psi(\lambda) = 0$.
  We establish (\ref{eq:lambdafn})  for $\lambda \in \reals$ by showing\footnote{Since $\psi(\lambda)$ is continuously differentiable (Lemma \ref{lem:nabla}) with $\psi(-\infty) = \psi(\infty) = 0$, clearly if $\psi(\lambda) \geq 0$ at its stationary points (minima), then $\psi(\lambda)\geq 0 $ for all $\lambda \in \reals$.} that $\psi(\lambda^*) \geq 0$
  at all stationary points $\lambda^*$ such that $d\psi(\lambda)/d\lambda = 0$.
Defining $\sgn(x) \in \{-1,0,1\}$ for $x<0, x=0,x>0$, respectively, (\ref{eq:lastline}) yields
   \begin{multline} \hspace{-0.5cm}
\psi(\lambda) = \sum_{i=1}^\statedim\underbrace{ (\level(i) - \lambda)}_{\alpha_i}\,
\underbrace{\sgn\biggl[ \prod_{t=1}^k \bcdf_2(z_t|x=i) - \prod_{t=1}^k \bcdf_1(\bz_t|x=i)  \biggr]}_{\beta_i}\, \\
\times \underbrace{\biggl|   \prod_{t=1}^k \bcdf_2(z_t|x=i) - \prod_{t=1}^k \bcdf_1(\bz_t|x=i) 
  \biggr|\,\,
\belief(i)}_{p_i}
\label{eq:pre_fkg}
\end{multline}
Let us next evaluate the stationary points of  $\psi(\lambda)$ for $\lambda \in (0,1)$.

\begin{lemma} \label{lem:nabla}  $\psi(\lambda)$ defined
  in (\ref{eq:lambdafn}) is continuously differentiable  wrt $\lambda \in (0,1)$ with gradient
\begin{multline}
  \frac{d\psi(\lambda)}{d\lambda} = - \ones^{\p} \biggl[  \bcdf_2(z_k) \bcdf_2(z_{k-1}) \cdots \bcdf_2(z_1)  \\ -   \bcdf_1(\bz_k)  \bcdf_1(\bz_{k-1})  \cdots \bcdf_1(\bz_1) \biggr]  \,\belief  \label{eq:gradientexp}
\end{multline}
\end{lemma}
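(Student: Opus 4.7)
The plan is to differentiate $\psi(\lambda)$ in \eqref{eq:lambdafn} by moving the derivative inside the integral. For each fixed observation path $\obs_{1:k}$, the integrand
\[
[\level^\p \filter(\belief,\obs_{1:k},\action) - \lambda]^+ \, \filterd(\belief,\obs_{1:k},\action)
\]
is Lipschitz in $\lambda$, with $\lambda$-derivative equal to $-\mathbf{1}\{\level^\p \filter(\belief,\obs_{1:k},\action) > \lambda\} \filterd(\belief,\obs_{1:k},\action)$ at almost every $\lambda$. Since $\filterd$ is a probability density in $\obs_{1:k}$ and the indicator is uniformly bounded, dominated convergence justifies interchanging the $\lambda$-derivative with the integral. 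Thus
\[
\frac{d\psi(\lambda)}{d\lambda} \;=\; -\int_{\obspace^{k,\lambda}_2} \filterd(\belief,\obs_{1:k},2)\, d\obs_{1:k} \;+\; \int_{\obspace^{k,\lambda}_1} \filterd(\belief,\obs_{1:k},1)\, d\obs_{1:k},
\]
where $\obspace^{k,\lambda}_\action = \{\obs_{1:k}: \level^\p\filter(\belief,\obs_{1:k},\action)>\lambda\}$.

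The next step is to exploit the product structure of $\obspace^{k,\lambda}_\action$. By assumption \ref{TP2_obs} the filter is MLR-monotone in each $\obs_n$, so $\level^\p\filter$ is coordinate-wise increasing in $(\obs_1,\ldots,\obs_k)$; hence (as noted in \eqref{eq:olam}) the superlevel set factorizes as the product $\{\obs_1>z_1,\ldots,\obs_k>z_k\}$ for $\action=2$ and $\{\obs_1>\bz_1,\ldots,\obs_k>\bz_k\}$ for $\action=1$, where the thresholds $z_i,\bz_i$ depend continuously on $\lambda$. Using $\filterd(\belief,\obs_{1:k},\action)=\ones^\p \oprob_{\obs_k}(\action)\cdots \oprob_{\obs_1}(\action)\belief$ and the Fubini factorization
\[
\int_{z_i}^\infty \oprob_{\obs_i}(\action)\, d\obs_i \;=\; \diag\bigl[\bcdf_\action(z_i\mid 1),\ldots,\bcdf_\action(z_i\mid \statedim)\bigr] \;=\; \bcdf_\action(z_i),
\]
the integrals collapse into the telescoping matrix products that appear in \eqref{eq:gradientexp}, giving exactly the claimed expression.

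Finally, continuous differentiability of $\psi$ on $(0,1)$ reduces to showing that the right-hand side of \eqref{eq:gradientexp} depends continuously on $\lambda$. This follows because the conditional mean $\cm_\action(\obs_{1:k},\belief)$ is a continuous strictly monotone function of each coordinate (under \ref{TP2_obs} and the non-degeneracy of $\oprob_{iy}(\action)$), so the implicitly defined thresholds $z_i(\lambda),\bz_i(\lambda)$ are continuous, and the matrix-valued complementary cdfs $\bcdf_\action(\cdot)$ are continuous in their argument by absolute continuity of the observation distributions (in the finite-observation case the thresholds are piecewise constant and one interprets $\bcdf_\action$ accordingly).

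The main obstacle is really just a bookkeeping one: verifying that the boundary contributions from differentiating over a $\lambda$-dependent region vanish. They do, because on $\partial \obspace^{k,\lambda}_\action$ one has $\level^\p\filter=\lambda$, so the integrand $[\level^\p\filter-\lambda]\filterd$ in \eqref{eq:firstline} is zero there, killing the Leibniz boundary flux and leaving only the explicit $-\partial/\partial\lambda$ term. Everything else is a direct computation.
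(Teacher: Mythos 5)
Your proof is correct and follows essentially the same route as the paper: both reduce $d\psi/d\lambda$ to the (negative) difference of the $\filterd$-measures of the superlevel sets $\obspace^{k,\lambda}_\action$ and then invoke the factorization (\ref{eq:olam}) to collapse these into the products of complementary cdfs, the only cosmetic difference being that you differentiate under the integral via dominated convergence on the Lipschitz map $\lambda\mapsto[\cdot-\lambda]^+$, whereas the paper first applies the layer-cake (``integrated survival function'') identity $[x-\lambda]^+=\int_\lambda^\infty \mathbf{1}\{x>s\}\,ds$ with Fubini and then differentiates the outer integral by the fundamental theorem of calculus. Your observation that the Leibniz boundary flux vanishes because the integrand is zero on $\partial\obspace^{k,\lambda}_\action$ is correct but redundant given your dominated-convergence argument, and your continuity argument matches the paper's (absolute continuity of the observation densities).
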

(Proof at the end of this subsection). \qed

Thus  the stationary points of $\psi(\lambda)$ satisfy
(using the notation  $\beta_i$, $p_i$ defined in (\ref{eq:pre_fkg}))
\begin{multline} \frac{d\psi(\lambda)}{d\lambda} = \ones^\p  \biggl[  \bcdf_2(z_k) \bcdf_2(z_{k-1}) \cdots \bcdf_2(z_1) \\ -   \bcdf_1(\bz_k)  \bcdf_1(\bz_{k-1})  \cdots \bcdf_1(\bz_1)\biggr] \, \belief = 
 \sum_i \beta_i p_i = 0. \label{eq:stationarypt}
\end{multline}
So to prove Theorem \ref{thm:convexdom},  it only remains to show that  $\psi(\lambda)$ is non-negative  at these stationary points.  To establish
this  we use the Fortuin-Kasteleyn-Ginibre (FKG) inequality \cite{FKG71} on (\ref{eq:pre_fkg}). In our framework the FKG inequality\footnote{Proof: Since $\alpha$ and $\beta$ are increasing vectors,  therefore $ (\alpha_i - \alpha_j)
  (\beta_i - \beta_j) \geq  0$ for all $i,j$. This  implies the expectation
$ \sum_i \sum_j (\alpha_i - \alpha_j) (\beta_i - \beta_j) p_i p_j \geq 0$ which immediately   yields the inequality (\ref{eq:fkg}).}
  reads: If $\alpha$, $\beta$ are generic increasing vectors
  and $p$ a generic probability mass function, then
  \beq \label{eq:fkg}
 \sum_i \alpha_i \beta_i p_i  \geq \sum_i \alpha_i p_i\,  \sum_j \beta_j p_j . \eeq
 Clearly in (\ref{eq:pre_fkg}):
 \begin{compactenum}
 \item
   $\alpha_i = \level(i) - \lambda$ is increasing since the elements of $\level$ are increasing by assumption;
   \item $\beta_i$ is increasing by Theorem~\ref{thm:agg} below;
   \item $p_i$ is non-negative and thus proportional to a probability mass function.
   \end{compactenum}
   Also from (\ref{eq:stationarypt}), $\sum_i \beta_i p_i = 0$.  So, applying  FKG inequality to
(\ref{eq:pre_fkg}) yields $\psi(\lambda) =  \sum_i \alpha_i \beta_i p_i \geq 0$.  Thus we have established~(\ref{eq:lambdafn}) for $\obspace = \reals$. \qed

 {\bf Proof of Lemma \ref{lem:nabla}}
  Here we prove Lemma \ref{lem:nabla} that was used to evaluate the gradient of $\psi(\lambda)$ in the proof above.
 For $s\in \reals$, similar to (\ref{eq:olam}) define $\obspace^{k,s}_\action = \{\obs_{1:k}: \level^\p \filter(\belief,\obs_{1:k},\action) > s\}$.   Start with (\ref{eq:lambdafn}), and use the so called ``integrated survival function'' on page 19, \cite{MS02}, namely, integration by parts yields
  $\int_{\obspace_\action^k} |\level^\p \filter(\belief,\obs_{1:k},u) - \lambda|^+ \filterd(\belief,\obs_{1:k},u) \, d\obs_{1:k}= \int_{\lambda}^\infty \int_{ {\obspace_\action^{k,s}}}
\filterd(\belief,\obs_{1:k},u) \, d\obs_{1:k}\, ds$.
Therefore 
$
\psi(\lambda) = \int_\lambda^\infty \ones^\p \bigl[ \int_{{\obspace}^{k,t}_2} \oprob_{\obs_k}(2) \cdots \oprob_{\obs_1} (2) \, d\obs_{1:k} -
\int_{{\obspace}^{k,t}_{1}} \oprob_{\obs_k}(1)\cdots \oprob_{\obs_1}(1)
\, d\obs_{1:k}\bigr] \, \belief\, dt $.
Then evaluating $d\psi(\lambda)/d \lambda$ yields (\ref{eq:gradientexp}).  Finally,  (\ref{eq:gradientexp}) implies $\psi(\lambda) $ is continuously differentiable
because $\sum_{\obspace_u^\lambda}\oprob_y(u)$ is continuous wrt $\lambda$ (since  $\oprob_y(u)$ is absolutely continuous wrt Lebesgue measure by assumption.)
\hfill \qed

\begin{theorem} \label{thm:agg}
  Under \ref{obssc} and \ref{sm},
$$ \beta_i 
  = \sgn[\prod_{t=1}^k \bcdf_2(z_t|x=i) - \prod_{t=1}^k \bcdf_1(\bz_t|x=i) ]$$
  in (\ref{eq:pre_fkg}) is increasing in $i$. (This property was used to  prove
  Theorem \ref{thm:convexdom}).
\end{theorem}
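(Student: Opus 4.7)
The plan is to reduce Theorem~\ref{thm:agg} to the aggregation--of--single--crossing theorem \cite[Proposition~1]{QS12}. By the definition of $\sgn$, saying that $\beta_i$ is increasing in $i$ is exactly the statement \eqref{eq:sma} that
\begin{equation*}
f(i) \;:=\; \prod_{t=1}^k \bcdf_2(z_t\mid i) \;-\; \prod_{t=1}^k \bcdf_1(\bz_t\mid i) \;\in\; \lSC \text{ in } i.
\end{equation*}

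First I would convert the product--difference into a sum--difference. By the ordinal property of single crossing \cite{Ami05}, $f\in\lSC$ iff $h\circ f\in\lSC$ for any strictly increasing $h$ defined on the range of $f$; since $\bcdf_u\in[0,1]$ and $a-b$ has the same sign as $\log a-\log b$ whenever $a,b>0$, on the set where both products are strictly positive the task reduces to showing
\begin{equation*}
g(i) \;:=\; \sum_{t=1}^k \bigl[\log \bcdf_2(z_t\mid i) - \log \bcdf_1(\bz_t\mid i)\bigr] \;\in\; \lSC \text{ in } i.
\end{equation*}
The degenerate case where some $\bcdf_u(z_t\mid i)$ vanishes is harmless: under \ref{TP2_obs} the complementary cdf $\bcdf_u(z_t\mid i)$ is monotone non--decreasing in $i$, so each product's zero--set is a lower tail of $\statespace$ on which $f$ has constant sign, which is compatible with $f\in\lSC$ and can be absorbed by a limiting argument.

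By \ref{obssc} and a second application of the ordinal property, each individual summand $\log\bcdf_2(z_t\mid i)-\log\bcdf_1(\bz_t\mid i)$ already lies in $\lSC$. The main obstacle is that a sum of single crossing functions is in general \emph{not} single crossing, so the above reduction does not conclude the argument by itself. This gap is precisely what \cite[Proposition~1]{QS12} closes: signed ratio monotonicity is necessary and sufficient for every non--negative linear combination of a pair of single--crossing functions to remain single crossing. Assumption \ref{sm} is exactly this signed ratio monotonicity condition, written for the log--differences $\log\bcdf_2(z\mid i)-\log\bcdf_1(\by\mid i)$; its two cases in the hypothesis correspond to the two possible sign patterns of a pair of such log--differences. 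Applying \cite[Proposition~1]{QS12} pairwise and iterating over $t=1,\ldots,k$ aggregates the summands to yield $g\in\lSC$, hence $f\in\lSC$, and therefore $\beta_i$ is increasing in $i$ as claimed.
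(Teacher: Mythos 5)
Your proposal is correct and follows essentially the same route as the paper's proof: reduce the claim to single crossing of the product difference, pass to the sum of log-differences via the ordinal property of single crossing, observe that each summand is single crossing by \ref{obssc}, and invoke the signed-ratio-monotonicity aggregation result of \cite[Proposition 1]{QS12} under \ref{sm}. Your extra care with the degenerate case where a complementary cdf vanishes is a point the paper's proof glosses over (though note it quietly imports \ref{TP2_obs}, which is not among the stated hypotheses of the theorem but does hold in the context where the theorem is applied).
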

\begin{proof}
  Showing that $\beta_i$ is increasing in $i$ is equivalent to showing that
  $\prod_{t=1}^k \bcdf_2(z_t|x=i) - \prod_{t=1}^k \bcdf_1(\bz_t|x=i) $ is a single crossing function in $i$.
By the  ordinal property of single crossing \cite{Ami05},
this in turn is equivalent to showing that
  $\log \prod_{t=1}^k \bcdf_2(z_t|x=i) - \log \prod_{t=1}^k \bcdf_1(\bz_t|x=i) $
  is a single crossing function in $i$ or equivalently,
$ \sum_{t=1}^k [ \log  \bcdf_2(z_t|x=i) - \log \bcdf_1(\bz_t|x=i)]$
is single crossing.

\ref{obssc} implies that each term $\log  \bcdf_2(z_t|x=i) - \log \bcdf_1(\bz_t|x=i)$ is single crossing.
So proving $\beta_i \uparrow i$ boils down to showing that the sum of single crossing functions is single crossing. The paper  \cite{QS12} shows that the signed monotonicity ratio \ref{sm} is a necessary and sufficient condition for this to hold.
\end{proof}

\subsection{Proof of Theorem \ref{thm:main} when  $\statespace$ is finite and
  $\obspace_\action$ has finite support or $\obspace_\action$ is finite}
\label{sec:finiteproof}

The following result is required for establishing our main result when
$\obspace_\action$ is either finite or has finite support; see Case 1 and Case 2 of proof of Theorem \ref{thm:convexdom} below. It is here that \ref{obsinit} is the crucial assumption.

\begin{theorem}[Finite support observation distributions]
  \label{thm:boundary} Suppose $\obspace_\action = [\obsl_\action,\obsu_\action]$, $\action\in \{1,2\}$.
  Assume  \ref{TP2_obs}, \ref{obsinit}. Then $$\{\level^\p \filter(\belief,\obs,1), \obs \in \obspace_1\} \subseteq \{\level^\p \filter(\belief,\obs,2), \obs \in \obspace_{2}\}$$ Thus,
  defining   $\obspace^\lambda_u = \{\obs: \level^\p \filter(\belief,\obs,u) > \lambda\}$ and
$\bar{\obspace}^\lambda_u = \{\obs: \level^\p \filter(\belief,\obs,u) \leq \lambda\}$, it follows that
  $\obspace^\lambda_{2} = \emptyset, \obspace^\lambda_1 \neq \emptyset$  and
$\bar{\obspace}^\lambda_{2} = \emptyset, \bar{\obspace}^\lambda_1 \neq \emptyset$
are impossible,    
\end{theorem}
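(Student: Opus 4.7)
My approach is to reduce the set inclusion to two scalar endpoint inequalities and then extract each one directly from the corresponding half of \ref{obsinit} via an MLR argument. First, \ref{TP2_obs} together with the standard filtering fact (see \cite{Kri16}) gives that $\filter(\belief,\obs,\action)$ is MLR increasing in $\obs$; since $\level$ has increasing components, $\obs \mapsto \level^\p \filter(\belief,\obs,\action)$ is therefore a monotone non-decreasing scalar function on $\obspace_\action$. When $\obspace_\action = [\obsl_\action,\obsu_\action]$, absolute continuity of $\oprob_{\state,\obs}(\action)$ in $\obs$ makes this function continuous, so its image is the closed interval $I_\action = [\level^\p \filter(\belief,\obsl_\action,\action),\; \level^\p \filter(\belief,\obsu_\action,\action)]$. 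The claim $I_1 \subseteq I_2$ is then equivalent to the two boundary inequalities
\[
\level^\p \filter(\belief,\obsl_2,2) \;\le\; \level^\p \filter(\belief,\obsl_1,1), \qquad \level^\p \filter(\belief,\obsu_1,1) \;\le\; \level^\p \filter(\belief,\obsu_2,2).
\]

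Next I plan to read \ref{obsinit} as an MLR comparison of the two posteriors at their respective boundary observations. Rearranging the first line of \ref{obsinit} gives $\oprob_{\bstate,\obsl_1}(1)/\oprob_{\state,\obsl_1}(1) \ge \oprob_{\bstate,\obsl_2}(2)/\oprob_{\state,\obsl_2}(2)$ for $\bstate \ge \state$. Since $\filter(\belief,\obsl_\action,\action)(i)$ is proportional in $i$ to $\oprob_{i,\obsl_\action}(\action)\,\belief(i)$, this ratio inequality is precisely $\filter(\belief,\obsl_1,1) \gr \filter(\belief,\obsl_2,2)$. MLR dominance implies first-order dominance, and taking the expectation of the increasing vector $\level$ yields the lower-endpoint inequality. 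A symmetric manipulation of the second line of \ref{obsinit} gives $\filter(\belief,\obsu_1,1) \lr \filter(\belief,\obsu_2,2)$ and hence the upper-endpoint inequality. The same computation works verbatim in the finite-alphabet case with $\obsl_\action,\obsu_\action$ replaced by $1,\obsdim_\action$, which handles the corollary below.

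Finally, the ``thus'' corollary drops out of $I_1 \subseteq I_2$: monotonicity in $\obs$ gives $\obspace^\lambda_\action = \emptyset$ iff $\lambda \ge \max I_\action$, while $\obspace^\lambda_\action \neq \emptyset$ requires $\lambda < \max I_\action$; the upper-endpoint inequality $\max I_1 \le \max I_2$ then forbids $\obspace^\lambda_2 = \emptyset$ with $\obspace^\lambda_1 \neq \emptyset$, and the lower-endpoint inequality symmetrically rules out $\bar{\obspace}^\lambda_2 = \emptyset$ with $\bar{\obspace}^\lambda_1 \neq \emptyset$. The only delicate bookkeeping issue I anticipate is keeping the direction of the MLR comparison straight at each endpoint, since \ref{obsinit} flips its inequality between $\obsl$ and $\obsu$; beyond that, everything is a routine application of the standard TP2/MLR toolkit.
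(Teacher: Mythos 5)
Your proposal is correct and follows essentially the same route as the paper: both reduce the set inclusion, via MLR-monotonicity of $\level^\p\filter(\belief,\obs,\action)$ in $\obs$, to the two boundary inequalities $\level^\p \filter(\belief,\obsl_2,2) \le \level^\p \filter(\belief,\obsl_1,1)$ and $\level^\p \filter(\belief,\obsu_1,1) \le \level^\p \filter(\belief,\obsu_2,2)$, and then extract these from \ref{obsinit}. The only difference is cosmetic: you read \ref{obsinit} as an MLR comparison of the boundary posteriors and invoke MLR $\Rightarrow$ first-order dominance, whereas the paper cross-multiplies and checks the sign of the antisymmetrized double sum directly --- these are the same computation.
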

\begin{proof}
Since $\filter(\belief,\obs,u)$ is MLR increasing wrt $\obs$  under \ref{TP2_obs}, it suffices to show that
\beq \label{eq:boundaryineq}
\level^\p \filter(\belief,\obsl_{2},2) \leq \level^\p \filter(\belief,\obsl_1,1),
\quad \text{ and } \quad 
\level^\p \filter(\belief,\obsu_{2},2) \geq \level^\p \filter(\belief,\obsu_1,1) \eeq
The first inequality in (\ref{eq:boundaryineq}) is equivalent to
$ \sum_{i=1}^\statedim \sum_{j=1}^\statedim \level_i \big(\oprob_{i1}(2) \oprob_{j1}(1) -
\oprob_{i1}(1) \oprob_{j1}(2)\big) \belief(i) \belief(j) \leq 0$. So
 \ref{obsinit} is a sufficient condition for the inequality  to hold.
A similar proof holds for the second inequality  in (\ref{eq:boundaryineq}). \end{proof}

{\bf Case 1. $\obspace_\action=[\obsl_\action,\obsu_\action]$}:
Next we prove~(\ref{eq:lambdafn}) for  the finite support case where 
$\obspace_\action$ is the interval $[\obsl_\action,\obsu_\action]$.
The  key difference compared to the case $\obspace = \reals$ is due to the possible discontinuity of the conditional probability densities $\oprob_{iy}(\action)$ at the end points $\obsl_\action$ and $\obsu_\action$.
Without  appropriate assumptions,
$\psi(\lambda)$ defined  in
(\ref{eq:lambdafn}) can become negative in two ways:
(i) If  $\obspace^\lambda_{2} = \emptyset$ and $ \obspace^\lambda_1$
is non-empty
(ii)  $\bar{\obspace}^\lambda_{2} = \emptyset$ and $\bar{ \obspace}^\lambda_1 $
is non-empty.
Assumption  \ref{obsinit}, see  Theorem \ref{thm:boundary}, ensures that  these two cases do not occur.

To prove $\psi(\lambda) \geq 0$ for $\lambda \in [0,1]$, boundary conditions need to be handled. Define $\lami,\lamii,\lamiii,\lamiv$ as
  \begin{equation} \begin{split}
     \lami &= \sup\{\lambda: \bar{\obspace}_1^\lambda = \emptyset,
     \bar{\obspace}_{2}^\lambda = \emptyset\} \\
\lamii &= \sup\{\lambda: \bar{\obspace}_1^\lambda = \emptyset, \bar{\obspace}_{2}^\lambda \neq \emptyset\}; \\
\lamiii &= \inf\{\lambda: {\obspace}_1^\lambda \neq \emptyset, {\obspace}_{2}^\lambda = \emptyset\}; \\
\lamiv &= \inf\{\lambda: {\obspace}_1^\lambda = \emptyset, {\obspace}_{2}^\lambda = \emptyset\}.
\end{split}
\end{equation}
Clearly,  $\lami \leq \lamii \leq \lamiii \leq \lamiv $ 
since $\level^\p \filter(\belief,\obs,\action)$ is increasing in $\obs$ by \ref{TP2_obs} and $\obspace_u^\lambda \subseteq \obspace_{u}^{\bar{\lambda}}$ for $\lambda < \bar{\lambda}$.  We now consider $\lambda\in [0,1]$ split
into the following 5 sub-cases and show that $\psi(\lambda) \geq 0$ for each sub-case:
\begin{equation}
 \begin{split}
\text{ Case 1a. } &
\lambda \in [0,\lami] \iff  \bar{\obspace}_1^\lambda = \emptyset, \bar{\obspace}_{2}^\lambda = \emptyset \\
\text{ Case 1b. } &
\lambda \in (\lami,\lamii] \iff  \bar{\obspace}_1^\lambda = \emptyset, \bar{\obspace}_{2}^\lambda \neq \emptyset \\
\text{ Case 1c. } &
\lambda \in (\lamii,\lamiii] \iff  \bar{\obspace}_1^\lambda \neq
\emptyset, \bar{\obspace}_{2}^\lambda \neq \emptyset \\
\text{ Case 1d. } &
\lambda \in (\lamiii,\lamiv] \iff  {\obspace}_1^\lambda = \emptyset, {\obspace}_{2}^\lambda \neq \emptyset  \\
   \text{ Case 1e. }  &
   \lambda \in (\lamiv,1] \iff  {\obspace}_1^\lambda = \emptyset, {\obspace}_{2}^\lambda =\emptyset.
 \end{split}
 \label{eq:subcases}
\end{equation}
Note that
 (\ref{eq:firstline}) implies $\psi(\lambda) = 0$ for Case 1a and Case 1e.
For Case 1b, re-expressing $\bar{\obspace}_{2}^\lambda = \{\obs:  -(\level - \lambda \ones)^\p
\oprob_{y}({2})  \tp^\p \belief > 0 \} $,
(\ref{eq:lastline})
implies that $\psi(\lambda) \geq 0$.
Equivalently, (\ref{eq:gradientexp}) implies $d\psi(\lambda)/d\lambda > 0$;   since $\psi(\lami) = 0$, therefore
$\psi(\lambda) \geq 0$ for $\lambda \in (\lami,\lamii]$.
For Case 1d,
it follows immediately from (\ref{eq:lambdafn}) that $\psi(\lambda) \geq 0$.
Equivalently, (\ref{eq:gradientexp}) implies $d\psi(\lambda)/d\lambda < 0$;
since $\psi(\lamiv)=0$, therefore $\psi(\lambda) \geq 0$ for
$\lambda \in (\lamiii,\lamiv)$.\\
Finally, for Case~1c, since both $\obspace_1^\lambda$ and $\obspace_{2}^\lambda$ are non-empty, the single crossing
condition \ref{obssc} kicks in and an identical 
argument as the case $\obspace = \reals$  applies.
Indeed,  $\psi(\lamii) \geq 0$, $\psi(\lamiii) \geq 0$, and $\psi(\lambda)$ is differentiable
for $\lambda \in  (\lamii,\lamiii)$; so  
$\psi(\lambda) \geq 0$ for
$\lambda \in (\lamii, \lamiii)$ because 
$\psi(\lambda^*) \geq 0$ at each stationary point $\lambda^* \in (\lamii,\lamiii)$.

{\em Remark}:
The case $\obspace = \reals$ (Theorem  \ref{thm:convexdom})  can be viewed as a special instance of (\ref{eq:subcases})
with $\lami =\lamii= 0$, and $\lamiii = \lamiv = 1$ (but to enhance clarity we described it before Case 1). The main point when $\obspace = \reals$  is that $\obspace^\lambda_1, \obspace^\lambda_{2} 
$ are never empty for $\lambda \in (0,1)$ and therefore only  Case 1c occurs.

{\bf Case 2. $\obspace_\action$ is finite}:  Finally, we prove~(\ref{eq:lambdafn}) for  the case 
  $\obspace_\action = \{1,2,\ldots,\obsdim_\action\}$.
  Construct  the piecewise constant probability density function
  $O_{io}(\action) = \oprob_{iy}(\action)$ for $o \in [y, y+1)$ and $y \in \{1,2,\ldots,\obsdim_\action\}$.
It is easily verified that  $\filter(\belief,o,\action)
= \filter(\belief,\obs,\action)$, $\filterd(\belief,o,\action) =\filterd(\belief,\obs,\action)$,  and the value function and optimal policy remain unchanged. Then the above proof for Case 1 (finite support) applies. \hfill \qed

{\em Remark}. To emphasize the importance of sufficient condition \ref{obsinit},
the following examples show that \ref{obsinit} is in some sense  necessary;
when 
it fails to hold, then $\psi(\lambda) < 0$ for some interval of
$\lambda$ and convex dominance does not hold.
Consider
 $\statedim = 3, \obsdim = 3$, $\belief = \begin{bmatrix} 0.2 &  0.3 & 0.5 \end{bmatrix}^\p $, $\level=[0,0,1]^\p$.\\
Example 1.   
      $\tp = \begin{bmatrix} 0.9 & 0.1 & 0.1 \\ 0.1 & 0.8 & 0.1 \\ 0 & 0.1 & 0.9 
   \end{bmatrix}$, \\ $\oprob(1)=
   \begin{bmatrix} 0.7 &0.2 &0.1 \\  0.1& 0.3 &0.6 \\ 0& 0.1 &0.9 \end{bmatrix}$,
   $\oprob(2) = \begin{bmatrix} 0.8 & 0.1 & 0.1 \\ 0.2 & 0.2 & 0.6\\ 0.05 & 0.05 & 0.9 \end{bmatrix}$.\\
   Then  $\phi(\lambda) < 0 $ for $\lambda \in (0,0.26]$.
   This example violates Case 1b.
\\
   Example 2.      $\tp = \begin{bmatrix} 0.9 & 0.1 & 0.1 \\ 0.1 & 0.8 & 0.1 \\ 0 & 0.1 & 0.9 
   \end{bmatrix}$,  \\
$\oprob(1) = \begin{bmatrix} 0.8 &  0.2 &  0 \\  0.1 &  0.8 &  0.1 \\  0 & 0.2 & 0.8
\end{bmatrix}
$,
$\oprob(2)=\begin{bmatrix} 0.8 & 0.1 & 0.1\\  0.1 & 0.3 & 0.6\\ 0 & 0.1 & 0.9
\end{bmatrix}
$.\\  Then $\psi(\lambda)< 0$ for $\lambda \in [0.25,0.93]$. This example violates Case 1d.

\subsection{Proof of Theorem \ref{thm:main} when $\statespace = \reals$ and $\obspace = \reals$} \label{sec:cont}
In complete analogy to (\ref{eq:lambdafn}) convex dominance holds if for $\lambda \in \reals$,
\begin{multline}   \psi(\lambda) \ole \int_{\obs \in \obspace^k_{2}} [
\cm_2(\obs_{1:k}, \belief_0) - 
 \lambda]^+ \filterd(\belief,\obs_{1:k},2) \\
 - \int_{\obs \in \obspace^k_1}[ \cm_1(\obs_{1:k}, \belief_0)- \lambda]^+ \filterd(\belief,\obs_{1:k},1) \geq 0. \label{eq:lambdafn2}
\end{multline}
where 
 $$ \cm_\action(\obs_{1:k}, \belief_0) = \langle \state, \filter(\belief_0, \obs_{1:k},\action) \rangle  \ole \int_\statespace \state \,\filter(\belief_0, \obs_{1:k},\action)(x)\, dx$$

 Defining $   \obspace^{k,\lambda}_u = \{\obs_{1:k}: \langle \state, \filter(\belief,\obs_{1:k},
 \action )\rangle > \lambda\}$, \begin{align*}
\psi(\lambda)    & = \int_{\obspace^{k,\lambda}_{2}}  [ \langle \state, \filter(\belief,\obs_{1:k},{2})\rangle - \lambda]\, \filterd(\belief,\obs_{1:k},{2})  \\ 
               & \qquad -     \int_{\obspace^{k,\lambda}_1} [ \langle \state,  \filter(\belief,\obs_{1:k},1) \rangle - \lambda] \, \filterd(\belief,\obs_{1:k},1)
                                  \\
                 &=
                   \bigl\langle  (x -\lambda \ones),  \bigl[ \bcdf_2(z_k) \bcdf_2(z_{k-1})  \cdots \bcdf_2(z_1)  \\  & \qquad \qquad \qquad -   \bcdf_1(\bz_k)  \bcdf_1(\bz_{k-1})  \cdots \bcdf_1(\bz_1)
                   \bigr]\,\belief \bigr \rangle  
\end{align*}
for some $z_1,\ldots,z_k \in \reals$ and $\bz_1,\ldots,\bz_k \in \reals$ which depend on $\lambda$.

In complete analogy to Lemma \ref{lem:nabla}, 
$\psi(\lambda) = 0 $ for $\lambda \rightarrow -\infty$  and $\lambda = \infty$
and  $\psi(\lambda) $ is continuously differentiable  wrt $\lambda \in \reals$ with gradient
  \begin{multline}  \frac{d\psi(\lambda)}{d\lambda} = -
\bigl\langle  \ones,  \bigl[ \bcdf_2(z_k) \bcdf_2(z_{k-1})  \cdots \bcdf_2(z_1) \\ -   \bcdf_1(\bz_k)  \bcdf_1(\bz_{k-1})  \cdots \bcdf_1(\bz_1) 
                   \bigr]\,\belief \bigr \rangle    \label{eq:gradientexp2} 
               \end{multline}
               The remainder of the proof is similar to that of Theorem \ref{thm:convexdom}.

\bibliographystyle{IEEEtran}

\begin{IEEEbiographynophoto}{Vikram Krishnamurthy}
(F'05) received the Ph.D. degree from the Australian
National University
in 1992. He is currently a professor in the
School of Electrical \& Computer Engineering,
Cornell University. From 2002-2016 he
was a Professor and Canada Research Chair
at the University of British Columbia, Canada.
His research interests include statistical signal
processing  and
stochastic control in social networks and adaptive sensing. He served
as Distinguished Lecturer for the IEEE Signal Processing Society and
Editor-in-Chief of the IEEE Journal on Selected Topics in Signal Processing.
In 2013, he was awarded an Honorary Doctorate from KTH
(Royal Institute of Technology), Sweden. He is author of the books
{\em Partially Observed Markov Decision Processes} and
{\em Dynamics of Engineered Artificial Membranes and Biosensors} published by Cambridge
University Press in 2016 and 2018, respectively.
\end{IEEEbiographynophoto}

\end{document}